\newtheorem{theorem}{Theorem}
\DeclareMathOperator{\tr}{tr}
\newtheorem{corollary}[theorem]{Corollary}
\newtheorem{lemma}[theorem]{Lemma}
\theoremstyle{definition}
\newtheorem{definition}{Definition}[section]
\theoremstyle{remark}
\title{Reflected entropy in random tensor networks}
\author[1]{Chris Akers,}
\author[2]{Thomas Faulkner,}
\author[2]{Simon Lin,}
\author[3]{Pratik Rath}
\affiliation[1]{Center for Theoretical Physics,\\
Massachusetts Institute of Technology, Cambridge, MA 02139, USA}
\affiliation[2]{Department of Physics, University of Illinois,\\ 1110 W. Green St., Urbana, IL 61801-3080, USA}
\affiliation[3]{Department of Physics, University of California, Santa Barbara, CA 93106, USA}
\emailAdd{cakers@mit.edu}
\emailAdd{tomf@illinois.edu}
\emailAdd{shanlin3@illinois.edu}
\emailAdd{rath@ucsb.edu}
\abstract{In holographic theories, the reflected entropy has been shown to be dual to the area of the entanglement wedge cross section. We study the same problem in random tensor networks demonstrating an equivalent duality. For a single random tensor we analyze the important non-perturbative effects that smooth out the discontinuity in the reflected entropy across the Page phase transition. By summing over all such effects, we obtain the reflected entanglement spectrum analytically, which agrees well with numerical studies. This motivates a prescription for the analytic continuation required in computing the reflected entropy and its R\'enyi generalization which resolves an order of limits issue previously identified in the literature. We apply this prescription to hyperbolic tensor networks and find answers consistent with holographic expectations. In particular, the random tensor network has the same non-trivial tripartite entanglement structure expected from holographic states.  We furthermore show that the reflected R\'enyi spectrum is not flat, in sharp contrast to the usual R\'enyi spectrum of these networks. We argue that  the various distinct contributions to the reflected entanglement spectrum can be organized into approximate superselection sectors.  We interpret this as resulting from an effective description of the canonically purified state as a superposition of distinct tensor network states. Each network is constructed by doubling and gluing various candidate entanglement wedges of the original network. The superselection sectors are labelled by the different cross-sectional areas of these candidate entanglement wedges.}
\begin{document}
\maketitle

\section{Introduction}\label{sec:intro}

Understanding the entanglement structure of holographic states has played a significant role in demystifying quantum gravity and the emergence of spacetime. In particular, quantum error correction has served as a useful paradigm for explaining various features of quantum gravity including the holographic principle \cite{tHooft:1993dmi,Susskind:1994vu}, the Ryu-Takayanagi (RT) formula \cite{Ryu:2006bv,Ryu:2006ef, Hubeny:2007xt,Engelhardt:2014gca} and subregion duality \cite{Almheiri:2014lwa,Jafferis:2015del, Dong:2016eik}. In particular, random tensor networks \cite{Hayden:2016cfa}, which are understood to model fixed-area states in holography \cite{Akers:2018fow,Dong:2018seb,Dong:2019piw}, have been particularly effective in capturing non-perturbative gravitational effects.

While calculations of the von Neumann entropy have led to most of the insight into holography, it is useful to consider other quantum information (QI) quantities in order to give a more complete picture of the emergence of spacetime from entanglement. For example, von Neumann entropy cannot discriminate between different kinds of multipartite entanglement. In this paper, we will focus on a quantity called the reflected entropy, $S_R(A:B)$ \cite{Dutta:2019gen}, defined for a mixed state $\rho_{AB}$ on two parties $A,B$. $S_R(A:B)$ is simply defined using a canonical purification $\left| \sqrt{\rho_{AB}} \right>$ interpreted as a pure state on the Hilbert space of linear matrices, a familiar operation in the context of going from the thermal density matrix to the thermofield double state \cite{Maldacena:2001kr}. This Hilbert space hosts a doubled/reflected copy of the $A,B$ parties and $S_R(A:B)$ is defined as 
\begin{equation}
	S_R(A:B) \equiv S_{vN}(\rho_{AA^\star})_{\ket{\sqrt{\rho_{AB}}}}, 
\end{equation}
the von Neumann entropy of the canonical purification reduced to $A A^\star$ where $A^\star$ is the reflection of $A$. We will review the definition in more detail and recall some basic properties of $S_R(A:B)$ in Section~\ref{sec:setup}. While an operational interpretation of $S_R(A:B)$ is lacking, and its broader status as a quantum information measure is not yet clear (see Ref.~\cite{Hayden:2021gno} for some recent progress), reflected entropy has the advantage that it is computable using various analytic and numerical techniques. 

\begin{figure}
	\centering 
	\begin{subfigure}
		[b]{0.4
		\textwidth} 
		\includegraphics[width=
		\textwidth]{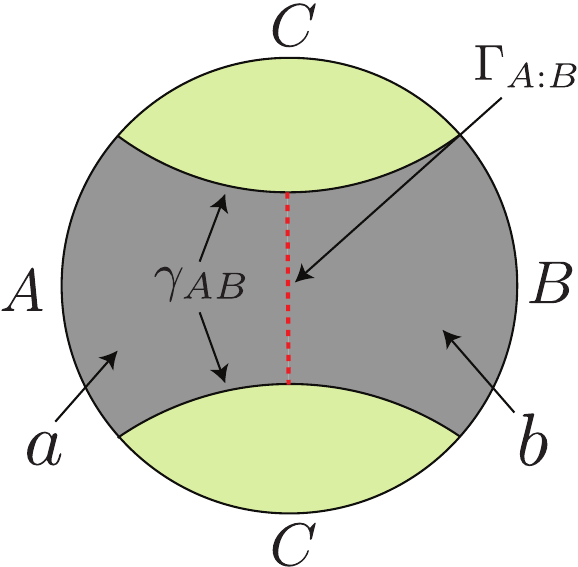} \subcaption{} 
	\end{subfigure}
	\hspace{5mm} 
	\begin{subfigure}
		[b]{0.4
		\textwidth} 
		\includegraphics[width=
		\textwidth]{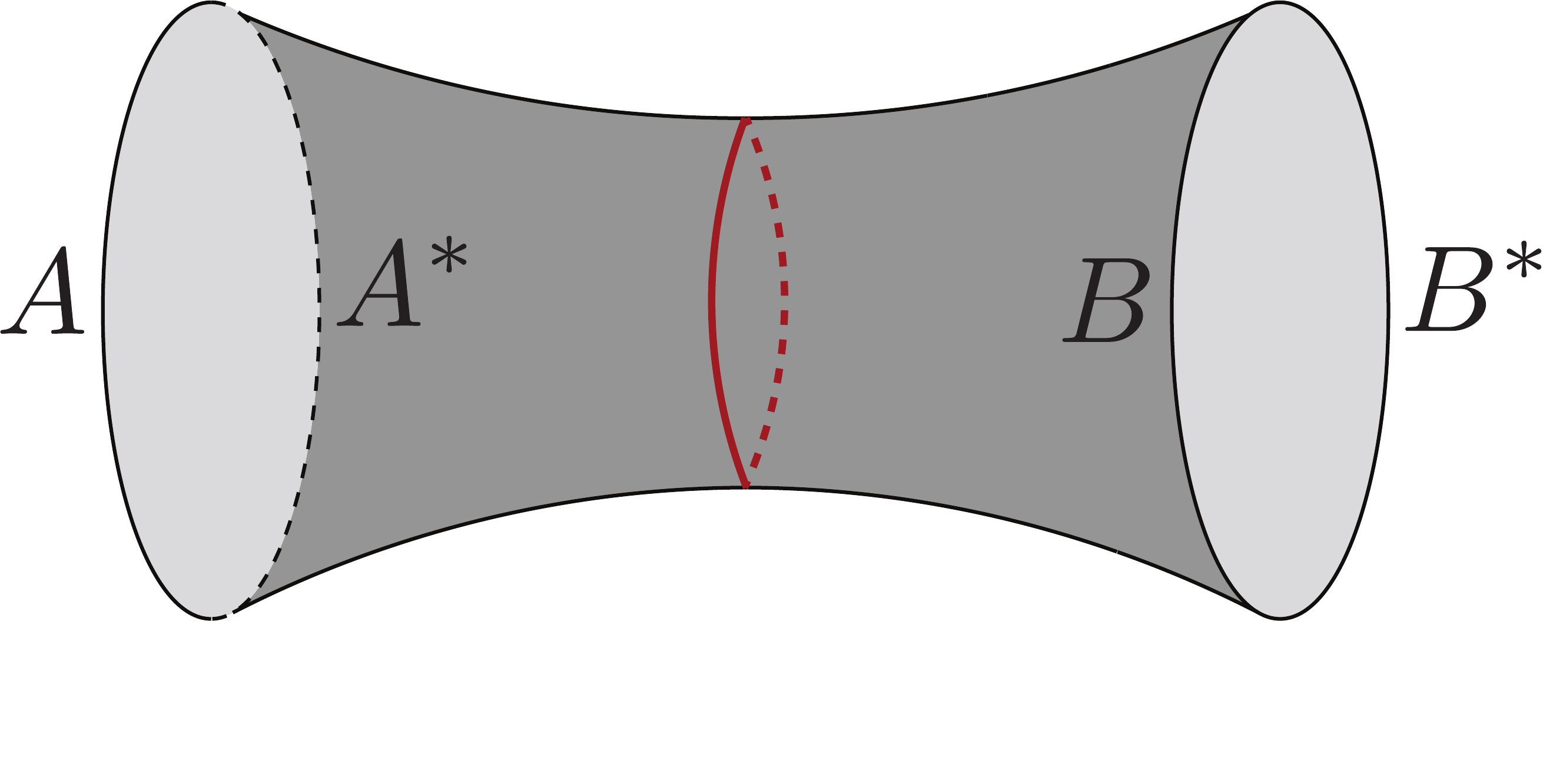} \subcaption{} 
	\end{subfigure}
	\caption{(a) A spatial slice of AdS with $A$ and $B$ chosen to be two intervals. The figure depicts the entanglement wedge of $AB$ (gray), the entanglement wedge of $C$ (green), the RT surface $\gamma_{AB}$ and the entanglement wedge cross section $\Gamma_{A:B}$. (b) A spatial slice of the proposed holographic dual to the canonically purified state $\ket{\sqrt{\rho_{AB}}}$. The RT surface for $A\cup A^*$ is given by a doubled copy of the entanglement wedge cross section.} 
\label{fig:EW} \end{figure}

Further, reflected entropy is of particular interest in holography, due to its relationship to the so-called \emph{entanglement wedge cross-section} \cite{Takayanagi:2017knl}, the minimal area surface $\Gamma_{A:B}$ in the dual bulk geometry that divides the entanglement wedge of $AB$ into two parts $a$ and $b$, one bounded by $A \cup \Gamma_{A:B}$, the other by $B \cup \Gamma_{A:B}$ as shown in \figref{fig:EW}. Using similar arguments to Lewkowycz-Maldacena (LM)\cite{Lewkowycz:2013nqa}, including a key assumption about the possible dominating saddles for the Euclidean gravitational path integral, it was shown in Ref.~\cite{Dutta:2019gen} that: 
\begin{equation}\label{eq:sr_equals_2ew} 
	S_R(A:B) = 2 EW(A:B) \equiv \min_{\Gamma_{A:B}} 2 \frac{\mathrm{Area}(\Gamma_{A:B})}{4G_N}~. 
\end{equation}
up to $\mathcal{O}(1)$ corrections in the small $G_N$ expansion, where $G_N$ is Newton's constant. Here, we have restricted to the time-independent setting although there is a simple generalization to time-dependent situations using a maximin formula \cite{Wall:2012uf,Akers:2019lzs}.

Why should one expect \Eqref{eq:sr_equals_2ew} to be true? An intuitive proof, also discussed in Ref.~\cite{Dutta:2019gen}, is that $\ket{\sqrt{\rho_{AB}}}$ is dual to two copies of the original entanglement wedge, glued along the RT surface as depicted in \figref{fig:EW}. The entropy $S_R(A:B)$ then equals $2 EW(A:B)$ by the RT formula and symmetry. 

The relationship \Eqref{eq:sr_equals_2ew} provides another example like the RT formula, relating a hard-to-compute information theoretic quantity to the area of a bulk surface. 
As such, it has already given new insight into holographic CFTs, for example demonstrating the large amount of tripartite entanglement required \cite{Akers:2019gcv}, a feat that could not be accomplished with the RT formula alone \cite{Cui:2018dyq}. Also because of \Eqref{eq:sr_equals_2ew}, the reflected entropy provides a novel measure of the connectedness of the entanglement wedge. If the entanglement wedge jumps discontinuously as a function of some parameter, the reflected entropy jumps accordingly by an amount $\mathcal{O}(1/G_N)$. This jump is in sharp contrast with the large-$N$ phase transitions of entanglement entropy that arise in AdS/CFT, which are continuous.

In this paper, we begin to explore the non-perturbative physics that describes such a discontinuity using toy models. That is, we compute the reflected entropy near such a phase transition, paying special attention to how non-perturbative effects smooth out the transition. In our random tensor network models, we will find a sharp, continuous transition analogous to the Page curve \cite{Page:1993df}, which reproduces the expected discontinuous transition in the limit that the bond dimension goes to infinity.

At the same time, we aim to test the assumptions that led to the relationship \Eqref{eq:sr_equals_2ew}. Such assumptions were recently shown to be delicate, even in the case of the RT formula, and leading order corrections to the RT formula were found in certain setups \cite{Akers:2020pmf}. Here we will find significant non-perturbative effects away from the EW phase transition, most notably when computing the $(m,n)$-R\'enyi reflected entropy (a generalization of reflected entropy which we define below in \secref{sec:setup}), where new saddles are important beyond the previously discussed, naive saddles. 
We will show that these effects go away as $n \rightarrow 1$, giving back \Eqref{eq:sr_equals_2ew}. 

We will start by working with a simple model of a single random tripartite tensor, which turns out to be perfectly tractable analytically. It models a three-boundary wormhole with horizon areas fixed to a small window \cite{Akers:2018fow,Dong:2018seb,Dong:2019piw} and a large interior. This results in an entanglement wedge (EW) phase transition as we tune the bond dimensions (the horizon areas). The ``Page curve'' for such a transition has the characteristic jump in reflected entropy shown in \figref{fig:analytic_phase_transition}. We then work with hyperbolic tensor networks, where the results are harder to control. Nevertheless, we extrapolate lessons from the simpler models (since they show many similarities with the larger networks) to compute the reflected entropy in these models as well.
\begin{figure}
			[h] \centering 
        	\includegraphics[width=0.95 \textwidth]{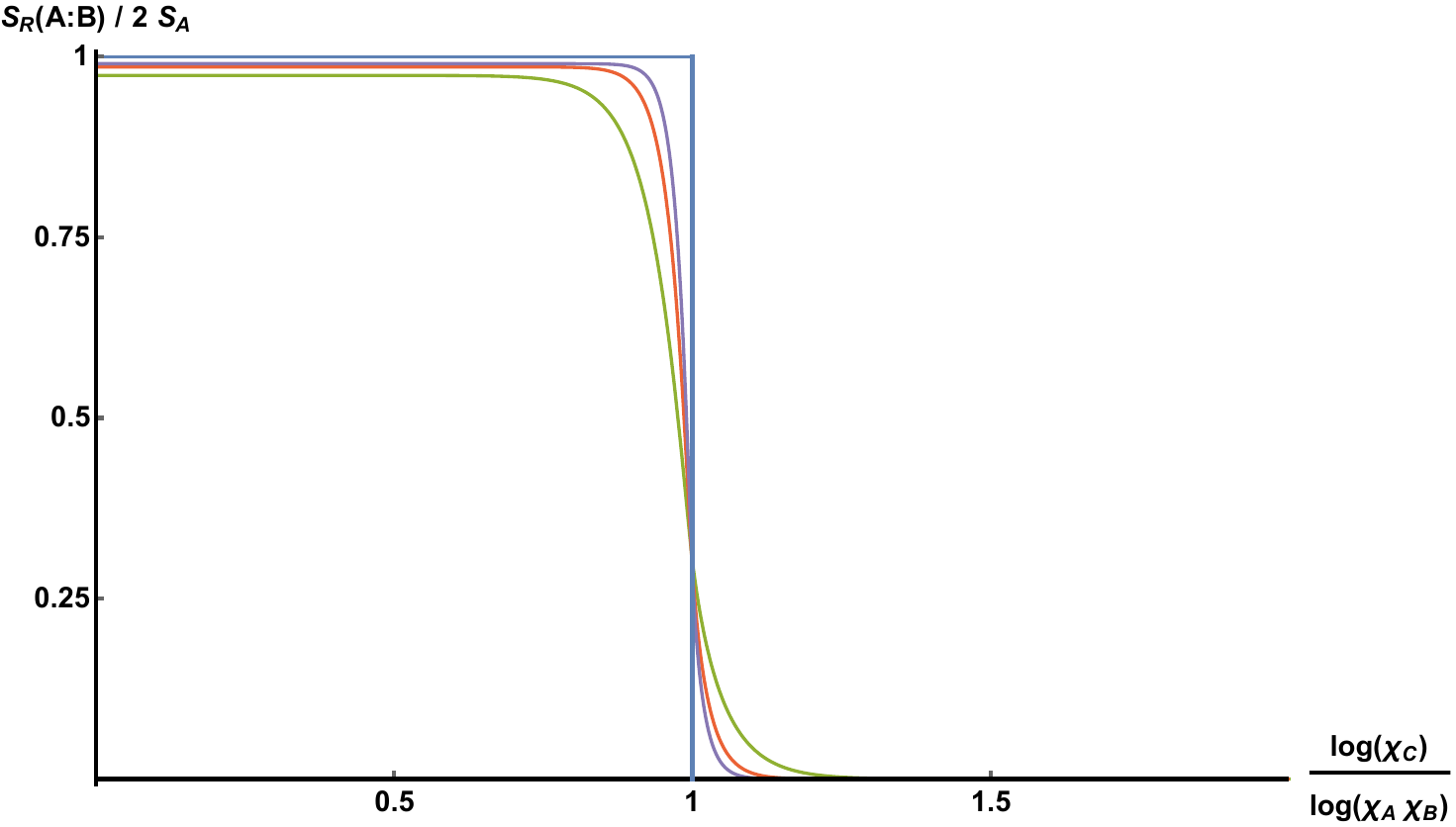}
			\caption{``Page curve'' of the reflected entropy, for a single tripartite random tensor with bond dimensions $\chi_A, \chi_B, \chi_C$. The blue curve corresponds to the infinite bond dimension limit, $\chi \to \infty$. The other curves correspond to large but finite bond dimension, correctly limiting to the blue curve as the bond dimension increases.} 
\label{fig:analytic_phase_transition} \end{figure}

In this paper we will also study the reflected entanglement spectrum, defined as the spectrum of $\rho_{AA^\star}$ as well as a two parameter family of generalizations called the $(m,n)$-R\'enyi reflected entropy: 
\begin{equation}
	S_R^{(m,n)}(A:B) \equiv - \frac{1}{n-1} \ln \tr(\rho^{(m)}_{AA^\star})^n, 
\end{equation}
where $\rho^{(m)}_{AA^\star}$ is the reduced density matrix on $A A^*$ associated to the state $\big| \rho_{AB}^{m/2} \big>$ after normalization. The $(m,n)$-R\'enyi reflected entropy for integer $n \geq 1$ and even integer $m \geq 2$ is computed at intermediate steps in the replica trick, and can also be used to extract the reflected entropy by analytic continuation: 
\begin{equation}\label{ordermn} 
	S_R(A:B) = \lim_{m \rightarrow 1} \lim_{n \rightarrow 1} S_R^{(m,n)}(A:B) 
\end{equation}
These detailed quantities give further insight into our results. 

\begin{figure}
			[h]
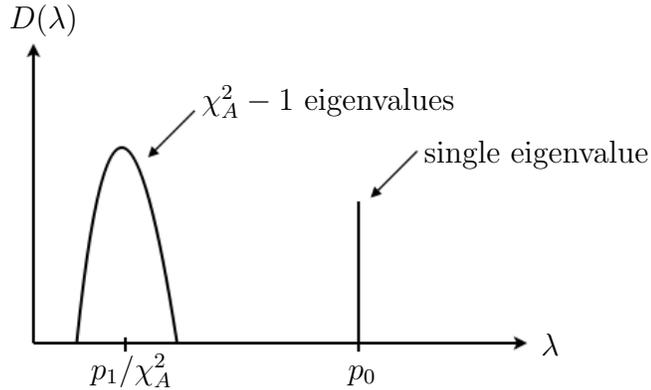
 \centering 
			\begin{overpic}
				[width=0.5\linewidth]{spectrum} \put(4,62){$D(\lambda)$} \put(95,6){$\lambda$} \put(62,2){$p_0$} \put(18,2){$p_1/\chi^2_A$} \put(37,48){$\chi^2_A-1$ eigenvalues} \put(75,39){single eigenvalue} 
			\end{overpic}
			\caption{The reflected spectrum in the single tensor model has two superselection sectors corresponding to the disconnected and connected phase respectively.} 
		\label{fig:spectrum1} \end{figure}

We now summarize some of the main points in the paper: 
\begin{itemize}
		\item For the single tensor model, we prove some general results for the $(m,n)$-R\'enyi reflected entropy, including various continuity arguments as a function of $m$ which establishes independence of $m$ away from the phase transition. These will serve as checks on our later computations, and help motivate the prescription we use to analytically continue in $(m,n)$.  We then describe the phase diagram of dominant saddle points at integer $m/2,n$. See \secref{sec:ineq} and \secref{sec:phase_nm}.
		
	\item In the single tensor model, we use a Schwinger-Dyson re-summation method similar to \cite{Penington:2019kki,Shapourian:2020mkc} to compute the reflected spectrum. We make an assumption about the class of diagrams that dominates, and prove this to be correct in various limits. The reflected spectrum thus obtained has two main components: A single eigenvalue representing the disconnected EW phase, and a broad peak with $\exp(2 EW(A:B))$ eigenvalues representing the connected EW phase. These two contributions are always present irrespective of the relevant phase, but their respective weights move around as we tune the bond dimensions in the tensor network. They make dominant contributions to the reflected entropy in their respective phases. We show a cartoon of the reflected spectrum for the single tensor model in \figref{fig:spectrum1}. See \secref{sec:EE} and \secref{sec:SD}.
	
	\item Based on the above we argue that near the phase transition both sets of eigenvalues contribute. In the single tensor, the reflected entropy takes the form: 
	\begin{equation}
		S_R(A:B) = -p_0\ln p_0 -p_1\ln p_1 + p_1 \left( \ln \chi^2_A - \frac{\chi^2_A}{2\chi_B^2} \right) + O(\chi_{A,B}^{-2}) 
	\end{equation}
	where $\chi_X \gg 1$ are the bond dimensions for $X=A,B,C$ and the classical probabilities are: 
	\begin{equation}\label{eq:probRTN}
		p_0 = \frac{\chi_C}{\chi_A\chi_B} \;_2F_1 (1/2,-1/2;2; \frac{\chi_C}{\chi_A \chi_B})^2\,, \qquad p_1 + p_0 = 1,
	\end{equation}
	We argue that these results can be understood as arising from an effective description of the \emph{entanglement structure} for the canonically purified state as a superposition of tensor networks states. Here, one state is described by a doubled random tensor network, representing the connected phase, while the other tensor network is trivially factorized: 
	\begin{equation}\label{eq:CP_effective_1TN} 
		\includegraphics[width=0.8
		\textwidth]{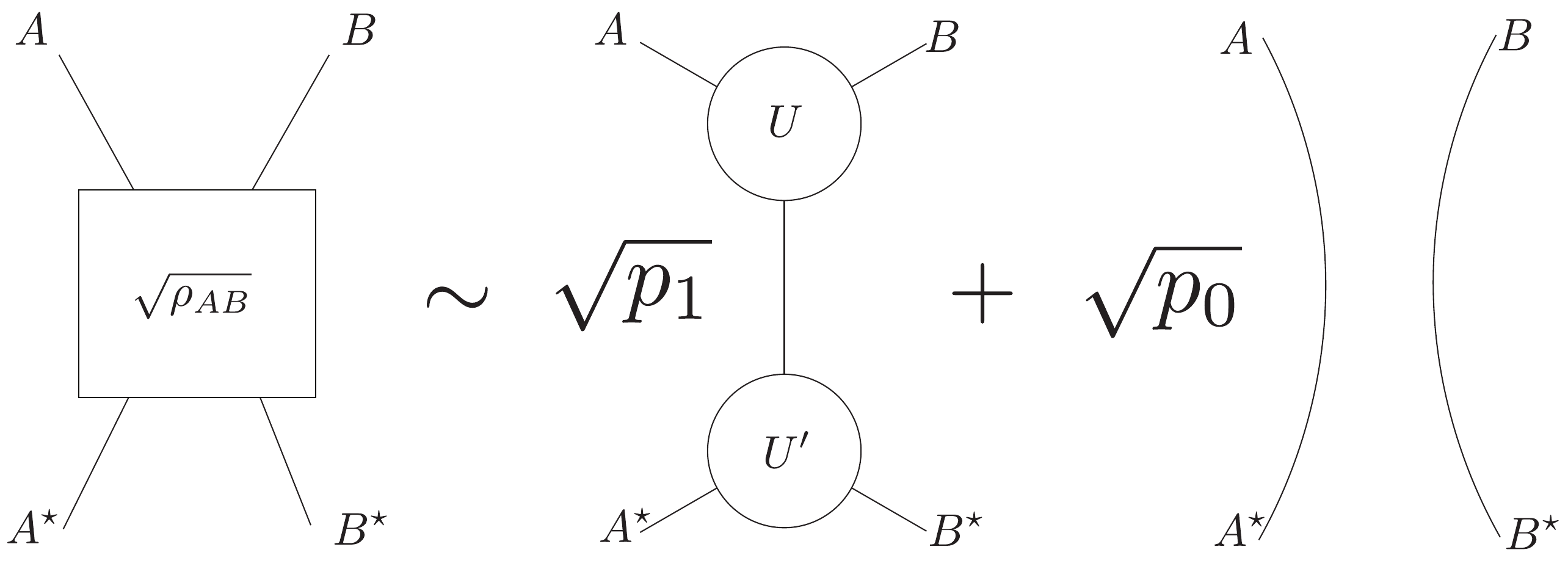} 
	\end{equation}
	The two terms are weighted by the associated probabilities $p_0$ and $p_1 = 1 - p_0$. The reduced density matrices on $AA^\star$ are approximately orthogonal, so these represent (approximate) superselection sectors\footnote{These are also sometimes referred to as $\alpha$-blocks or $\alpha$-sectors.} with an associated area operator: 
	\begin{equation}
		\mathcal{L}_{AA^\star} = \bigoplus_{s=0,1} S_{vN}(\zeta_s) 
	\end{equation}
	where $\zeta_1 \approx 1_{AA^\star}/\chi_A^2$ and $\zeta_0 = \left| 1_{A}/\chi_A \right> \left< 1_{A}/\chi_A \right|$.
	We also check the above results against numerics. See \secref{sec:spectrum} and \secref{sec:num}.
	
	\item For the hyperbolic networks we compute $S_R^{(m,n)}(A:B)$ at large bond dimension $\chi$ and at integer $m/2,n$ by looking for dominant saddle points in the effective spin model that computes the Haar random average. There are potentially many competing saddle points, so this turns out to be a difficult problem. However when the entanglement wedge is in the connected phase we conjecture, by analogy to the single tensor, that two types of saddles give the main contribution for all $n,m/2$. These phases were not accounted for in the original discussion of Ref.~\cite{Dutta:2019gen}. However, the difference to the naive approach does not show up in the reflected entropy $S_R$. The $(m,n)$-R\'enyi reflected entropy in the connected phase ($x>1/2$) can be written as: 
	\begin{equation}
	\label{ansrsr}
		S_R^{(m,n)}(A:B) =   {\rm min} \left. 
		\begin{cases}
			\frac{n}{n-1} I(A,B)
			
			\\
		\ln \chi \times	\min_{0 \leq \mathcal{A} \leq \mathcal{A}_x}\left[2 \mathcal{A} + \frac{4 n}{n-1} \ln 
			
			\cosh(\mathcal{A}_x- \mathcal{A}) \right] 
		\end{cases}
		\right\} 
	\end{equation}
	where $\mathcal{A}_x =\ln \frac{1 + \sqrt{1-x}}{\sqrt{x}}$ and $I(A,B) =2 \ln \chi \ln (1-x)/x$ is the mutual information.
	The first term arises from a disconnected saddle, that is still present in the connected phase, while the second term represents the connected saddles. The area $\mathcal{A}_x$ represents the cross-sectional area of the $AB$ entanglement wedge. While $\mathcal{A}$ represents the cross-sectional area of a ``pinched'' entanglement wedge which arises due to the non-trivial tension of the entanglement brane, when $n > 1$.
	
    We can again give an interpretation of this result using a superposition of networks, analogous to \Eqref{eq:CP_effective_1TN}:
	\begin{equation}\label{eq:CP_effective_hyperbolic} 
		\includegraphics[width=0.8
		\textwidth]{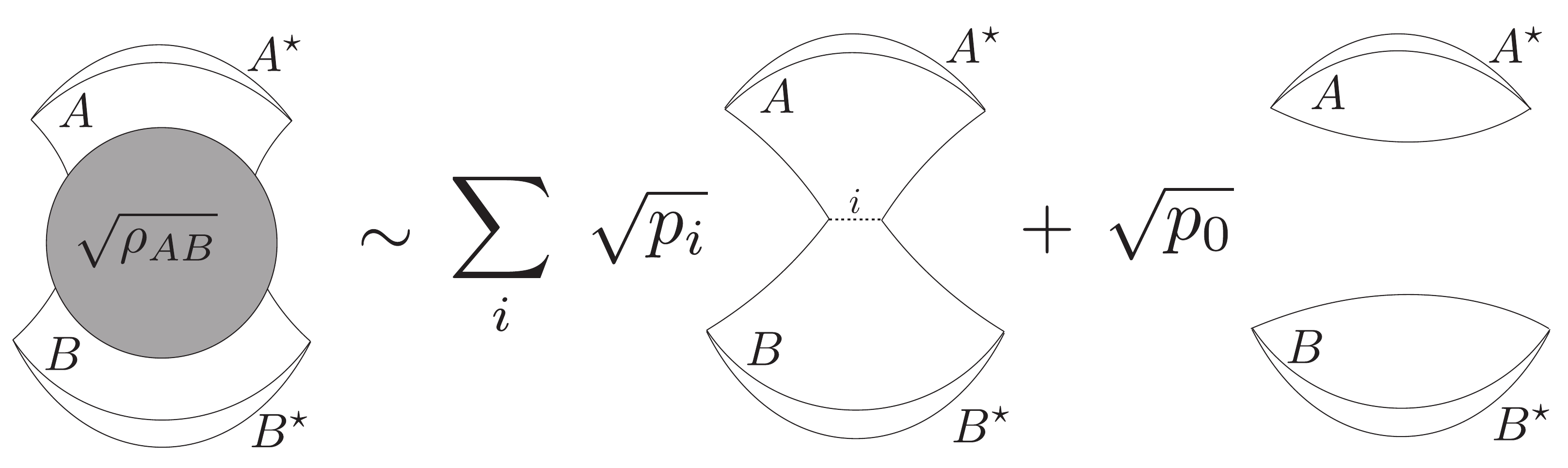} 
	\end{equation}
	where the state described by each side of this equation have the same $S_R^{(m,n)}$. The tensor networks on the right hand side are doubled and glued networks similar to those discussed in  \cite{Marolf:2019zoo}. The probabilities above can be read directly from \Eqref{ansrsr} and we will further expound upon the meaning of these pictures in \secref{sec:general_TN}.
	
	\item Our results will resolve the order of limits issue pointed out in Ref.~\cite{Kusuki:2019evw}. In particular if one applies the naive LM-like saddle point analysis in the limit $(m,n) \rightarrow (1,1)$ then the ``wrong'' phase can dominate if one takes the limit $m \rightarrow 1$ first and then $n \rightarrow 1$, leading to a formula different from \Eqref{eq:sr_equals_2ew}. This later formula fails several quantum information bounds, so one can rule out this order of limits on these grounds. However a better understanding of this issue and justification of \Eqref{ordermn} is desired. In fact we will show that no such order of limits issue occurs after we give a prescription for continuing the $(m,n)$-R\'enyi reflected entropies from $m \geq 2$ to $m=1$. We will give evidence for this prescription based on the single random tensor, and show how it is consistent with a more general set of QI bounds.

\end{itemize}

We start now with \secref{sec:setup}, where we review reflected entropy and random tensor networks and make a first pass at computing $S_R(A:B)$ in those networks, reviewing the usual large bond dimension, leading saddle computation, analogous to making the Lewkowycz-Maldacena (LM) assumption. We will then explain the deficiencies of this calculation. After this we move onto the results summarized above. 
\secref{sec:single_tensor} focuses on the single tensor case while \secref{sec:general_TN} moves to more complicated networks.
In \secref{sec:disc}, we discuss these results and future directions. Several appendices support our calculations, most notably Appendix~\ref{app:phase_proof} where we give a proof of the phase diagram for the single network case.

Our focus in this paper is on reflected entropy in random tensor networks. In a complementary paper \cite{westcoast}, we will explore the reflected entropy transition in the PSSY model of Jackiw-Teitelboim gravity and end-of-the-world branes.

\section{Setup and a first pass} \label{sec:setup}

In this section, we introduce the basic ingredients that go into our calculation. In \secref{sec:SR}, we review the definition of reflected entropy and its R\'enyi generalization. In \secref{sec:RTN}, we review the construction of random tensor networks which will be the setting for our main calculations. In \secref{sec:SR_RTN}, we review the replica trick calculation for reflected entropy, originally discussed in Ref.~\cite{Dutta:2019gen}. While their discussion was in the gravitational context, we rephrase it in terms of the random tensor network which have analogous features. This then leads us to \secref{sec:saddle}, where we attempt to calculate the reflected entropy using a naive saddle point approximation. This works in certain regimes, but isn't a completely satisfactory solution in the entire parameter space. We point out various issues with the calculation, which will then be resolved in \secref{sec:single_tensor}.

\subsection{Reflected entropy} \label{sec:SR}

The reflected entropy $S_R(A:B)$ is a function defined for a density matrix $\rho_{AB}$ on a bipartite quantum system $AB$. One first considers the \emph{canonical purification} of $\rho_{AB}$ on a doubled Hilbert space 
\begin{equation}
	\ket{\sqrt{\rho_{AB}}} \in \mathrm{End}(\mathcal{H}_A) \otimes \mathrm{End}(\mathcal{H}_B) = (\mathcal{H}_A \otimes \mathcal{H}^*_A) \otimes (\mathcal{H}_B \otimes \mathcal{H}^*_B)~, 
\end{equation}
where the space of linear maps $\mathrm{End}(\mathcal{H}_A)$ acting on $\mathcal{H}_A$ itself forms a Hilbert space with inner product $\braket{X|Y} = \tr_A(X^\dagger Y)$. In general $\mathrm{End}(\mathcal{H})$ is isomorphic to the doubled copy $\mathcal{H} \otimes \mathcal{H}^*$. In other words, define $\ket{\sqrt{\rho_{AB}}}$ by finding the unique positive matrix square root of $\rho_{AB}$, regarding the result as a state in $\mathrm{End}(\mathcal{H}_{A} \otimes \mathcal{H}_{B})$.

The reflected entropy is then defined as:
\begin{equation}
	S_R(A:B) = S(AA^*)_{\ket{\sqrt{\rho_{AB}}}} = S_{vN}(\rho_{AA^*})~, \qquad \rho_{AA^*} = {\rm Tr}_{BB^\star} \left| \sqrt{\rho_{AB}} \right> \left< \sqrt{\rho_{AB}} \right|, 
\end{equation}
where $S_{vN}$ is the von-Neumann entropy. 

For our purposes, it will be useful to consider a two parameter R\'enyi generalization, based on the following state:\footnote{Note that $\ket{\psi^{(m)}}$ is no longer a purification of $\rho_{AB}$.} 
\begin{equation}\label{eq:psim} 
	\left| \psi^{(m)} \right> = ({\rm Tr} \rho_{AB}^m)^{-1/2} \left| \rho_{AB}^{m/2} \right>. 
\end{equation}
The R\'enyi generalization is then given by 
\begin{equation}\label{eq:Renyimn} 
	S_R^{(m,n)}(A:B) = - \frac{1}{n-1} \ln {\rm Tr} (\rho_{AA^\star}^{(m)})^n \, , \qquad \rho_{AA^*}^{(m)} = \frac{1}{ {\rm Tr} \rho_{AB}^m}{\rm Tr}_{BB^\star} \left| \rho_{AB}^{m/2} \right> \left< \rho_{AB}^{m/2} \right|, 
\end{equation}
for $m \geq 0, n \geq 0$. We will refer to this as the $(m,n)$-R\'enyi reflected entropies. 

The reflected entropy satisfies various properties that make it a useful measure of correlation. Of these, an important condition that will be useful for us is continuity \cite{Akers:2019gcv}, i.e., 
\begin{equation}
	|S_R(A:B)_{\rho_{AB}}-S_R(A:B)_{\sigma_{AB}}|\leq 4 \sqrt{2 T_{AB}} \log \min(d_A,d_B)-2 \sqrt{2T_{AB}}\log(T_{AB}), 
\end{equation}
where $T_{AB}$ is the trace distance between the reduced density matrices $\rho_{AB}$ and $\sigma_{AB}$ and $d_{A,B}$ is the dimension of the Hilbert space of $A,B$ respectively. Another useful property is that the reflected entropy is upper and lower bounded by 
\begin{equation}
	I(A:B) \leq S_R(A:B) \leq 2 \min(S(A),S(B)), 
\end{equation}
where $I(A:B)$ is the mutual information between $A$ and $B$. 

\subsection{Random tensor networks} \label{sec:RTN}

Here we quickly review random tensor networks (RTNs). For more detail, refer to Ref.~\cite{Hayden:2016cfa}. 

Consider a graph $G=\{V,E\}$. For each vertex $x \in V$ we assign a rank-$k$ tensor $T_{x, \mu_1 \mu_2 \cdots \mu_k}$,where the indices $\mu_i$ label the edges connecting the point $x$. Thinking of each connecting edge as a Hilbert space $\mathcal{H}_i$ spanned by basis vectors $\ket{\mu_i}$, the tensor $T_x$ defines a state 
\begin{equation}
	\ket{T_x} = T_{x, \mu_1 \mu_2 \cdots \mu_k} \ket{\mu_1}\ket{\mu_2}\cdots \ket{\mu_k} 
\end{equation}
on the product Hilbert space of each edge $\bigotimes_i \mathcal{H}_i$. For simplicity, the bond dimensions $\chi$ of the individual edge Hilbert spaces are chosen to be the same. To define a tensor network, we contract all the adjacent vertices among their shared common edges 
\begin{equation}
	\left( \bigotimes_{\{xy\}\in E} \bra{xy} \right) \left( \bigotimes_{x\in V} \ket{T_x} \right) 
\end{equation}
Here $\ket{xy} = \delta_{\mu_i\mu'_j}\ket{\mu_i}\ket{\mu'_j}$ is an un-normalized maximally entangled state defined on the doubled Hilbert space of the edge $(xy)$ where the $i$-th index of tensor $x$ is contracted with the $j$-th index of tensor $y$.
\begin{figure}
	\centering 
	\begin{subfigure}
		[b]{0.4
		\textwidth} 
		\includegraphics[width=
		\textwidth]{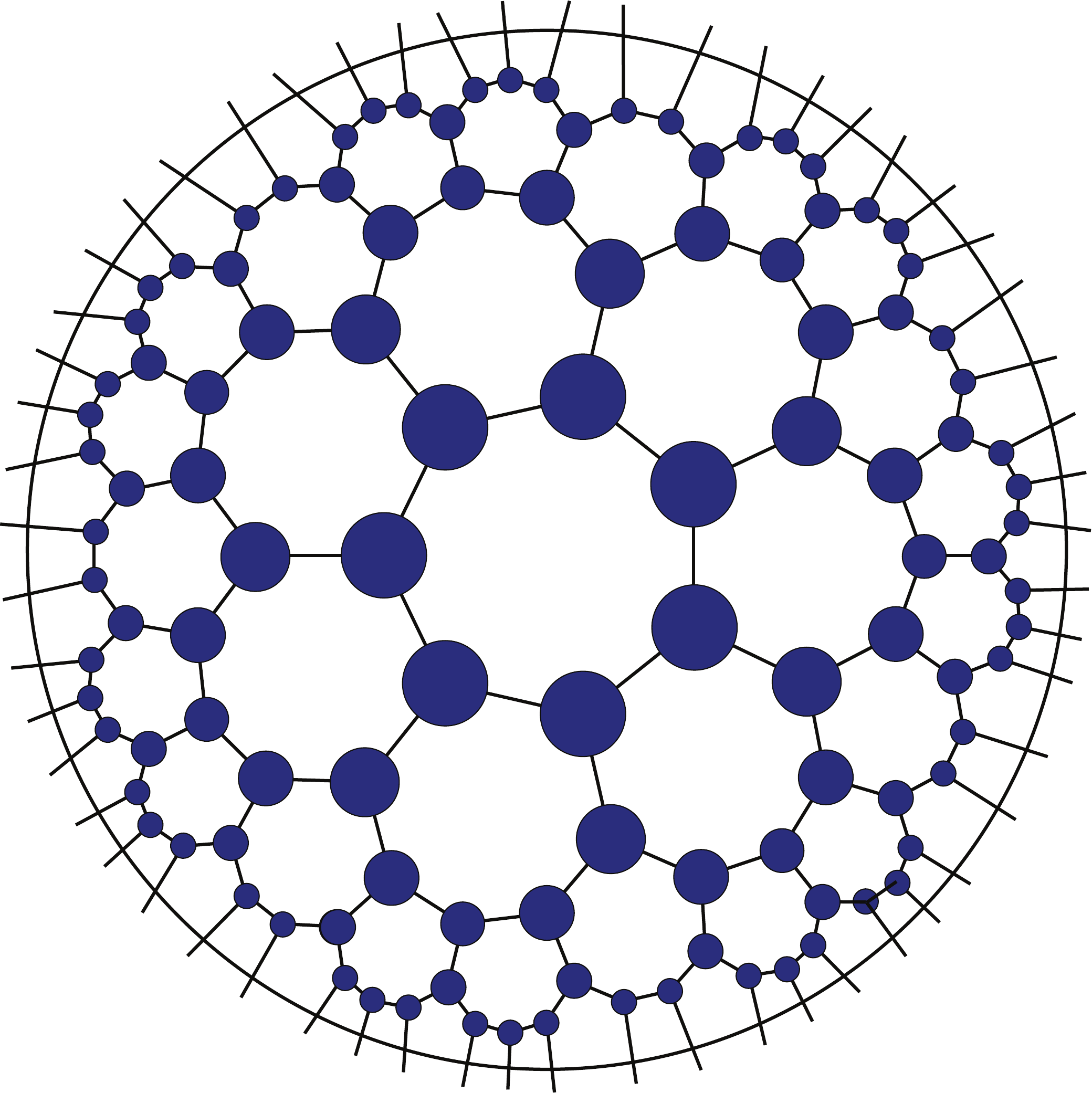} \subcaption{} 
	\end{subfigure}
	\hspace{5mm} 
	\begin{subfigure}
		[b]{0.4
		\textwidth} 
		\includegraphics[width=
		\textwidth]{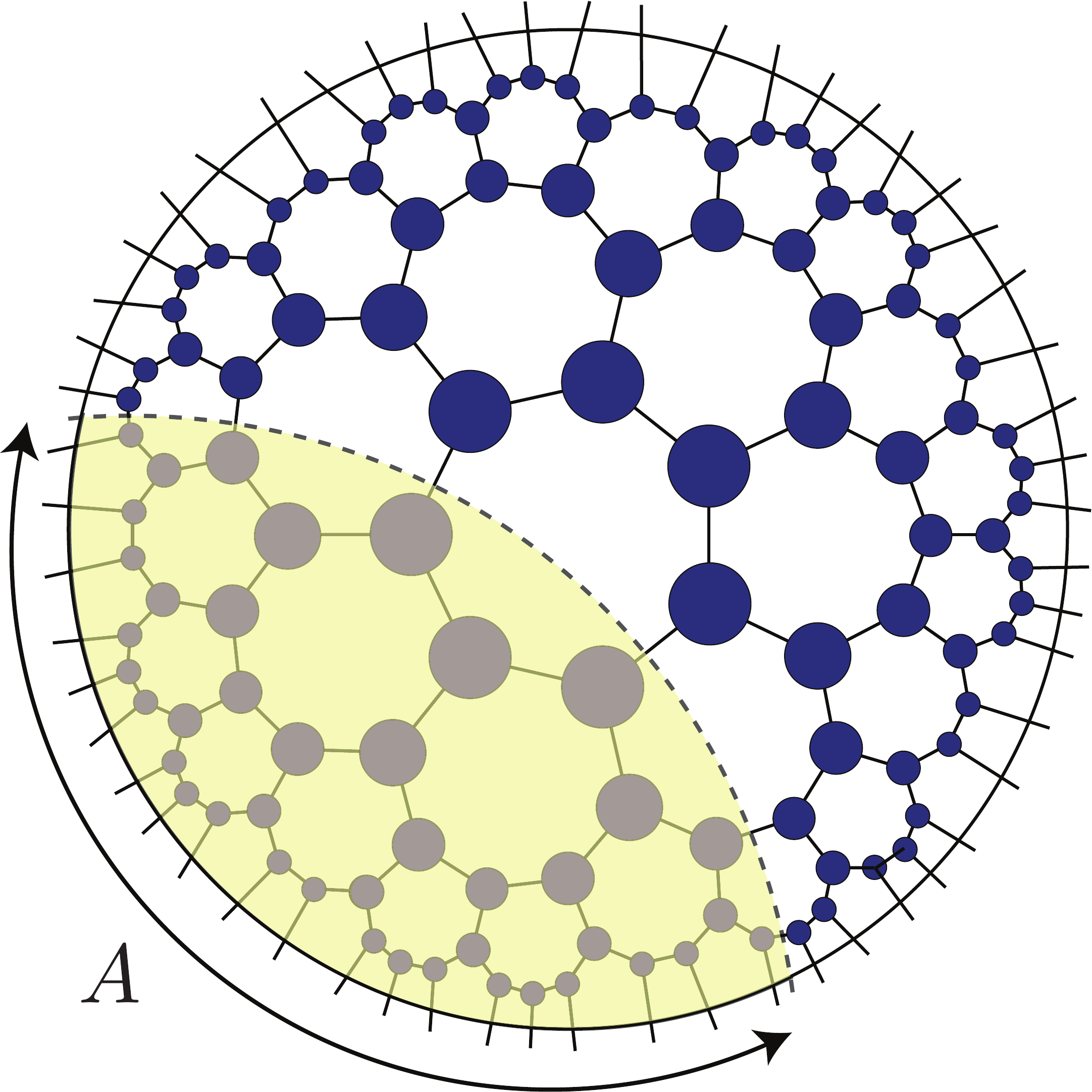} \subcaption{} 
	\end{subfigure}
	\caption{(a) We tile the hyperbolic disk isometrically by a graph. Each blue circle represents a random tensor and the connecting edges indicates contraction of the tensors. This tensor network defines a state on the Hilbert space spanned by the dangling legs on the boundary. (b) The Ising domain wall that arises in the calculation of R\'enyi entropies of region $A$ in the RTN. Minimizing the length of the domain wall gives rise to an entropy formula that corresponds to the RT formula in AdS/CFT.} 
\label{fig:TN} \end{figure}

While we can work with a general graph, we will often be interested in models of AdS/CFT, where it is natural to pick a triangulation of two dimensional hyperbolic space, thought to represent a fixed time slice of the $AdS_3$ spacetime, see Fig. \ref{fig:TN}. At the boundary of the network (at a cutoff region near the boundary of hyperbolic space) we leave the edges un-contracted and think of the resulting network as describing a pure state in a Hilbert space associated to the dangling legs $\mathcal{H}_\partial$. 
\begin{equation}
	\ket{\Psi} = \left( \bigotimes_{\{xy\}\in E\setminus \partial} \bra{xy} \right) \left( \bigotimes_{x\in V} \ket{T_x} \right) \in \mathcal{H}_\partial 
\end{equation}

For RTNs, we demand the tensor $\left| T_x \right>$ is sampled from a uniform random ensemble on the tensor Hilbert space. This can be achieved by acting with a unitary matrix $U_x$ picked from a Haar-random measure on some fixed anchor state $\ket{0_x}$. Then the average over such a measure can be readily done using Schur's lemma \cite{harrow2013church}: 
\begin{equation}
	\overline{(\ket{V_x}\bra{V_x})^{\otimes m}} = \int [DU_x] \left(U_x \ket{0_x}\bra{0_x} U_x^{\dagger} \right)^{\otimes m} \propto
\sum_{g_x \in S_m} g_x 
\end{equation}
where $g_x$ is a element of the symmetry group $S_m$ whose action is to permute the contraction edges of the $m$-replicas. The state $\ket{\Psi}$ is unnormalized. Normalization can be achieved ``on average'' by dividing by the average of the norm of $\ket{\Psi}$. At large bond dimension this procedure is sufficient for our purposes.

We now consider a factorization of the boundary Hilbert space as $\mathcal{H}_{\partial} = \mathcal{H}_A \otimes \mathcal{H}_{\bar{A}}$. The $m$-th R\'enyi entropy of subregion $A$ can be written as the expectation value of a twist operator $\Sigma_A(\tau_m)$ which acts on the tensor product of $m$ copies of $\mathcal{H}_A$. The twist operator cyclically permutes the different copies of $\mathcal{H}_A$ and we label this operator $\tau_m \in S_m$, the cyclic permutation element in the symmetric group $S_m$. After averaging and normalizing as described above one finds at large bond dimension: 
\begin{equation}\label{eq:IsingAction} 
	Z_m(A) \equiv \overline{{\rm Tr}_A \rho_{A}^m } = \overline{\left< \Psi \right|^{\otimes m} \Sigma_A(\tau_m) \left| \Psi \right>^{\otimes m}} = \sum_{\{ g_x \in S_m\}_{b.c.}} \exp \left[ - \ln \chi \sum_{\{xy\} \in E} d(g_x,g_y) \right] 
\end{equation}
which is a classical Ising-like model with nearest neighbor interaction where each ``spin" takes value in the symmetry group $S_m$. The interaction strength is given by the Cayley distance 
\begin{equation}
	d(g,h) = m - \# (gh^{-1}) 
\end{equation}
where $\#(\cdot)$ is a function that counts the total number of cycles of a group element, including trivial cycles that map a given element to itself. We summarize various useful results for the symmetric group in Appendix~\ref{app:perm}. The boundary conditions on the sum $\{ g_x \in S_m\}_{b.c.}$ are dictated by the presence of the twist operators, i.e. we impose $g_x = \tau_m$ on boundary region $x\in A \subseteq \partial$ and $g_x=e$ on $x \in \bar{A} = \partial \backslash A$. In the case of large bond dimension, the Ising model is in its low temperature limit and is localized to its ground state. The field configuration in this limit is given by domains of group elements $\tau_m$ and $e$, separated by a domain wall as shown in \figref{fig:TN}. Minimizing the energy of this domain wall, we recover the usual RT formula for entanglement entropy. At large bond dimension we do not need to worry about taking the logarithm since: 
\begin{align}
	\overline{S_m(\rho_A)} = \frac{1}{1-m} \overline{\left( \ln \frac{{\rm Tr} \rho^m_A}{({\rm Tr} \rho_A)^m}\right)} \approx \frac{1}{1-m} \ln \frac{\overline{{\rm Tr} \rho^m_A }}{\overline{({\rm Tr} \rho_A)^m}}~. 
\end{align}

Note that the above calculation gives the same results for the R\'enyi entropies $S_m$ for all $m \geq 0$. The entanglement spectrum for the RTN is therefore flat, which is clearly not true for generic holographic states. 
Instead, RTN have been interpreted as models of fixed-area states in such theories \cite{Akers:2018fow,Dong:2018seb,Dong:2019piw}. Generic holographic states can be obtained as a superposition of fixed-area states. Later we will see such superpositions naturally arise from the canonical purification without adding this structure in by hand. 

\subsection{Reflected entropy for RTNs} \label{sec:SR_RTN}

To set up the calculation for reflected entropy, we now divide the boundary into three different regions $\{A,B,C\}$. As a simple example, we take both $A$ and $B$ to be connected intervals and trace out $C= \overline{A \cup B}$ to get the reduced density matrix $\rho_{AB}$. The first step is to construct the canonical purification in the doubled Hilbert space 
\begin{equation}
	\rho_{AB} \to \ket{\sqrt{\rho_{AB}}}\bra{\sqrt{\rho_{AB}}} \equiv \rho_{ABA^*B^*} 
\end{equation}
where the $A^*~(\text{respectively } B^*)$ is the canonical purified counterpart of the original region. We will achieve this by first computing the $(m,n)$-R\'enyi reflected entropies defined in \Eqref{eq:Renyimn}. This is possible using a similar Ising-like model, for integer $n \geq 1$ and even integer $m \geq 2$. We introduce two group elements $g_A \in S_{mn}$ and $g_B \in S_{mn}$ below that act as twist operators needed to compute the $(m,n)$-R\'enyi reflected entropy. Using the same procedure as above, including an appropriate normalization that works at large bond dimension, we find: 
\begin{align}
	Z_{m,n}(A,B) &\equiv e^{- (n-1) \overline{S^{(m,n)}_R(A:B)}} = \frac{\overline{ \langle \Psi |^{\otimes nm} \Sigma_A(g_A) \Sigma_B(g_B) | \Psi \rangle^{\otimes nm}}}{ \left( \overline{\langle \Psi |^{\otimes m} \Sigma_{AB}(\tau_m) | \Psi \rangle^{\otimes m}} \right)^n } \\
	&= (Z_m(AB))^{-n} \sum_{\{ g_x \in S_{mn}\}_{b.c.'} } \exp \left[ - \ln \chi \sum_{\{xy\} \in E} d(g_x,g_y) \right] 
\end{align}
where the new boundary condition on the sum, denoted $b.c.'$, imposes $g_x = g_A $ for $x \in A$, $g_x = g_B $ for $x \in B$ and $g_x = e$ for $x \in C$.

We now describe the group elements of $S_{mn}$. For a detailed derivation of these twist operators see Ref.~\cite{Dutta:2019gen}. Denoting a specific replica by $(\alpha,\beta)$, where $\alpha= 1,\cdots,m$ and $\beta= 1,\cdots,n$, we represent the elements in $S_{mn}$ in the following notation: 
\begin{equation}
	\ell^{[i]_m} \in S_{mn} \qquad \ell^{[i]_m}\cdot (\alpha,\beta) =
	\begin{cases}
		(\alpha, \ell \cdot \beta) & \alpha = i \\
		(\alpha,\beta) & \alpha \neq i 
	\end{cases}
	\qquad \ell \in S_n 
\end{equation}
and similarly: 
\begin{equation}
	\kappa^{[j]_n} \in S_{mn} \qquad \kappa^{[j]_n}\cdot (\alpha,\beta) =
	\begin{cases}
		(\kappa\cdot \alpha, \beta) & \beta = j \\
		(\alpha,\beta) & \beta \neq j 
	\end{cases}
	\qquad \kappa \in S_m 
\end{equation}
The group elements that permute the regions $A,B$ can thus be written as 
\begin{equation}
	g_B = \prod_{j=1}^{n} \tau_m^{[j]_n}\,, \qquad g_A = (\tau_n^{-1})^{[m/2]_m} \left( \prod_{j=1}^{n} \tau_m^{[j]_n} \right) \tau_n^{[0]_m} 
\end{equation}
where $\tau_m, \tau_n$ are the respective full cyclic twist operators on $S_m, S_n$. If we define 
\begin{equation}
	\gamma_\tau = \prod_{i=1}^{m/2} \tau_n^{[i]_m} 
\end{equation}
as the group element corresponding to the full n-cyclic permutation in the lower half replicas $\beta>m/2$, we can express $g_A$ as a conjugation of $g_B$ such that    
\begin{equation}
	g_A = \gamma_\tau g_B \gamma_\tau^{-1}, 
\end{equation}
where it is useful to note that conjugation relabels the elements while preserving the cycle structure.
\begin{figure}
	\centering 
	\includegraphics[width=0.9
	\textwidth]{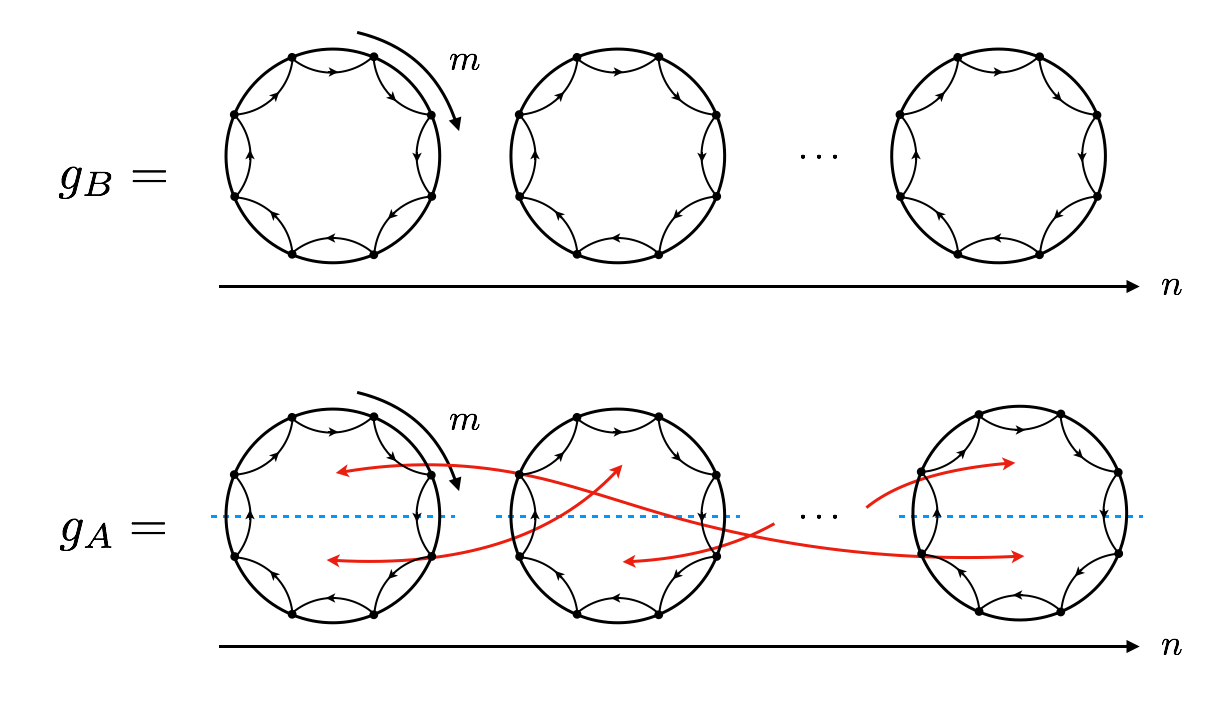} 
	\caption{A graphical representation of $g_A$ and $g_B$. The individual circles each represents the $m$ replicas of the original tensor and each of the circles is further replicated $n$ times. Going clockwise in each circle increases the $m$ replica number and going to the next circle on the right increases the $n$ replica number. A cycle of the permutation is represented by a closed directed loop. The element $g_A$ can be thought of as cutting open the $m$-circles of $g_B$ from the middle, shifting the bottom half cyclically in the $n$ direction (as the red arrows) and then gluing them back together.} 
\label{fig:g_A_g_B} \end{figure}

Since these definitions are a bit complicated to unpack, we introduce a graphical visualization for elements in $S_{mn}$ in \figref{fig:g_A_g_B}. We represent each column of replicas with the same $n$ index by a circle. We will sometimes refer to this as a \emph{puddle}. The same circle is then repeated $n$ times to make the full $mn$ copies for $g$ to act on. Each individual replica (including both a bra and a ket) corresponds to a dot on the circles - there will be $m$ dots on the circle. The action of $g\in S_{mn}$ is represented by drawing directed closed loops corresponding to the cycle decomposition of $g$. For each circle we denote the upper half to be the $\beta=1,\cdots,m/2$ replicas of the column and the lower half to be $\beta=m/2+1,\cdots,m$. The action of conjugation by $\gamma_\tau$ is simply cutting open each circle along its middle line, making a cyclic permutation on the lower halves, and then gluing back together.

The intuition behind these specific elements is that $g_B$ cyclically permutes the elements in each puddle, computing ${\rm Tr} \rho_{AB}^m$. These puddles have a specific $\mathbb{Z}_2$ symmetry that exchanges the lower half and upper half. It is useful to think of this symmetry as a time reflection symmetry. Cutting open the traces at fixed points of this $\mathbb{Z}_2$ action will give the canonical purification $\left| \rho_{AB}^{m/2} \right>$. Then relative to $g_B$ we introduce into $g_A$ two $n$-twist operators, $\tau_n^{[0]_m}$ and $(\tau_n^{-1})^{[m/2]_m}$, that compute the $n$-R\'enyi entropies and live at the fixed points of this $\mathbb{Z}_2$. The two twist operators correspond to $A$ and $A^\star$ respectively, so putting this together it is clear that we are computing ${\rm Tr} (\rho^{(m)}_{AA^\star})^n$.

\subsection{The naive saddles} \label{sec:saddle}

We now consider two natural saddle points, i.e. domain wall configurations for the Ising problem that arose in computing the $(m,n)$-R\'enyi reflected entropy. These saddles are the tensor network analogs of the saddle points that were considered in Ref.~\cite{Dutta:2019gen} and used to prove the correspondence in \Eqref{eq:sr_equals_2ew}. These were also interpreted as arising from the gluing construction of Refs.~\cite{Engelhardt:2017aux,Engelhardt:2018kcs}, which has a natural analog in the tensor network case \cite{Marolf:2019zoo}. \footnote{This was one of several methods suggested for proving the correspondence in Ref.~\cite{Dutta:2019gen}. The other method is studying the modular flowed correlators, which doesn't obviously extend to the tensor network case.} In particular, these saddles are analogs of the LM saddles \cite{Lewkowycz:2013nqa} used to compute R\'enyi entropies in AdS/CFT. 

With the above results at hand, we are now ready to make a first-pass calculation of the reflected entropy in RTN using the saddle-point approximation. We remind our readers that the presentation in this subsection will be schematic and brief, merely touching on the issues that arise during the calculation. We return to those issues with a more detailed treatment in \secref{sec:single_tensor} after our study of the one-site tensor model.
\begin{figure}
	\centering
	
	\begin{subfigure}
		[t]{.4\linewidth} 
		\caption{The naive disconnected phase} 
		\begin{overpic}
			[width=1
			\textwidth]{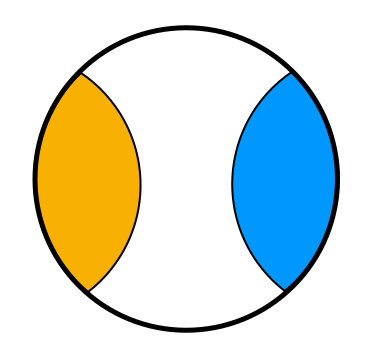} \put(20,45){\large$g_A$} \put(75,45){\large$g_B$} 
		\end{overpic}
		
	\end{subfigure}
	\begin{subfigure}
		[t]{.4\linewidth} 
		\caption{The naive connected phase} 
		\begin{overpic}
			[width=1
			\textwidth]{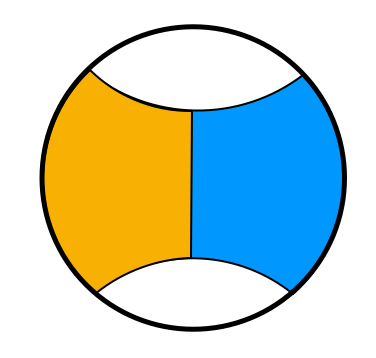} \put(25,45){\large$g_A$} \put(70,45){\large$g_B$} 
		\end{overpic}
		
	\end{subfigure}
	\caption{Two naive solutions for the bulk spin model. The white region indicates the identity permutation $e$.} 
\label{fig:naive_bulk_sol} \end{figure}

Imposing the boundary elements to be $g_A$ on region $A$ and $g_B$ on region $B$ and allowing the corresponding domains to propagate into the bulk, we immediately identify two possible geometrical saddles (see \figref{fig:naive_bulk_sol}). The first saddle is characterized by each region bounded by its minimal homology surface (we call this the \textit{naive disconnected saddle}). The second saddle features a non-trivial entanglement wedge, with the minimal wedge cross-section marked as the domain wall for $g_A\leftrightarrow g_B$ (we call this the \emph{naive connected saddle}). The tensions of these domain walls are found from the Cayley distances: 
\begin{align}
	d(g_A, e) = d(g_B, e) = n (m-1), \qquad d(g_A, g_B) = 2 (n-1) 
\end{align}
Assuming the naive disconnected saddle dominates and doing a simple analytic continuation to $(m,n) \rightarrow (1,1)$ one finds that the reflected entropy vanishes, while a similar procedure for the naive connected saddles gives a reflected entropy proportional to the EW cross-section. It is then expected that as we vary the size of regions $A,B$, one of the two solutions gains dominance over the other, with a phase transition at some critical size. Furthermore, one can verify that the transition point occurs right at the point when the RT surface of $A\cup B$ jumps. Hence the conjecture ``$S_R(A:B)=2EW(A:B)$" seems to be true in RTN.

We do expect that our construction of the two semi-classical saddles works well when the bulk is far from the transition point, where the sum in the partition function is strongly dominated by a single saddle. However there are several issues that arise from this proposal:
\begin{itemize}
	\item This construction fails to correctly account for the $n$-R\'enyi reflected entropies at $m=1$ even away from the phase transition. To see this consider the difference in free energies of the two saddles: 
	\begin{align}
		\Delta F &= F_{disconn.} - F_{conn.} = \ln \chi \times ( n(m-1) \ell_{mi} - 2 (n-1) \ell_{ew} )\, \\
		\ell_{mi} &= 2\ln \frac{1-x}{x} \, \qquad \ell_{ew} = \ln \frac{ 1+ \sqrt{1-x}}{\sqrt{x}} 
	\end{align}
	where $x$ is the conformal cross ratio of the end points of the boundary intervals and $x<1/2$ in the connected phase. Here $\ell_{mi}$ is the difference of lengths for the RT surfaces that compute the mutual information in the connected phase. Also $\ell_{ew}$ is the length of the cross section. We have approximated the discrete network by a continuum in order to write simple formulas. If we treat these saddles seriously for all $(m,n)$ then we find the R\'enyi entropies for $n>1$ at $m=1$ are dominated (have smaller free energy) by the disconnected phase and thus vanish to leading order in $\ln \chi$. Taking the limit $n \rightarrow 1$ after setting $m=1$ gives $0$ for the reflected entropy even in the connected phase. In particular this indicates an order of limits issue that was first pointed out in Ref.~\cite{Kusuki:2019evw}. This answer is not obviously inconsistent, however we will argue that it is nevertheless incorrect. 
	
	\item While the two bulk solutions are legitimate saddles at integer $(m/2,n)$ and each carries physical significance, it is not obvious that other semi-classical bulk configurations do not dominate over these. In fact as we will see Section 4, there will be a new class of solutions featuring a new group element $X\in S_{mn}$ that can dominate the partition function. We will see how this new element interacts with the existing solutions, and we will argue that phases involving $X$ are the correct ones for computing the R\'enyi entropies at $m=1$.\footnote{While this is somewhat similar to the new ``replica symmetry breaking'' group elements that are important for negativity computations \cite{Shapourian:2020mkc,Dong:2021oad,Vardhan:2021npf}, the details differ significantly. } These solutions do not have the same order of limits issue above.
	
	\item At the phase transition there is considerable uncertainty with how to proceed. So far the calculations we discussed were based on some fixed integer replica number $n,m>1$. To obtain the reflected entropy we need to analytically continue $n,m\to 1$. While the solutions we obtained above work w independently, we need to include both of them to make a phase transition. However simply summing the two saddles is known to lead to incorrect results \cite{Fischetti:2014zja}. The simplest issue that arises is from the normalization. Sending $ n \rightarrow 1$ we expect: 
	\begin{equation}
		1 = \tr \rho_{AA^*} = \tr \left( \tr_{BB^*} \ket{\sqrt{\rho_{AB}}}\bra{\sqrt{\rho_{AB}}} \right) 
	\end{equation}
	However since we view the two bulk solutions as independent from one another, each of them contributes unity to the trace exactly at the phase transition. Adding those up we obtain a paradoxical result $\tr \rho_{AA^*} = 2$. It is clear that we should somehow treat the two seemingly different saddles as limits of a more general class of solution.
	
	\item Indeed, around the phase transition point we expect there will be other previously sub-leading saddles that will smooth out the sharp transition. A notable example for such effect is the calculation of entanglement entropy for two disjoint regions \cite{Dong:2020iod,Marolf:2020vsi} (See also Ref.~\cite{Penington:2019kki}), where a summation over a larger class of group elements called the non-crossing permutations is performed. In our case the problem is harder due to the appearance of the domain wall $g_A\leftrightarrow g_B$ in our bulk solution. However when we include a summation over a larger set of saddles, we also expect to find a smoothing out of the reflected entropy phase transition. 
\end{itemize}

The above issues suggest we need to sum over a larger set of non-trivial bulk saddles. However, the bulk geometry is complicated, and it is not a priori obvious how to construct these solutions. 
To make our life easier and also to improve our understanding of the underlying mechanics of the RTN calculation, we will warm up in the next section with a toy model consisting of only a single random tensor. Being a simpler model, we are able to perform an analytic computation of the full entanglement spectrum of the canonical purified density matrix. We will see that this simple model exhibits interesting behaviors that capture and resolve many of the issues we listed above. We will return to the more complicated network states in \secref{sec:general_TN}.

\section{Single random tensor}\label{sec:single_tensor}
\begin{figure}
	[h] \centering 
	\includegraphics[width=0.4
	\textwidth]{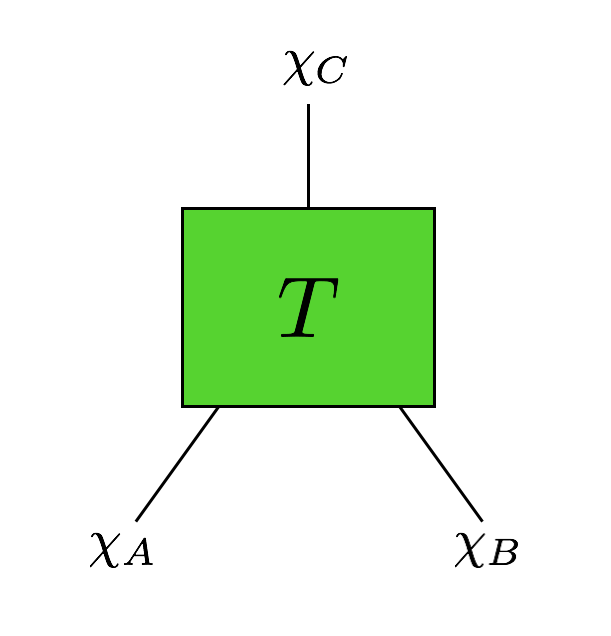} 
	\includegraphics[width=0.5
	\textwidth]{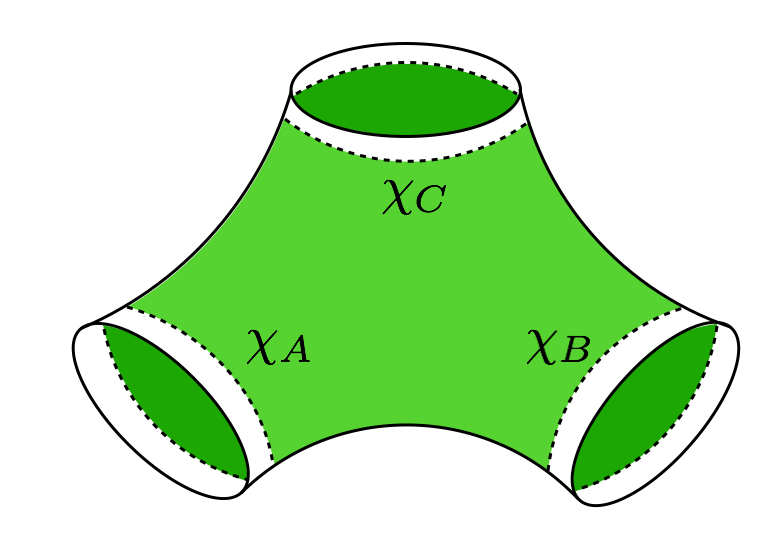} 
	\caption{(left) A state on the Hilbert space $\mathcal{H}_{ABC}$ is prepared by a single tripartite random tensor with bond dimensions $\chi_A$, $\chi_B$ and $\chi_C$. (right) The three-boundary wormhole solution, which is modeled by the single random tensor when its horizon areas are fixed to a small window. The horizon areas of the wormhole corresponds to the bond dimensions of the single tensor.} 
\label{fig:single} \end{figure}

In order to resolve the issues we faced in the previous section, we will consider a much simpler version of the problem. Consider a single tripartite random tensor $T$ with bond dimensions $\chi_A$, $\chi_B$ and $\chi_C$ that prepares a state $\ket{\psi}_{ABC}$ on the Hilbert space $\mathcal{H}_{ABC}$. This setup can be thought of as a toy model for the holographic setting of a multiboundary wormhole with three asymptotic boundaries, where the area of the mouths $\mathcal{A}_{A,B,C}$ are related to the bond dimensions $\chi_{A,B,C}$ as 
\begin{align}
	\log \chi_{A,B,C} &= \frac{\mathcal{A}_{A,B,C}}{4G_N} 
\end{align}
by the standard relation between the bulk spacetime and the tensor network that discretizes it \cite{Balasubramanian:2014hda,Hayden:2016cfa}, as shown in \figref{fig:single}. More so, when computing the R\'enyi entropy, there is an exact correspondence (including non-perturbative effects) between the single random tensor and the multiboundary wormhole when the horizon areas are fixed to a parametrically small window \cite{Akers:2018fow,Dong:2018seb,Dong:2019piw}.\footnote{Note that by non-perturbative effects we mean effects of $O(e^{-1/G_N})\sim \chi^{k}$ for some $k$, i.e., power law corrections in the bond dimension.} 

Note that in order for the connection between the single tensor and the three boundary wormhole to be sharp, we require that the interior region is large enough that there are no other competing entanglement wedge cross sections except the horizons of $A$ and $B$ since there is no analog of such surfaces in the single tensor. It seems plausible that this can be arranged for by adding sufficient matter in the interior to support a long wormhole like in Ref.~\cite{Shenker:2013yza}.

This setup is much more tractable than the general tensor network and allows us to understand phase transitions in the reflected entropy in detail. We will use the resolvent trick described in \cite{Penington:2019kki} to compute the exact entanglement spectrum of $\rho_{AA^*}$ in the limit of large bond dimension. This allows us to compute the reflected entropy $S_R(A:B)$, along with other generalizations such as its R\'enyi versions, as a function of the bond dimensions. We will find a phase structure that is consistent with the holographic proposal \cite{Dutta:2019gen} for the reflected entropy at least away from phase transitions, and the spectrum will smoothly interpolate between the different phases.

We summarize now the various parameter ranges we will be interested in, each some limit with $\chi_A, \chi_B, \chi_C \gg 1$. We consider the following fixed parameters as we take $\chi_i$ large: 
\begin{equation}
	x_A = \frac{\ln \chi_A}{\ln \chi_C} \, \qquad x_B = \frac{\ln \chi_B}{\ln \chi_C} 
\end{equation}
with $0 < x_A , x_B < \infty$. The ``entanglement wedge'' phase transition, corresponding to the Page phase transition that we are mostly interested in, lives along the line $x_A + x_B = 1$. There are other phase boundaries along $x_A = 1 + x_B$ and $x_B = 1 + x_A$ where the derivative of the mutual information jumps. However, the reflected entropy $S_R(A:B)$ does not have interesting behavior here so we are less interested in them. 

In terms of these parameters the mutual information approaches
\begin{equation}
	\frac{I(A:B)}{\ln \chi_C } \rightarrow 
	\begin{cases}
		0 \,, & x_A + x_B < 1 \\
		x_A + x_B -1 \,, & x_A + x_B > 1 \,\,\, \&\quad x_A - 1 < x_B < x_A + 1~,
	\end{cases}
\end{equation}
in the large bond dimensions limit.
Also, as we will demonstrate below, the reflected entropy behaves like
\begin{equation}
	\frac{S_R(A:B)}{\ln \chi_C } \rightarrow 
	\begin{cases}
		0 \,, & x_A + x_B < 1 \\
		2 \,{\rm min}(x_A,x_B) \,, & x_A + x_B > 1 \,\,\, \&\quad x_A - 1 < x_B < x_A + 1~.
	\end{cases}
\end{equation}
Note the value of the reflected entropy in the connected phase matches the holographic expectation; it is the entanglement wedge cross-section in this single-tensor model. 
Our goal will be to compute this rigorously and to also study $S_R$ near the phase boundaries. 

Consider now the phase boundary near $x_A + x_B = 1$. Set: 
\begin{equation}
	q = \chi_A \chi_B / \chi_C\, , \qquad y = \ln \chi_A/ \ln \chi_B 
\end{equation}
We will explore the parameter range $ 0 < q < \infty$ and $ 0 < y < \infty$ fixed as we send $\chi_C \rightarrow \infty$. The mutual information is well studied in this limit as it can be computed from the entanglement entropies \cite{Page:1993df}. The behavior across the phase transition is generally referred to as the Page result/transition. 

There is also another limit of interest at the edge of this phase boundary which lives near $x_B = 0$ and $x_A = 1$ where we fix $\chi_B \gg 1$ and send $\chi_C \rightarrow \infty$ holding $p = \chi_A/\chi_C$ fixed and $\chi_B$ fixed. This limit is mostly of interest because the analytic calculation is more tractable. 

We start by proving certain results rigorously for the reflected entropy in different regimes of parameter space in \secref{sec:ineq}. We then compute the phase diagram for the R\'enyi reflected entropy in the saddle point approximation in \secref{sec:phase_nm}. We then analyze the phase transitions in reflected entropy by using the resolvent trick. Before doing so, in \secref{sec:EE} we describe the phase transition in entanglement entropy which reproduces the Page curve as a warm up problem. This helps us establish the formalism and ingredients required to describe the reflected entropy phase transition in \secref{sec:SD}. We discuss the results of this calculation in \secref{sec:spectrum} and check these against numerics in \secref{sec:num}. 

\subsection{General considerations} \label{sec:ineq}

Let us start by proving some rigorous results on reflected entropies and the R\'enyi generalizations in the large bond dimension limit. These results will help guide our way through the replica treatment of this problem.

The intuition behind the following result is that for a random state, the reduced density matrix $\rho_{AB}$ has an approximately flat spectrum, implying it is approximately proportional to a projector at large bond dimension. This implies that the states $\left| \rho_{AB}^{m/2} \right>$ for different $m$ and after normalization, are close in the Hilbert space norm. This implies a certain continuity in $m$ for the $(m,n)$-R\'enyi reflected entropies. While continuity results for R\'enyi entropies are weak, the states in question are sufficiently close at large bond dimension to imply a useful result. We prove the following lemma: 
\begin{lemma}\label{conn_cont} 
	The $(m,n)$-R\'enyi reflected entropy for a tripartite random state with $\chi_C < \chi_A \chi_B$ satisfies a continuity bound as a function of $m$ for $1 \leq m \leq 2$ and $n>1$: 
	\begin{equation}\label{rbound} 
		\overline{\left| S_R^{(1,n)}(A:B) - S_R^{(m,n)}(A:B) \right|} \leq \frac{2n}{n-1} \left(\frac{ (m-1) \chi_C}{\chi_A \chi_B} \right)^{1/2} \min \{ \chi_A^{2 (n-1)},\chi_B^{2 (n-1)} \} 
	\end{equation}
	Also the reflected entropy (at $n=1$) satisfies: 
	\begin{equation}\label{ebound} 
		\overline{\left|S_R(A:B) - S_R^{(m,1)}( A:B) \right|} < 2\left( \frac{(m-1) \chi_C}{\chi_A \chi_B} \right)^{1/2} \min \{ \ln \chi_A, \ln \chi_B \} + \ln 2 
	\end{equation}
\end{lemma}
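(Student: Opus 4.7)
My strategy is to reduce the lemma to a continuity-of-entropy argument on the doubled system $AA^\star$, controlled by a trace-distance bound between $\rho^{(m)}_{AA^\star}$ and $\rho^{(1)}_{AA^\star}$. The key physical input is that, for $\chi_C<\chi_A\chi_B$, a Haar-random $\rho_{AB}$ has an approximately flat spectrum on its rank-$\chi_C$ support, so that the canonical purifications at different values of $m$ are close.

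For the trace distance $T \equiv T(\rho^{(m)}_{AA^\star}, \rho^{(1)}_{AA^\star})$, I would use the data-processing inequality under the partial trace over $BB^\star$ to pass to the purifications, and then the pure-state identity
\[
T \leq \sqrt{1-|\langle \psi^{(m)}|\psi^{(1)}\rangle|^2}, \qquad \langle \psi^{(m)}|\psi^{(1)}\rangle = \frac{\tr \rho_{AB}^{(m+1)/2}}{\sqrt{\tr \rho_{AB}^{m}}}.
\]
The overlap equals $1$ whenever $\rho_{AB}$ is proportional to a projector, and its deviation from unity is controlled by the second cumulant of $\ln\rho_{AB}$, as follows from the convexity of $s\mapsto \log\tr\rho_{AB}^s$ together with a Taylor expansion around $s=1$. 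Writing the eigenvalues on the $\chi_C$-dimensional support as $\lambda_i=(1+\epsilon_i)/\chi_C$ with $\sum_i\epsilon_i=0$, the standard Haar moment $\overline{\tr\rho_{AB}^2} \approx 1/\chi_C + 1/(\chi_A\chi_B)$ gives $\overline{\langle\epsilon^2\rangle} \lesssim \chi_C/(\chi_A\chi_B)$, and applying Jensen's inequality to push the expectation inside the square root yields the random-state bound $\overline{T} \lesssim \sqrt{(m-1)\chi_C/(\chi_A\chi_B)}$.

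I would then convert this $T$ bound into the two claimed entropy inequalities. For $n>1$, the telescoping identity $X^n - Y^n = \sum_{k=0}^{n-1} X^k(X-Y)Y^{n-1-k}$ combined with $\|X\|_\infty, \|Y\|_\infty\leq 1$ gives $|\tr X^n - \tr Y^n| \leq 2nT$; dividing by $\min(\tr X^n, \tr Y^n) \geq d^{-(n-1)}$, with $d=\chi_A^2$ (or $d=\chi_B^2$ via the $A\leftrightarrow B$ swap symmetry of $S_R^{(m,n)}$, producing the $\min$), and inserting into the definition of $S^{(m,n)}_R$ yields the prefactor $\frac{2n}{n-1}\min\{\chi_A^{2(n-1)},\chi_B^{2(n-1)}\}$ multiplying $\overline{T}$. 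For $n=1$ I would apply Fannes-Audenaert on $AA^\star$ to obtain $|S(\rho)-S(\sigma)|\leq T\log d + h(T)$, where $\log d = 2\min\{\log\chi_A,\log\chi_B\}$ by the same swap argument and $h(T)\leq \ln 2$ is responsible for the additive $\ln 2$.

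The main obstacle will be controlling $\overline{1-|\langle\psi^{(m)}|\psi^{(1)}\rangle|^2}$ at non-integer exponents: for generic $m\in(1,2)$ the moment $\overline{\tr\rho_{AB}^{(m+1)/2}}$ is not directly computable from the permutation sum that evaluates integer moments. A convenient workaround is to bound the overlap deficit in terms of the convex function $s\mapsto\log\tr\rho_{AB}^s$, whose values at integer $s$ are accessible and whose second derivative is exactly the variance estimated above, so that convexity handles the interpolation to half-integer exponents. A secondary point worth noting is that the $(m-1)^{1/2}$ scaling of the lemma is weaker than the $(m-1)^1$ scaling a direct Taylor expansion would predict; this reflects the loss incurred when passing the expectation through $\sqrt{\cdot}$ via Jensen, and keeps the proof clean at the cost of a suboptimal constant in the regime $m\to 1$.
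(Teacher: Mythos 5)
Your proposal is correct and follows essentially the same route as the paper's proof: bound the trace distance by the fidelity of the purifications (whose overlap is $\tr\rho_{AB}^{(m+1)/2}/\sqrt{\tr\rho_{AB}^{m}}$), control the deficit using the near-flatness of the Haar-random spectrum through the second moment (the Lubkin bound $\overline{S_2}\geq\ln\chi_C-\chi_C/(\chi_A\chi_B)$), push the average inside the square root by Jensen/concavity, and finish with Fannes--Audenaert at $n=1$ and a R\'enyi continuity bound at $n>1$, with the $\min$ coming from the $A\leftrightarrow B$ symmetry. The only point where you diverge — your "main obstacle" of non-integer exponents — is resolved more simply in the paper by writing $1-F^2=1-e^{-(m-1)(S_{(m+1)/2}-S_m)}\leq(m-1)\,(S_{(m+1)/2}-S_m)$ and then invoking R\'enyi monotonicity ($S_{(m+1)/2}\leq\ln\chi_C$ from the Schmidt rank and $S_m\geq S_2$ for $m\leq 2$) to reduce everything to the integer moment $S_2$, which sidesteps the Taylor/cumulant interpolation of $\log\tr\rho_{AB}^s$ (whose second derivative involves $\mathrm{Var}(\ln\rho_{AB})$ and is not uniformly controlled for small eigenvalues).
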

\begin{proof}
	We will also need the continuity bound for reflected entropy \cite{Akers:2019gcv}, which follows from applying the Fannes-Audenaert inequality \cite{fannes1973continuity,audenaert2006sharp} to the reflected entropy. In fact we will need a R\'enyi generalization of this inequality, which is usually considered to be a weak bound \cite{rastegin2011some}: 
	\begin{align}
		\left|S_n( \rho_{AA^\star}^{(1)} ) - S_n( \rho_{AA^\star}^{(m)} ) \right| &\leq \frac{ \chi_A^{2 (n-1)}}{n-1} \left(1-(1- T)^n -(\chi_A^2-1)^{1-n} T^n \right)\, \qquad n \geq 1 \\
		& \leq \frac{ n\chi_A^{2 (n-1)}}{n-1} T 
	\end{align}
	where $T$ is the trace distance: 
	\begin{align}
		T = \| \rho_{AA^\star}^{(1)} - \rho_{AA^\star}^{(m)} \|_1 &\leq 2 \sqrt{ 1 - |\big< \psi^{(1)} \big|\psi^{(m)} \big>|^2 } = 2 \sqrt{ 1 - \frac{\left({\rm Tr} \rho_{AB}^{(1+m)/2}\right)^2}{{\rm Tr} \rho_{AB}^{m}}} 
	\end{align}
	The later inequality follows from the bound of the trace norm by the fidelity, as well as monotonicity of the fidelity under tracing from $AB A^\star B^\star$ to $AA^\star$. The overlap $|\big< \psi^{(1)} \big|\psi^{(m)} \big>| $ is the fidelity on $AB A^\star B^\star$. The last equality uses the Hilbert-Schmidt norm. We estimate: 
	\begin{align}
		T &\leq 2 \sqrt{ 1 - \exp\left( -(m-1) (S_{(m+1)/2}(\rho_{AB}) - S_{m}(\rho_{AB})) \right) } \\
		& \leq 2((m-1))^{1/2} (S_{(m+1)/2}(\rho_{AB}) - S_m(\rho_{AB}))^{1/2} \\
		& \leq 2((m-1))^{1/2} (\ln \chi_C - S_2(\rho_{AB}))^{1/2}
	\end{align}
	where we used some elementary bounds and in the last line we used monotonicity of the R\'enyi entropies as a function of $m$: $S_m \geq S_2$ for $m\leq 2$.
	We also used $S_{(m+1)/2}(\rho_{AB}) \leq \ln \chi_C$ which follows from Schmidt rank and the assumption $\chi_C < \chi_A \chi_B$. Averaging over the Haar random unitaries we have: 
	\begin{equation}
		\overline{T} \leq 2((m-1))^{1/2} \overline{(\ln \chi_C - S_2(\rho_{AB}))^{1/2}} 
	\end{equation}
	We now apply the Lubkin result \cite{lubkin1978entropy}, see Ref.~\cite{liu2018entanglement}: 
	\begin{equation}
	\label{lubkin}
		 \overline{\left< S_2(\rho_{AB}) \right>} \geq \ln \chi_C - \frac{\chi_C}{\chi_A \chi_B} 
	\end{equation}
	So we have: 
	\begin{equation}
		\overline{\left( \ln \chi_C - S_2(\rho_{AB}) \right)^{1/2}} \leq \left( \overline{ \ln \chi_C - S_2(\rho_{AB}) } \right)^{1/2} \leq \left( \frac{\chi_C}{\chi_A \chi_B} \right)^{1/2} 
	\end{equation}
	by Jensen and concavity of the square root. Putting all these inequalities together we find that: 
	\begin{equation}
		\overline{T} \leq 2( (m-1))^{1/2} \left(\frac{\chi_C}{\chi_A \chi_B} \right)^{1/2} 
	\end{equation}
	which gives \Eqref{rbound} after using the symmetry between $A$ and $B$. A similar analysis for the $n=1$ case using now the Fannes continuity result and the bound on the Shannon correction term $ - T \ln T - (1-T) \ln (1-T) \leq \ln 2$ gives \Eqref{ebound}. 
\end{proof}

While we focused on $1 \leq m \leq 2$ there is no obstruction to finding similar results for later $m >2$, say by generalizing the Lubkin bound \Eqref{lubkin} for other integer R\'enyi entropies. The most efficient way to do this for general $m$ would involve taking the large bond dimension limit, whereas the result above is for any bond dimensions. We expect the conclusions are the same for $m>2$.

As an application of these results, consider sitting on the $(x_A, x_B)$ phase diagram above $x_A + x_B \geq 1$ and assume that $x_A < x_B$. We see that there is always some window $ 1 \leq n \leq n_{\rm c}$ where the leading large $\chi$ $(m,n)$-R\'enyi entropies are provably independent of $m$: 
\begin{align}
	\overline{|S_n( \rho_{AA^\star}^{(1)} ) - S_n( \rho_{AA^\star}^{(m)} ) |} 	
	\leq \frac{ 2n ( (m-1))^{1/2}}{n-1} \chi_C^{ 1/2(1 -x_A -x_B) + 2 x_A(n-1) } 
\end{align}
which implies that the change in the $(m,n)$-R\'enyi reflected entropy as a function of $m$ is non-perturbatively small for $n>1$ and $n < n_c$ where $n_c =  5/4 - (1 -x_B) /(4x_A) \geq 1$. Additionally at $n=1$ the shift as a function of $m$ can be order $1$. Based on explicit calculations we will argue that this is true for all $n>1$, although these results will be less rigorous, relying on certain assumptions about the analytic continuation. In this way Lemma~\ref{conn_cont} will serve as a check on our method for analytic continuation.

We can make a similar argument for $\chi_A \chi_B \leq \chi_C$, but now we expect the reflected density matrix to be close to a factorized density matrix $\rho_{AB} \approx 1_A \otimes 1_B/(\chi_A \chi_B)$. This allows us to make a stronger statement since the reflected entropy of this density matrix vanishes. 
\begin{lemma}
\label{lem2}
	For a tripartite random state with $\chi_A \chi_B \leq \chi_C$ 
	\begin{equation}
		\overline{S^{(m,n)}_R(A:B) } \leq \frac{2n}{n-1} \left( \frac{\chi_A \chi_B}{\chi_C} \right)^{1/2} \min \{ \chi_A^{2 (n-1)},\chi_B^{2 (n-1)} \} 
	\end{equation}
	for $1 \leq m \leq 2$ and $n > 1$. Similarly for the entropy: 
	\begin{equation}
		\overline{S^{(m,1)}_R(A:B) } \leq 2 \left( \frac{\chi_A \chi_B}{\chi_C} \right)^{1/2} \min \{ \ln \chi_A, \ln \chi_B \} + \ln 2 
	\end{equation}
\end{lemma}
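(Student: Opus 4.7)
The plan is to adapt the strategy of Lemma~\ref{conn_cont}, but now compare $\rho_{AB}$ to the reference state $\sigma_{AB} = 1_A \otimes 1_B/(\chi_A\chi_B)$, which is natural in the regime $\chi_A\chi_B \leq \chi_C$ where Page's theorem suggests $\rho_{AB}$ is close to maximally mixed. Since $\sigma_{AB}^{m/2}$ is proportional to the identity, the corresponding state $|\psi^{(m)}_\sigma\rangle$ is the product maximally entangled state $|\Phi^+\rangle_{AA^\star}\otimes|\Phi^+\rangle_{BB^\star}$ for every $m$, and after tracing out $BB^\star$ one finds $\sigma^{(m)}_{AA^\star} = |\Phi^+\rangle\langle\Phi^+|_{AA^\star}$, a pure state with $S_n(\sigma^{(m)}_{AA^\star}) = 0$ for all $n$. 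Thus the R\'enyi Fannes--Audenaert bound used in Lemma~\ref{conn_cont} gives directly
\begin{equation}
S_n\bigl(\rho^{(m)}_{AA^\star}\bigr) \;\leq\; \frac{n\,\chi_A^{2(n-1)}}{n-1}\,T, \qquad T \equiv \bigl\|\rho^{(m)}_{AA^\star} - \sigma^{(m)}_{AA^\star}\bigr\|_1,
\end{equation}
and the minimum over $\chi_A$ and $\chi_B$ comes from the $A\leftrightarrow B$ symmetry of the setup.

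Next I would bound the trace distance by fidelity, $T \leq 2\sqrt{1-F^2}$, and use monotonicity of fidelity under the partial trace over $BB^\star$ to reduce to the purified fidelity
\begin{equation}
F^2 \;\geq\; \bigl|\langle \psi^{(m)}_\sigma \mid \psi^{(m)}_\rho \rangle\bigr|^2 \;=\; \frac{\bigl({\rm Tr}\,\rho_{AB}^{m/2}\bigr)^2}{\chi_A\chi_B\,{\rm Tr}\,\rho_{AB}^m}\,.
\end{equation}
Writing the right side in terms of R\'enyi entropies, $F^2 = (\chi_A\chi_B)^{-1}\exp\bigl[(2-m)S_{m/2}(\rho_{AB}) - (1-m)S_m(\rho_{AB})\bigr]$, and applying $1-e^{-x} \leq x$ together with the R\'enyi monotonicity $S_{m/2}(\rho_{AB}) \geq S_1(\rho_{AB}) \geq S_m(\rho_{AB}) \geq S_2(\rho_{AB})$ valid for $1\leq m\leq 2$, the non-positive coefficient $(1-m)$ in front of $S_m$ combines with the $(2-m)S_{m/2}$ term to yield an $m$-independent estimate
\begin{equation}
1 - F^2 \;\leq\; \ln(\chi_A\chi_B) - S_2(\rho_{AB})\,.
\end{equation}

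Finally, I would take the Haar average, apply Jensen's inequality in the form $\overline{\sqrt{1-F^2}} \leq \sqrt{\overline{1-F^2}}$, and use the Lubkin bound~\eqref{lubkin} --- applicable with the roles of $AB$ and $C$ exchanged since $|\psi\rangle_{ABC}$ is a pure Haar-random state and the assumption $\chi_A\chi_B \leq \chi_C$ makes this the natural orientation --- to get
\begin{equation}
\overline{T} \;\leq\; 2\sqrt{\overline{\ln(\chi_A\chi_B) - S_2(\rho_{AB})}} \;\leq\; 2\sqrt{\chi_A\chi_B/\chi_C}\,.
\end{equation}
Plugging this into the R\'enyi Fannes estimate above produces the stated bound for $n>1$. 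For the $n=1$ (von Neumann) case, I would swap the R\'enyi Fannes step for the ordinary Fannes inequality applied to $\rho_{AA^\star}$, which picks up the usual Shannon correction $-T\ln T -(1-T)\ln(1-T) \leq \ln 2$ exactly as in Lemma~\ref{conn_cont}. The main subtlety I anticipate is the algebraic step where the $(2-m)$ and $(1-m)$ coefficients conspire under R\'enyi monotonicity to give an $m$-independent bound controlled uniformly by $\ln(\chi_A\chi_B) - S_2(\rho_{AB})$; once that is in hand, the rest is a direct mirror of the Lemma~\ref{conn_cont} argument with the different reference state.
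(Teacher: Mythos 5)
Your proposal is correct and follows essentially the same route as the paper's proof: comparing to the canonical purification of the maximally mixed state $1_{AB}/(\chi_A\chi_B)$ (whose $AA^\star$ reduction is a minimal projection with vanishing entropy), bounding the trace distance by the overlap $({\rm Tr}\,\rho_{AB}^{m/2})^2/(\chi_A\chi_B\,{\rm Tr}\,\rho_{AB}^m)$, using R\'enyi monotonicity for $1\leq m\leq 2$ to reduce to $\ln(\chi_A\chi_B)-S_2(\rho_{AB})$, and finishing with Jensen and the Lubkin bound in the orientation appropriate to $\chi_A\chi_B\leq\chi_C$. The only difference is that you spell out the convex-combination step $(2-m)S_{m/2}+(m-1)S_m\geq S_2$ explicitly, which the paper leaves implicit.
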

\begin{proof}
	We use the same proof idea as in Lemma~\ref{conn_cont}, except now one of our states in the Fannes-like inequalities is the reduced state $\tilde{\rho}_{AA^\star}$ associated to the canonical purification for the maximally mixed state $\widetilde{\rho}_{AB} = 1_{AB}/(\chi_A\chi_B)$. Of course the reflected entropy vanishes for this state, and thus $\tilde{\rho}_{AA^\star}$ is a minimal projection. Again the trace distance can be estimated in terms of the overlap $|\braket{ \psi^{(m)} | \widetilde{\rho}_{AB}^{1/2}}|^2 = ({\rm Tr} \rho_{AB}^{m/2})^2/( \chi_A \chi_B {\rm Tr} \rho_{AB}^m)$. For $1 \leq m \leq 2$ we apply monotonicity $S_{m/2} \geq S_2$ and $S_m \geq S_2$ and finally we use the Lubkin bound (for the case $\chi_A \chi_B > \chi_C$). This gives the stated results. 
\end{proof}

Note that if we simply set $m=1$ we see this constrains the \emph{answer} that we are interested in. Both the reflected entropy and the R\'enyi reflected entropies are provably non-perturbatively small, i.e. power law suppressed in $\chi_C$, for $ x_A + x_B < 1$ and for $n > 1$ (using monotonicity with $n$).

\subsection{Phase diagram and R\'enyi reflected entropies} \label{sec:phase_nm} In order to proceed we will use the replica trick at $(m/2,n)$ integer to evaluate the $(m,n)$-R\'enyi reflected entropy. We will find a phase diagram as a function of the bond dimensions and then give some arguments for how to analytically continue this phase diagram away from the integers. 

The integer moments of the density matrix for the reflected entropy reads (up to the usual small corrections from the treatment of the normalization): 
\begin{align}
	\overline{{\rm Tr}( \rho^{(m)}_{AA^\star})^n} = \frac{\sum_{g\in S_{mn}} \exp \left( - \ln \chi_C ( x_A d( g, g_A) + x_B d(g, g_B) + d(g,e) )\right)} {\left(\sum_{g\in S_{m}} \exp\left( - \ln \chi_C ( (x_A + x_B) d(g, \tau_m) + d(g,e) ) \right)\right)^n} 
\end{align}
We now simply find the phase diagram (as a function of the various bond dimensions) at fixed integer $(m/2,n)$. At each point in the phase diagram some set of $g$ dominates.
We will find a picture of the phase diagram where at each point one of four possible elements dominates: $e, g_A, g_B$ and a new element we call $X$. 

To set the stage we are interested in minimizing the following free energy: 
\begin{equation}
	f(g) = x_A d(g,g_A) + x_B d(g,g_B) + d(g,e)\, \qquad g \in S_{mn} 
\end{equation}
where we have factored out $\ln \chi_C$. Since $f$ is a linear function of $x_A$ and $x_B$, any region of the phase diagram in the $(x_A \geq 0 ,x_B \geq 0)$ plane must be convex.\footnote{See Lemma \ref{lem:convexity} in Appendix \ref{app:phase_proof} for a proof.}
The Cayley distance, as a metric on $S_{mn}$, satisfies the triangle inequality $d(g_1,g_2)+d(g_2,g_3)\geq d(g_1,g_3)$.
It will be useful to introduce the geodesics in the Cayley that pass between two group elements $g_1, g_2$ as the set that saturates this inequality. We denote these elements using: 
\begin{equation}
	\Gamma(g_1,g_2) = \{ g \in S_{mn}: d(g_1,g) + d(g,g_2) = d(g_1,g_2) \} 
\end{equation}
See Appendix \ref{app:perm} for a discussion of elements on this geodesic. 

We now determine the two-dimensional phase diagram in terms of $(x_A,x_B)$. 
As a first step we proceed as follows. We solve for the dominating elements at various special points in the phase diagram. Then we fill in the undetermined regions by the convexity of the phase diagram. 
As a limiting case, it is easy to show that the identity element $e$ dominates at $x_A=x_B=0$ and $g_A(g_B)$ dominates at $x_A(x_B)\to\infty$. For the other regions:
\begin{itemize}
	\item Along the line $x_A + x_B = 1$: 
	\begin{align}
		f &= x_A ( d(g,g_A) + d(g,e)) + (1- x_A) ( d(g,g_B) + d(g,e)) \\
		& \geq x_A d(g_A,e) + (1- x_A) d(g_B,e) = n (m-1) 
	\end{align}
	with saturation is achieved for $ g \in \Gamma(g_A,e) \cap \Gamma(g_B,e)$. 
	
	\item Along the line $x_A = x_B + 1$: 
	\begin{align}
		f &= x_B ( d(g,g_A) + d(g,g_B)) + ( d(g,g_A) + d(g,e)) \\
		& \geq x_B d(g_A,g_B) + d(g_A,e) = 2(n-1) x_B + n (m-1) 
	\end{align}
	with saturation for $ g = g_A$ since this is the single element in $\Gamma(g_A,g_B) \cap \Gamma(g_A,e)$. 
	\item Along the line $x_B = x_A + 1$ is the same as above with $A \leftrightarrow B$. 
\end{itemize}
After applying convexity to the above results the only region left is for $x_A + x_B > 1$ and $1- x_B < x_A < 1 + x_B$. 
This is of course the main region of interest where the entanglement wedge is connected. 
Because there is no common geodesic element between all three elements (that is $\Gamma(g_A,e) \cap \Gamma(g_B,e) \cap \Gamma(g_A,g_B)$ is empty), there is no simple argument that determines the phase in this region.\footnote{If there was a common geodesic element then this element would sit on all three edges of the region under consideration - so we could then fill the phase diagram inside the region with this common element using convexity.} This is the main difference between our calculation and the negativity computations in \cite{Shapourian:2020mkc,Dong:2021oad,Vardhan:2021npf}.
Instead we must seek other methods for determining the phase in this region.
Indeed, there exists a proof for this main region, providing a complete picture of the phase diagram.
However, because the proof is relatively involved, we present it in Appendix \ref{app:phase_proof} for interested readers.
For now we just summarize the results we find:

\begin{itemize}
    \item There exists a new phase with a non-trivial element, which we denote $X$, in the triangular region $x_A + x_B \geq 1, x_A < 1+x_B (1-2/n), x_B < 1+ x_A (1-2/n)$.
    $X$ is a group element with the special property that it lies on the joint geodesic $ \Gamma(g_A,e) \cap \Gamma(g_B,e)$ while being closest to $g_A$ or $g_B$.
    It has the following form:
    \begin{equation}
        X = \prod^n_{i=1} (\tau^{U}_{m/2})^{[i]_n}(\tau^{D}_{m/2})^{[i]_n}
    \end{equation}
    where $\tau^U_{m/2}$ is the cyclic permutation that acts on the upper half of each \emph{puddle} (that is it cyclically permutes elements $\beta = 1,\cdots,m/2$ within each fixed puddle) and $\tau^D_{m/2}$ is the full cyclic permutation on the lower half (cyclically permuting elements $\beta = m/2+1,\cdots,m$ within each fixed puddle). See the discussion around \figref{fig:g_A_g_B} to unpack this notation.
    We give a graphical representation for the element $X$ in \figref{fig:X}.
    \begin{figure}[h] 
        \centering 
    	\includegraphics[width=.7
    	\textwidth]{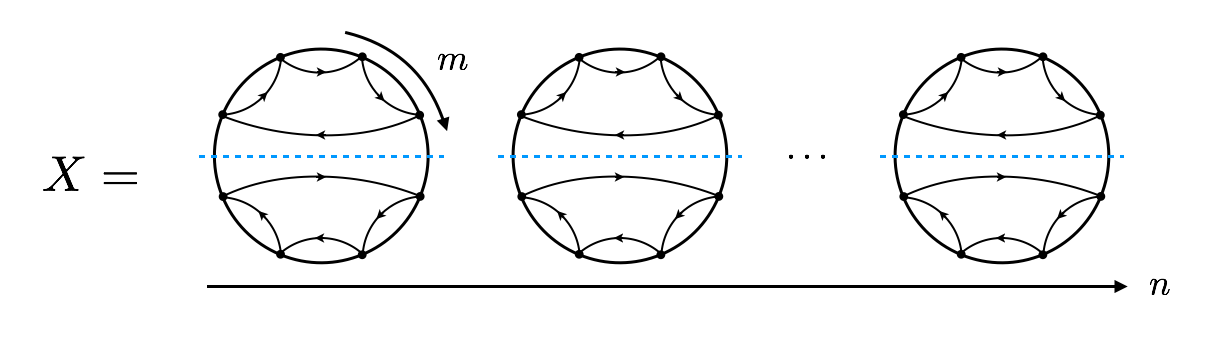} 
    	\caption{A graphical representation of the special element $X$, using the same conventions as Fig~\ref{fig:g_A_g_B}.} 
      \label{fig:X} 
    \end{figure}
    \item The phase of $g=g_A$ smoothly extends into the main region and occupies the upper-diagonal wedge $x_A>x_B, x_A > 1+x_B (1-2/n)$.
    The same is also true for $g_B$, where it occupies the lower-diagonal wedge.
    At the boundary where the two phases meet (i.e. $x_A = x_B$ and $x_A>n/2$) we have a large degeneracy shared by a complicated set of elements.\footnote{For detailed description of this set please see Appendix \ref{app:2nd_resum}.}
    \item  No other elements are dominant in the space between the co-existence boundaries. 
\end{itemize}
We give an example of the phase diagram in \figref{fig:1site_phase}.
\begin{figure}
	[h] \centering 
	\begin{overpic}
		[width=0.35\linewidth]{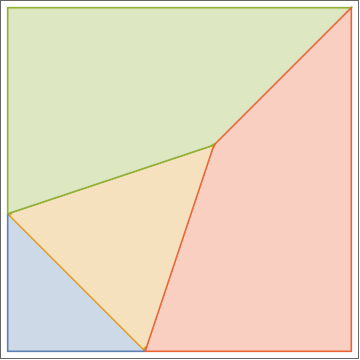} \put(-20,50){$x_A$} \put(50,-20){$x_B$} \put(-7,39){$1$} \put(40,-8){$1$} \put(10,10){$e$} \put(30,30){$X$} \put(25,70){$g_A$} \put(70,25){$g_B$} 
	\end{overpic}
	\vspace{1.5cm} 
	\caption{An example of the phase diagram of the single tripartite tensor. The tip of the $X$ triangle lies at $(n/2,n/2)$.} 
\label{fig:1site_phase} \end{figure}

We now extract the consequences for $S_R$ and its R\'enyi versions. For the disconnected entanglement wedge with $x_A + x_B < 1$, there is only ever one phase for all $(n,m)$ so it is reasonable that this remains the case upon analytic continuation. Indeed this always gives $S_R^{(m,n)} = 0$ upon correctly normalizing the free energy. This is of course consistent with Lemma~\ref{lem2}.

From now on we consider only $x_A + x_B > 1$. The free energies of the four possible phases, including the normalization subtraction (which is different to the disconnected phase) are: 
\begin{align}
\begin{split}
    F(e)/\ln \chi_C &= (x_A + x_B-1) (m-1) n\,, \quad  F(X)/\ln \chi_C = - n + (x_A + x_B) n \\
	F(g_A)/\ln \chi_C &= 2 x_B (n-1) \,, \qquad\qquad\qquad  F(g_B)/\ln \chi_C = 2 x_A (n-1) 
\end{split}
\end{align}

We make some observations. Firstly, at fixed $(x_A,x_B)$ in certain regions of the phase diagram there is a new novel behavior where there is a phase transition as a function of $n$. This indicates that the entanglement spectrum is \emph{not flat} for the reflected density matrix, as compared to the usual reduced density matrix of a random pure state. Secondly, if we do a very naive analytic continuation away from the integers, where we include all four phases and re-minimize at each $(n,m)$ we arise at some puzzling results. For example at $m=1$ and fixed $n>1$ the $e$ phase has minimal free energy and: 
\begin{equation}
\label{naivee}
	\left. S_R^{(1,n)}\right|_{\rm naive} = 0\,, \qquad n > 1 
\end{equation}
Which seems to imply that taking $\lim n \rightarrow 1$, $S_R(A:B) = 0$ for $x_A + x_B > 1$ which cannot be the case. In particular it violates the mutual information bound $S_R \geq I$. Previously this was attributed to an order of limits issue \cite{Kusuki:2019evw} where one must take $\lim_{m \rightarrow 1} \lim_{n \rightarrow 1}$ to get the right answer - indeed here that resolution does work. However this result gives a very puzzling set of R\'enyi entropies, which were naively still consistent. However we can now observe using Lemma~\ref{conn_cont} that they are in fact not consistent, since for $m = 2$, we see that either $F(X)=F(e)$ or $F(g_{A,B})$ dominates and neither gives the same reflected entropy as the naive $m=1$ phase with $g=e$ given in \Eqref{naivee}. This is inconsistent with the continuity bounds of Lemma~\ref{conn_cont}, at least in some window of parameters. 

Note that one way to understand the origin of this latter issue is as follows. The Cayley distance between $X$ and $e$ reads 
\begin{equation}
	d(X,e) = n(m-2) 
\end{equation}
For $m<2$ this is negative. The effect of this negative tension is to exchange the two phases $e\leftrightarrow X$ in the phase diagram \figref{fig:1site_phase}. This leads to the dominance of the $e$ phase even deep in the phase diagram where the entanglement wedge is connected. 

The resolution that we land on is to simply discard the $F(e)$ saddle for $x_A + x_B > 1$. This is reasonable since it never appears in the $x_A + x_B > 1$ part of the phase diagram for integer $(n,m/2)$. This will now give a different answer for the R\'enyi entropies as compared to the naive analysis.
We motivate this prescription as follows.   Rather than analytically continue the R\'enyi entropies we will analytically continue the \emph{spectrum} of $\rho_{AA^\star}^{(m)}$ from $m>2$ down to $m=1$. This has the same effect of discarding the $e$ saddle. In fact the analytic continuation in $m$ is trivial since there is simply no dependence on $m$ in the free energies of $X,g_A,g_B$. This is now consistent with the bounds we derived in Lemma~\ref{conn_cont}. The R\'enyi entropies read: 
\begin{equation}
\label{possbl}
	S_R^{(m,n)} = \ln \chi_C \times \min\left\{ 2 x_B, 2x_A, \frac{n}{n-1} (x_B + x_A -1) \right\}\, \qquad n \geq 1 
\end{equation}
and the entanglement spectrum defined via: 
\begin{equation}
	{\rm Tr} (\rho_{AA^\star}^{(m)})^n = \int d \lambda \lambda^{n} D(\lambda) 
\end{equation}
is given by: 
\begin{equation}
\label{D1D0}
	D(\lambda) = D_1 \delta (\lambda - \lambda_1)+ \delta( \lambda - \lambda_0) \,, \qquad D_1 = {\rm min}\{ \chi_C^{2 x_A}, \chi_C^{2 x_B}\} 
\end{equation}
where $\lambda_0 = \max \{\chi_C^{ 1- x_A - x_B},\lambda_1\}$ and $\lambda_1 = {\rm max}\{ \chi_C^{-2 x_A}, \chi_C^{-2 x_B}  \}$. \footnote{Note that for $x_A < x_B - 1$ then $n(x_B+x_A-1)> 2 n x_A > 2 (n-1) x_A$  so the last possibility in \Eqref{possbl} is never dominant for $n\geq1$. We interpret this to say that the single eigenvalue $\lambda_0$ gets absorbed by the large set of $D_1$ eigenvalues at $\lambda_1$ in this particular part of the phase diagram. We have written the spectrum in a way that is consistent with this.} 

This is a crude approximation to the actual spectrum that we will compute in the next section. One obvious crude feature, for example, is that it has one more than the allowed number of non-zero eigenvalues, ${\rm min}\{ \chi_A^2, \chi_B^2\}$, although this is clearly a small correction in the large $\chi_C$ limit.

We now turn to an examination of the phase transition near $x_A + x_B = 1$. This will give some further evidence for the prescription given above, and will also give a more complete picture of the reflected entropy and spectrum. 
It is clear from the above discussion that there are potentially many $g$ elements that we need to sum up in order to compute the cross-over corrections. While we might attempt to directly sum over a class of $g$ this approach turns out to be somewhat ad-hoc.
The main difficulty is that certain classes of elements can be more important at different values of $n$, so the issue of which elements to include is mixed up with the question of how to analytically continue in $n$. This is in turn related to the fact that the spectrum itself does not have a uniform limit as $\chi \rightarrow \infty$, being made up of distinct contributions.
Instead, in the next few sections, we will develop a diagrammatic approach that attempts to directly extract the spectrum. Having said this we briefly discuss an explicit sum over elements in Appendix~\ref{app:2nd_resum}, where we do find consistent results.

\subsection{Schwinger-Dyson for entanglement entropy}
\label{sec:EE}

In this section we will present a standard diagrammatic technique for computing the entanglement spectrum of a Haar random pure state on a bipartite Hilbert space $AB$ reduced to $A$ \cite{Page:1993df}. This is of course the setting for the Page curve, so these results are very well known. We review it here just to setup notation used in the next section for the reflected entanglement spectrum problem.

The approach we present was first proposed by Ref.~\cite{Jurkiewicz08,BREZIN1993613,BREZIN1995531}. It was recently applied to find the entanglement entropy of JT gravity \cite{Penington:2019kki} and generic fixed area states \cite{Akers:2020pmf}. Similar techniques were used to calculate the negativity \cite{Kudler-Flam:2021efr,Dong:2021oad}. This powerful approach enables us to write down the Schwinger-Dyson equation for the \textit{resolvent} of $\rho_{AB}$, which then gives full information about the entanglement spectrum. 

A density matrix $\ket{\psi}\bra{\psi}$ is represented as a four-legged tensor 
\begin{equation}
	\includegraphics[width=0.33
	\textwidth]{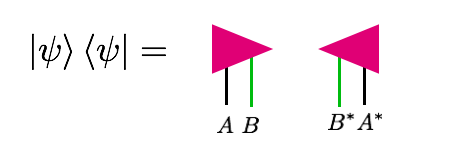} 
\end{equation}
where we use black (green) lines to represent the subspace associated to $A$ ($B$). We can form the reduced density matrix by tracing over the $B$ indices 
\begin{equation}
	\includegraphics[width=0.42
	\textwidth]{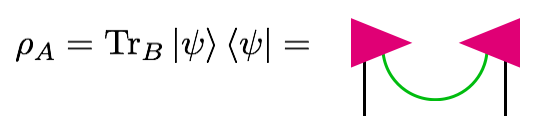} 
\end{equation}
Averaging over random states is accomplished by means of pair contractions between bras and kets. In \secref{sec:RTN} we reviewed the fact that an average over (the $m$-th power of) a Haar random state can be recast as a summation over permutation group elements $g\in S_m$. In terms of diagrams these permutations become pairwise contractions between the bras and kets: 
\begin{equation}
	\includegraphics[width=0.7
	\textwidth]{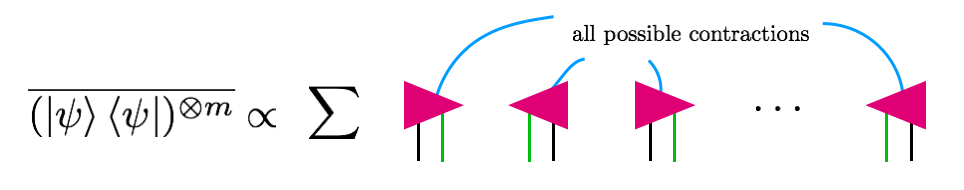} 
\end{equation}
A single pair contraction is defined by simply connecting the corresponding legs: 
\begin{equation}
	\includegraphics[width=0.45
	\textwidth]{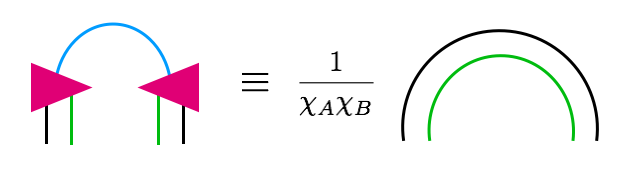} 
\end{equation}
A solid black (green) line represents the Kronecker delta.
The purpose of the $(\chi_A\chi_B)^{-1}$ factor associated to each contraction is to maintain the correct normalization of the density matrix. The element $g\in S_m$ is recovered by tracing out the contractions from bras to kets. This allows one to recover the Haar averaged expression \Eqref{eq:IsingAction}.

As a simple example, let us compute the second R\'enyi entropy (purity) of the reduced state using this method. Connecting the lower legs appropriately we have 
\begin{equation}
	\includegraphics[width=0.7
	\textwidth]{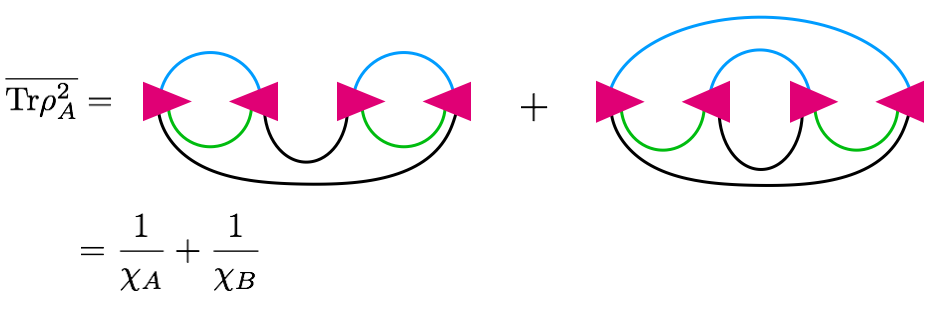} 
\end{equation}
where we have used the fact that a full contraction of black (green) loop gives a factor of $\chi_A$ ($\chi_B$), the corresponding Hilbert space dimension.

To get the full eigenvalue distribution of $\rho_{A}$ we will make use of the resolvent trick. The resolvent $R(\lambda)$ for the density matrix $\rho_{A}$ is defined formally as 
\begin{align}
	R(\lambda)=\tr\left( \frac{1}{\lambda-\rho_A} \right) 
\end{align}
from which one can extract the eigenvalue distribution function $D(\lambda)$ using 
\begin{align}
	D(\lambda) &= -\frac{1}{\pi}\lim_{\epsilon \rightarrow 0} \text{Im} R(\lambda+i\epsilon) 
\end{align}
To evaluate $R(\lambda)$, we expand the matrix inverse around $\lambda=\infty$: 
\begin{equation}
	R_{ij}(\lambda)=\frac{\chi_{A}}{\lambda}\delta_{ij}+\sum_{m=1}^{\infty} \frac{ (\rho_{A}^m)_{ij}}{\lambda^{m+1}}. 
\end{equation}
In terms of diagrams, this is 
\begin{equation}
	\includegraphics[width=0.85
	\textwidth]{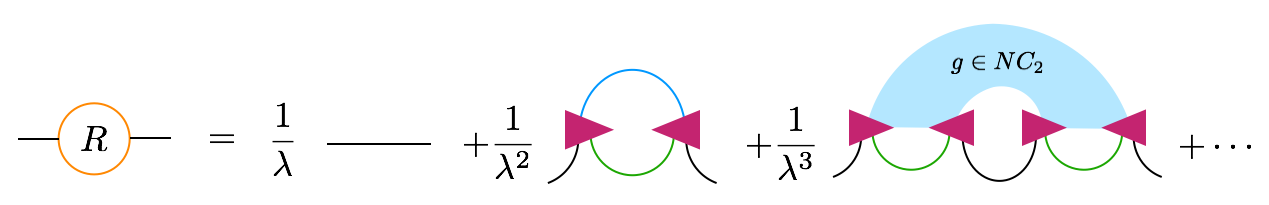}, 
\end{equation}
where we have restricted our summation to planar diagrams only, or equivalently, over all the corresponding non-crossing permutations. The contribution of non-planar diagrams are always suppressed in the limit of the large bond dimension. 
\begin{figure}
	[h] \centering 
	\includegraphics[width=0.8
	\textwidth]{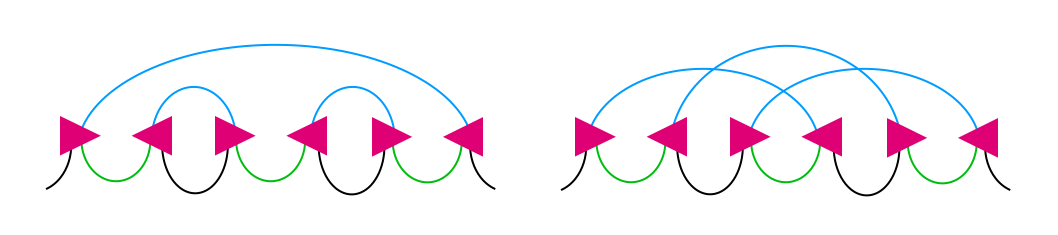} 
	\caption{(left) An example of the planar diagram in the calculating $\rho^3$, featuring a non-crossing permutation for the contraction. (right) An example of a non-planar diagram. The contribution is always suppressed by powers of $\chi$.} 
\end{figure}

Now denote $F_{ij}(\lambda)$ to be the connected part of the resolvent, defined by 
\begin{equation}\label{eq:resolv2} 
	\includegraphics[width=0.8
	\textwidth]{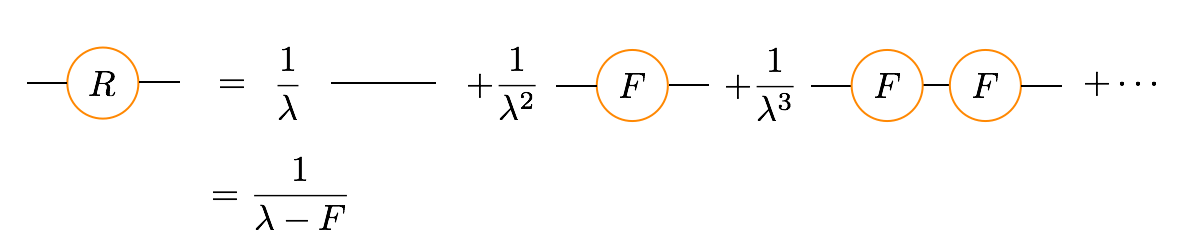} 
\end{equation}
It allows us to write down a Schwinger-Dyson equation for $R_{ij}(\lambda)$ and $F_{ij}(\lambda)$: 
\begin{equation}
	\includegraphics[width=
	\textwidth]{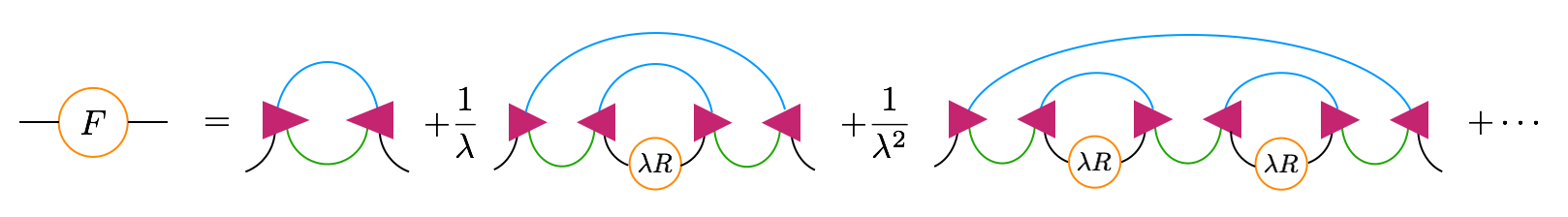} 
\end{equation}
which after taking the trace gives rise to the following algebraic equation 
\begin{align}
	\tr F_{ij}(\lambda) = 1 + \sum^{\infty}_{m=1}\frac{R(\lambda)^m}{(\chi_A\chi_B)^m} 
	= \frac{1}{1-\frac{R(\lambda)}{\chi_A\chi_B}} 
\end{align}
Combining this with \Eqref{eq:resolv2}, and using the fact that both $R$ and $F$ are proportional to the identity matrix (as implied by contraction rules) we obtain a quadratic equation for $R(\lambda)$: 
\begin{align}
	\frac{\lambda R^2(\lambda)}{\chi_A\chi_B}+ \left( \frac{1}{\chi_A}-\frac{1}{\chi_B}-\lambda\right)R(\lambda) + \chi_A = 0 
\end{align}
Solving for $R(\lambda)$ and picking the square root sign by demanding the correct behavior at $\lambda\to \infty$ we find 
\begin{align}
	R(\lambda) = \frac{\chi_A\chi_B}{2\lambda}\left( \lambda-\frac{1}{\chi_A}+\frac{1}{\chi_B} - \sqrt{(\lambda-\lambda_-)(\lambda-\lambda_+)} \right), \quad \lambda_\pm = \left( \frac{1}{\sqrt{\chi_A}}\pm \frac{1}{\sqrt{\chi_B}} \right)^2 
\end{align}
and 
\begin{equation}
	D(\lambda) = \frac{\chi_A\chi_B}{2\pi\lambda}\sqrt{(\lambda-\lambda_-)(\lambda_+-\lambda )} + \delta(\lambda)(\chi_A-\chi_B)\theta(\chi_A-\chi_B). 
\end{equation}
Thus, we have found that the eigenvalue spectrum consists of a large peak within the range $[\lambda_-,\lambda_+]$ and a number of extra zero eigenvalues when $\chi_A>\chi_B$. This is the famous \emph{Marchenko-Pastur (MP) distribution}. It describes the singular value distribution of a random $m\times n$ matrix when both $m,n\gg 1$. This result was first shown by Page \cite{Page:1993df}. Using the spectrum, the entanglement entropy is given by 
\begin{align}
	S(A) = -\tr \rho_{A} \ln \rho_A = 
	\begin{cases}
		\log(\chi_A) - \frac{\chi_A}{2\chi_B}, \quad &\chi_A < \chi_B \\
		\log(\chi_B) - \frac{\chi_B}{2\chi_A}, \quad &\chi_B < \chi_A. \\
	\end{cases}
\end{align}

\subsection{Schwinger-Dyson for
reflected entropy} \label{sec:SD} We now move on to our main problem. We want to find the resolvent for the reduced density matrix $\rho_{AA^*}$ obtained from the R\'enyi generalization of the canonically purified state $\ket{\psi^{(m)}}$ defined in \Eqref{eq:psim}, i.e., 
\begin{equation}
	R_m(\lambda) = \frac{\chi^2_{A}}{\lambda} + \sum^\infty_{n=1} \frac{\tr (\rho^{(m)}_{AA^*})^n}{\lambda^{n+1}} 
\end{equation}
where the integer moments of the (normalized) density matrix are given by
\begin{equation}\label{eq:reflected_moments} 
	\overline{\tr(\rho^{(m)}_{AA^*})^n} = \frac{Z_n}{(Z_1)^n}, \quad Z_n = \frac{\sum_{g\in S_{mn}}\chi_A^{
	\#gg^{-1}_A}\chi_B^{
	\#gg^{-1}_B}\chi_C^{\#(g)}}{\sum_{g\in S_{mn}}(\chi_A\chi_B\chi_C)^{\#(g)}} 
\end{equation}
We would like to evaluate this sum in the limit where the bond dimensions $\chi_A,\chi_B,\chi_C$ are taken to be large and the ratio $q\equiv \chi_A\chi_B/\chi_C$ and $r\equiv \chi_A/\chi_B$ are held fixed. We will do so by a diagrammatic technique, keeping track of the important diagrams. Ultimately we are interested in the analytic continuation $m\to 1$, where the normalized factor $Z_1 = \tr (\rho^{m}_{AB}) \to 1$, so it suffices to only consider the numerator partition function $Z_n$ for now. We will later discuss how to restore correct normalizations in our calculation for general $m$. Most of the notation in the previous section carries over, except that our state $\ket{\psi}$ is now a three-legged tensor.
\begin{figure}
	[h!] \centering 
	\includegraphics[width=.8
	\textwidth]{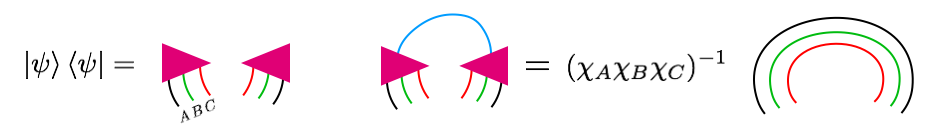} 
\end{figure}
Note that the denominator in $Z_n$ will be approximated as $(\chi_A \chi_B \chi_C)^{mn}$ from the identity group element, which dominates for large bond dimensions. We can easily account for this factor in the diagrammatic expansion by dividing by $\chi_A \chi_B \chi_C$ for each contraction as shown in the figure above.

We now setup a slightly more general problem. Consider the following $2 \times 2$ ``matrix'' of resolvents:\footnote{All powers of the (involuted) density matrix $\rho^{(m)}_{AA^*}, (\rho^{m/2}_{AB})^\Gamma, etc...$ appearing in the resolvent calculation, unless otherwise noted, refer to the Haar averaged version $\overline{\rho^{(m)}_{AB}}, \overline{(\rho^{m/2}_{AB})^\Gamma} $ etc.} 
\begin{equation}
	\mathbb{R}(\lambda) = \sum_{k=0}^\infty \lambda^{-1-k/2} 
	\begin{pmatrix}
		0 & (\rho^{m/2}_{AB})^{\Gamma^\dagger} \\
		(\rho^{m/2}_{AB})^\Gamma & 0 
	\end{pmatrix}
	^k , 
\end{equation}
where $\rho \rightarrow \rho^\Gamma$ is an involution defined to take a linear operator $\rho$ on $\mathcal{H}_{AB}$ to a linear operator from $\mathcal{H}_{AA^\star} \rightarrow \mathcal{H}_{BB^\star}$. It is defined by re-arranging the incoming/outgoing legs in the obvious (and canonical/basis independent) way. In fact it is the same as the correspondence between the Choi state $\tau$ of a channel (here, the would be channel maps $\mathcal{B}(\mathcal{H}^\star_A)$ to $\mathcal{B}(\mathcal{H}_B)$) and the linear representation of the channel via a transfer matrix $\tau^\Gamma$. Similarly $\rho \rightarrow \rho^{\Gamma^\dagger}$ spits out an operator from $\mathcal{H}_{BB^\star} \rightarrow \mathcal{H}_{AA^\star}$ which is also the adjoint operator. Each insertion of $(\rho^{m/2}_{AB})^{\Gamma}$ (and $(\rho^{m/2}_{AB})^{\Gamma^\dagger}$) has $m/2$ replicas of each of the bra and ket of the original random state $\ket{\psi}$. Note that $\mathbb{R}(\lambda) \in \mathcal{B}( \mathcal{H}_{AA^\star} \oplus \mathcal{H}_{BB^\star} )$. These involuted density operators are related to the canonical purified density matrix by 
\begin{equation}
	(\rho^{m/2}_{AB})^{\Gamma^\dagger} (\rho^{m/2}_{AB})^\Gamma = \rho^{(m)}_{AA^\star} \,, \qquad (\rho^{m/2}_{AB})^\Gamma (\rho^{m/2}_{AB})^{\Gamma^\dagger} = \rho^{(m)}_{BB^\star} .
\end{equation}
up to an overall normalization factor that we will correct for later. 
In terms of diagrams, this is (e.g. for $m=6$) 
\begin{equation}
	\includegraphics[width=0.7
	\textwidth]{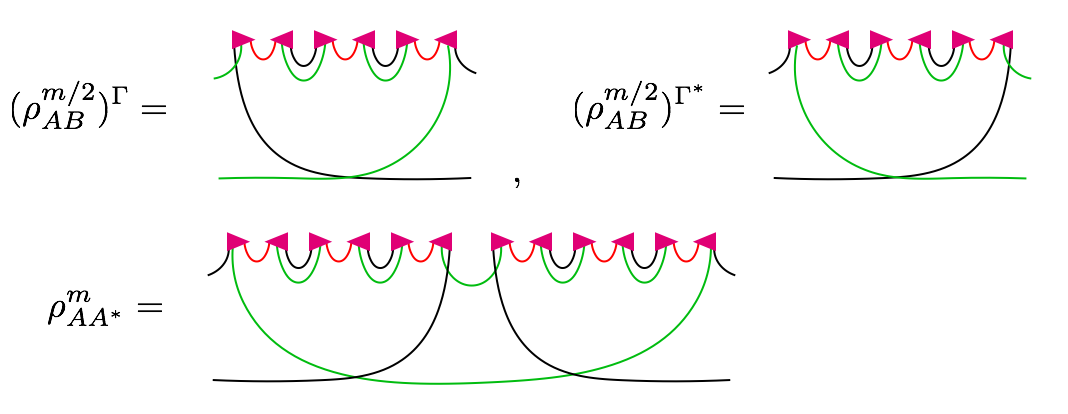}. 
\end{equation}
In the following calculations we will represent them diagrammatically using a short-hand: 
\begin{equation}
	\includegraphics[width=0.6
	\textwidth]{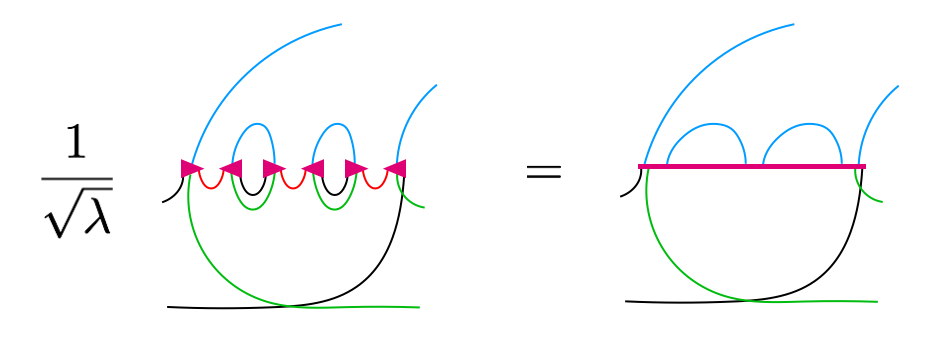} . 
\end{equation}
From these results we can infer an alternative expression for matrix $\mathbb{R}(\lambda)$: 
\begin{align}
	\mathbb{R}(\lambda) &= \sum_{n=0}^\infty \lambda^{-1-n} 
	\begin{pmatrix}
		(\rho_{AA^\star}^{(m)})^n & 0 \\
		0 & (\rho_{BB^\star}^{(m)})^n 
	\end{pmatrix}
	 \\ & \qquad \qquad +\sum_{n=1}^\infty \lambda^{-1/2-n} 
	\begin{pmatrix}
		0 & (\rho_{AA^\star}^{(m)})^{n-1} (\rho^{m/2}_{AB})^{\Gamma^\dagger} \\
		(\rho_{BB^\star}^{(m)})^{n-1} (\rho^{m/2}_{AB})^\Gamma & 0 
	\end{pmatrix}
	, 
\end{align}
which diagrammatically looks like 
\begin{equation}
	\includegraphics[width=.9
	\textwidth]{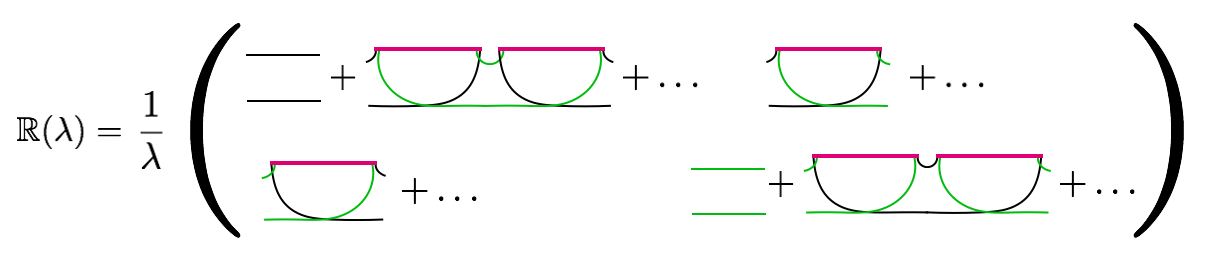}. 
\end{equation}
Thus the actual resolvent we are interested in is 
\begin{equation}
	R(\lambda) = {\rm Tr}_{AA^\star} ( \mathbb{R}_{11}(\lambda) ). 
\end{equation}

Now, as in the calculation of the entanglement entropy, the Haar averaging over the states is done by summing over all possible pair contractions over bras and kets. We call each pattern of contractions a \emph{diagram}. We call a single copy of $(\rho^{(m/2)}_{AB})^{\Gamma}$ a \emph{puddle}. Each diagram corresponds to an element in $g \in S_{mk/2}$ where $k$ is the total number of puddles. A sub-diagram is a subset of puddles and associated contractions that act only inside this subset.
We say a diagram is \emph{connected} if the diagram cannot be split into more than one sub-diagrams each made up of a \emph{contiguous} set of puddles. Otherwise the diagram is \emph{disconnected}. 

Consider the connected part of $\mathbb{R}$, which we call $\mathbb{F}$. This corresponds to a sum over a subset of diagrams which are connected. We have 
\begin{align}\label{eq:F_def} 
	\mathbb{R}(\lambda) = \frac{1}{\lambda} + \frac{\mathbb{F}}{\lambda^2} + \frac{\mathbb{F}^2}{\lambda^3} + \ldots = \frac{1}{\lambda-\mathbb{F}} 
\end{align}
Note that $\mathbb{F}$ and $\mathbb{R}$ are constrained to take the following form: (all possible contractions give rise to this form, for example see \figref{fig:proj_id})
\begin{equation}
	\mathbb{F}(\lambda) = 
	\begin{pmatrix}
		G_{11} (1_{AA^\star} - e_A) + F_{11} e_A & F_{12} \left| \epsilon_A \right> \left< \epsilon_B \right| \\
		F_{21} \left| \epsilon_B \right> \left< \epsilon_A \right| & G_{22} (1_{BB^\star} -e_B) +F_{22} e_B 
	\end{pmatrix}
	, 
\end{equation}

\begin{figure}
    \centering
    \includegraphics[width=\textwidth]{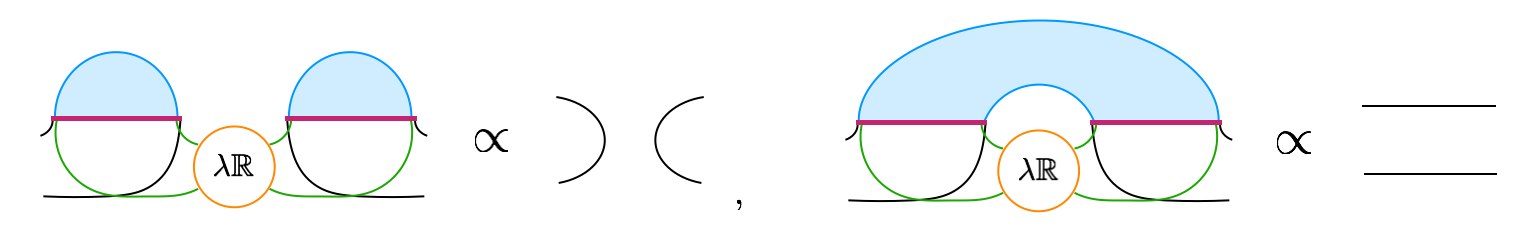}
    \caption{Example contractions of the diagram that give rise to the projector or the identity, where it should be clear that all the possible diagrams are either proportional to identity operator $1_{AA^*}, 1_{BB^*}$, or one of the four projectors $\ket{\epsilon_{A,B}}\bra{\epsilon_{A,B}}$.}
    \label{fig:proj_id}
\end{figure}
where $\ket{\epsilon_A} = \chi^{-1/2}_A\ket{1_A}$ is the maximally mixed state on $AA^*$ (same for $\ket{\epsilon_B}$), $e_A = \left| \epsilon_A \right> \left< \epsilon_A \right|$ and $e_B = \left| \epsilon_B \right> \left< \epsilon_B \right|$ are normalized minimal projectors and $F$ is a $2\times 2$ matrix of scalars (not to be confused with $\mathbb{F}$). Also $1_{AA^\star}$ is the identity acting on $\mathcal{H}_{AA^\star}$ etc. A similar form applies to 
\begin{equation}
	\mathbb{R}(\lambda) = 
	\begin{pmatrix}
		S_{11} (1_{AA^\star} -e_A) + R_{11} e_A & R_{12} \left| \epsilon_A \right> \left< \epsilon_B \right| \\
		R_{21} \left| \epsilon_B \right> \left< \epsilon_A \right| & S_{22} (1_{BB^\star}-e_B) +R_{22} e_B 
	\end{pmatrix}
	. 
\end{equation}
Meanwhile we must have $F_{12}^\star = F_{21}$ and $R_{12}^\star = R_{21}$ (for real $\lambda$). Then from \Eqref{eq:F_def} we obtain: 
\begin{align}\label{eq:SD-defining} 
	S_{11} = (\lambda - G_{11})^{-1}\,, \qquad S_{22} = (\lambda - G_{22})^{-1}\,, \qquad R_{ij}=(\lambda-F)^{-1}_{ij} , 
\end{align}
and the resolvent of interest is 
\begin{equation}
	R(\lambda) = \tr \mathbb{R}_{11} = (\chi^2_A-1)S_{11} + R_{11}. 
\end{equation}
We now seek a Schwinger-Dyson equation for $\mathbb{F}(\lambda)$. There will be many different diagrammatic contributions to $\mathbb{F}$. We can organize them by the number of $\mathbb{R}$ insertions: $k-1$ for $k=1,2 \ldots$. That is $\mathbb{F} = \sum_k \mathbb{F}^{(k)}$. Note that $k$ is also the number of external puddles in each diagram: 
\begin{equation}
	\includegraphics[width=.85
	\textwidth]{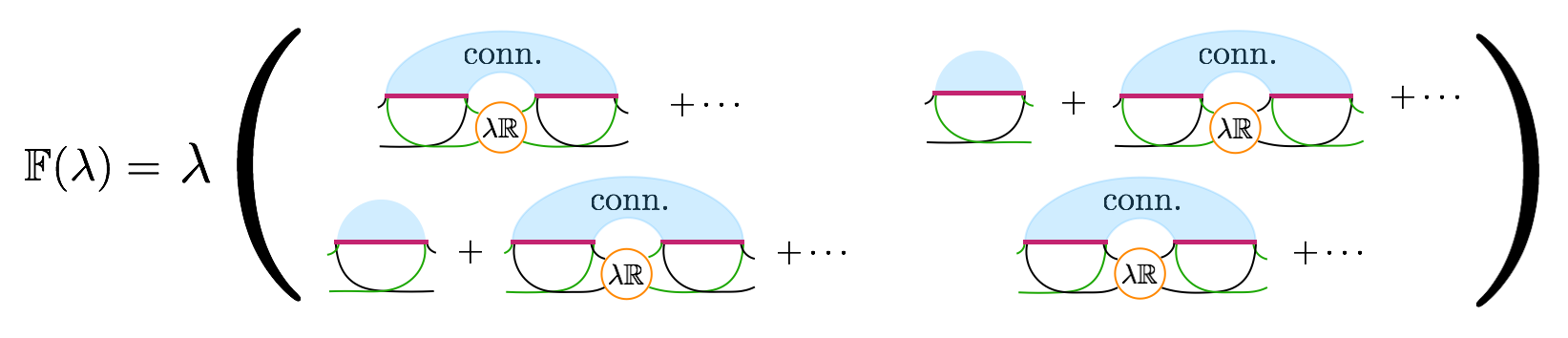} 
\end{equation}
where we have used the shorthand \textit{conn.} to indicate all the connected contractions.

The lowest order contribution ($k=1$) for $\mathbb{F}$ only features diagrams with a single disconnected puddle: 
\begin{equation}\label{eq:SD-order0} 
	\mathbb{F}^{(1)} = D_m\sqrt{\lambda} 
	\begin{pmatrix}
		0 & \ket{\epsilon_A}\bra{\epsilon_B} \\
		\ket{\epsilon_B}\bra{\epsilon_A} & 0 
	\end{pmatrix}
	, 
\end{equation}
where the number $D_m$ is defined as 
\begin{equation}\label{eq:dm} 
	\includegraphics[width=.8
	\textwidth]{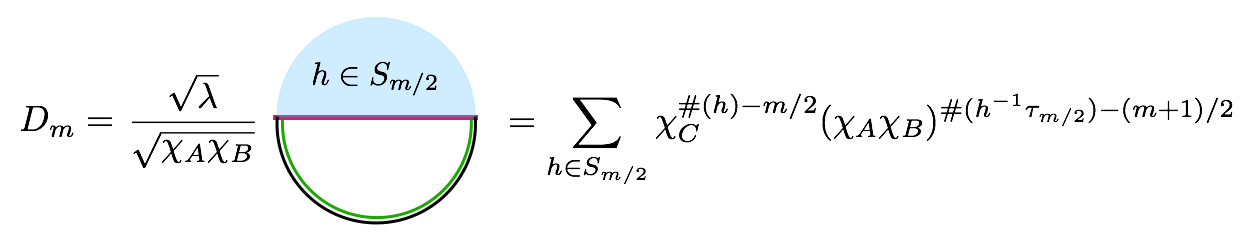}. 
\end{equation}
At large $\chi$, this sum is dominated by non-crossing permutations and we have $D_m\simeq \chi_C^{(m+1)/2} \sqrt{q} C_{m/2}(1/q) \sim O(\chi^{(m+1/2)}_C)$. Using \Eqref{eq:SD-defining} we can find the lowest order solution for the matrix $R$: 
\begin{align}\label{eq:R_sol_leading} 
	R_{ij} = \frac{1}{\lambda-D^2_m} 
	\begin{pmatrix}
		1 & \lambda^{-1/2}D_m \\
		\lambda^{-1/2}D_m & 1 
	\end{pmatrix}_{ij} + \ldots
\end{align}
This represents the single eigenvalue contribution we had originally around \Eqref{D1D0} by summing elements.  We will find significant corrections to this pole from the next order.

At the second order ($k=2$), we have diagrams that contain exactly one $\mathbb{R}$ insertion. Their contributions split into two parts according to the pattern of contractions between the first and last legs of each puddle. That is if we follow the contractions in the resulting diagram - following lines above and below puddles - the pattern of outer contractions tracks where the first and last legs connect to in the rest of the diagram.

For example, let's take a look at the diagonal component $\mathbb{F}^{(2)}_{11}$: 
\begin{equation}
	\includegraphics[width=.95
	\textwidth]{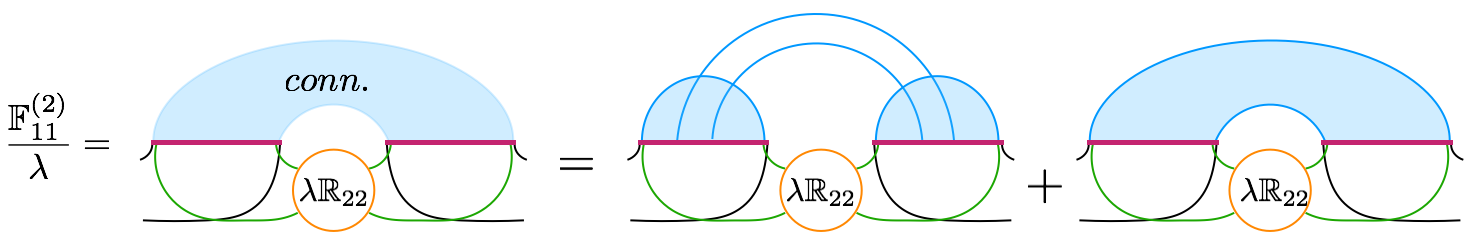}. 
\end{equation}
where the bold blue lines after the second equality represent the pattern/topology of outer contractions. By using this pattern to follow the green and black curves below the puddle, we can see if the contribution should be either proportional to the identity and thus contributing to $(G_{11})$ or a projector, contributing to $(F_{11})$. On the other hand, for the off-diagonal part of $\mathbb{F}^{(2)}$, both permutation types only contribute to the projector, e.g., 
\begin{equation}
	\includegraphics[width=
	\textwidth]{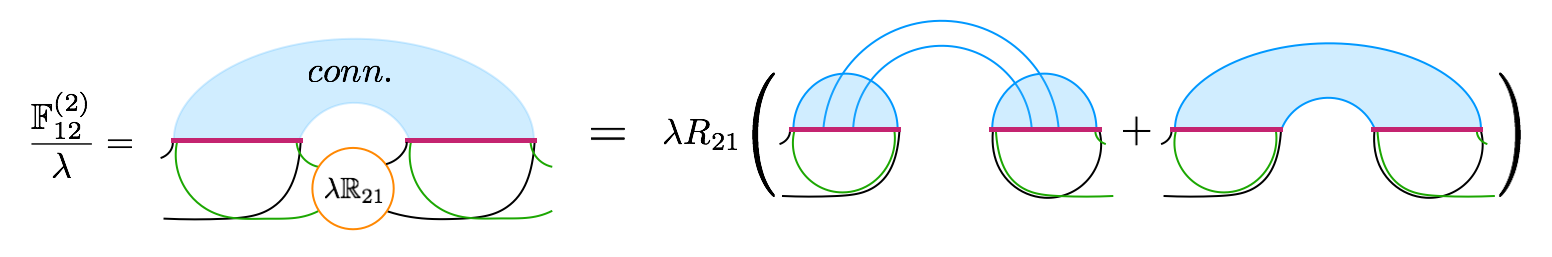} 
\end{equation}

For both $\mathbb{F}^{(2)}_{11}$ and $\mathbb{F}^{(2)}_{12}$, the first diagram shown above necessarily involve elements $h \in S_m$ that are crossing or higher genus, so these are naturally suppressed by powers of external bond dimensions.

We can compute these diagrams to find $\mathbb{F}^{(2)}$ written as: 
\begin{align}
	\begin{split}
		(\chi_A \chi_B)^2 G_{11}^{(2)} &= \lambda ( S_{22} (\chi_B^2-1) + R_{22}) B_m \,, \\
		(\chi_A \chi_B)^2 G_{22}^{(2)} &= \lambda ( S_{11} (\chi_A^2-1) + R_{11}) B_m\,, 
	\end{split}\label{eq:SD_order1} 
	\\
	\frac{(\chi_A \chi_B)^2}{\lambda} 
	\begin{pmatrix}
		F_{11} & F_{12} \\
		F_{21} & F_{22} 
	\end{pmatrix}
	^{(2)} &= 
	\begin{pmatrix}
		R_{22} & R_{21} \\
		R_{12} & R_{11} 
	\end{pmatrix}
	E_m + 
	\begin{pmatrix}
		S_{22} (\chi_B^2-1) & 0 \\
		0 & S_{11} (\chi_A^2-1) 
	\end{pmatrix}
	B_m , 
\label{eq:SD_order2} \end{align}
where the numbers $B_m$ and $E_m$ are defined by the following diagrams 
\begin{equation}\label{eq:bm} 
	\includegraphics[width=0.9
	\textwidth]{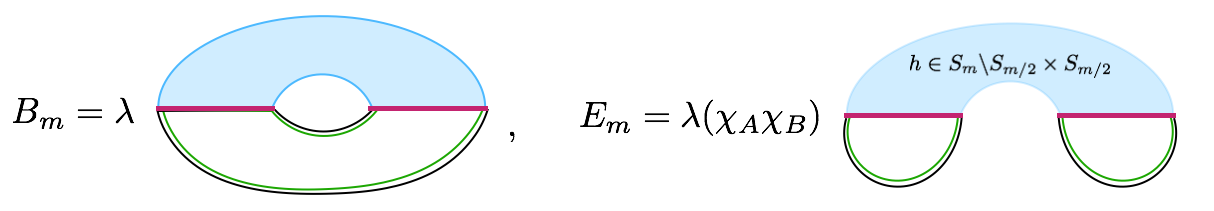} 
\end{equation}
More explicitly, we have 
\begin{align}
	B_m &= \frac{1}{(\chi_A\chi_B)^2-1} \left( -E_m + \sum_{h\in S_m\setminus S_{m/2}\times S_{m/2}} \chi_C^{\#(h)-m}(\chi_A\chi_B)^{\#(h^{-1}\tau_m)-m+2}\right) \\
	E_m &= \sum_{h\in S_m\setminus S_{m/2}\times S_{m/2}} \chi_C^{\#(h)-m}(\chi_A\chi_B)^{\#(h^{-1}\tau_{m/2}\times\tau_{m/2})-m+1}, 
\end{align}
where we have used the identity 
\begin{equation}
	\includegraphics[width=0.6
	\textwidth]{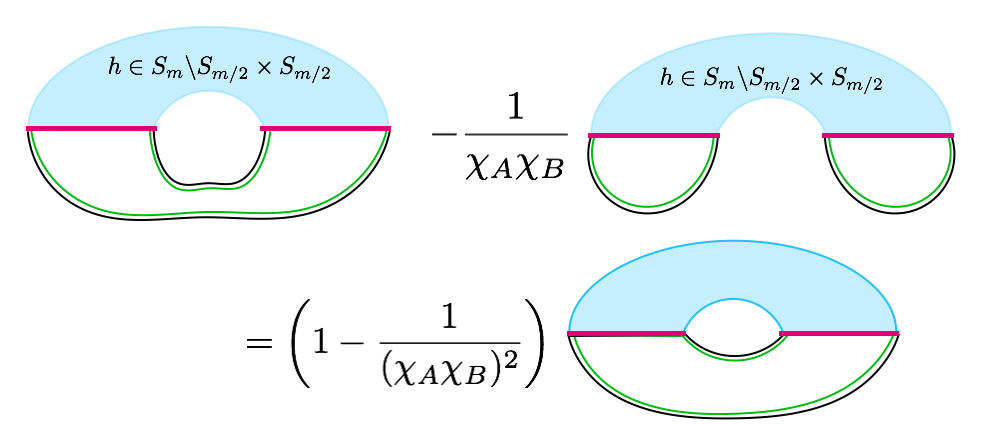}, 
\end{equation}
which follows from the fact that crossing permutations are projected out from the LHS.

We now conjecture that the main contribution to the matrix $\mathbb{F}$ comes from lowest two orders in $k$, i.e. $\mathbb{F}\simeq \mathbb{F}^{(1)}+\mathbb{F}^{(2)}$, in large bond dimensions. Unfortunately this truncation does not obviously follow from a genus counting argument.  We base our conjecture on four pieces of evidence: First, we explicitly calculated the contribution from $k=3$ and a special class of $k=4$ diagrams and found that they only give corrections to higher orders of $\chi$. Secondly we power counted a particular class of diagrams for general $k$ (which we believe to be the dominant contributions for $k$) and find it to be suppressed by powers of $\chi$.
Thirdly, although our diagrammatic approach is not entirely the same as the direct summation of group elements in App.~\ref{app:2nd_resum}, 
the two approaches give results that differ only by sub-leading $\chi$ corrections. 
Finally, we numerically evaluated the reflected entropy and eigenvalue spectrum for a Haar random state and we indeed find good match to the analytical results obtain from this truncation. The numerical results will be discussed in Section~\ref{sec:num}.

We rescale various quantities to restore the correct normalization from \Eqref{eq:reflected_moments} 
\begin{equation}
	G \rightarrow G Z_1\,, \qquad F \rightarrow F Z_1\,, \qquad S \rightarrow S /Z_1 \,, \qquad R \rightarrow R /Z_1 \,, \qquad \lambda \rightarrow \lambda Z_1 
\end{equation}
in which all the above equations take the same form but with $\widehat{B}_m = B_m/Z_1$, $\widehat{E}_m = E_m/Z_1$ and $\widehat{D}_m = D_m/Z_1^{1/2}$ and these hatted quantities are now all $\mathcal{O}(1)$ at large $\chi_C$ with $q$ fixed. We have the sum rule: 
\begin{equation}
	\widehat{D}_m^2 + \left( 1-\frac{1}{(\chi_A\chi_B)^2} \right) \widehat{B}_m + \frac{1}{(\chi_A\chi_B)^2} \widehat{E}_m = 1 
\end{equation}

We now attempt to solve for the system of matrix equations \Eqref{eq:SD-defining}, \Eqref{eq:SD_order1}-\Eqref{eq:SD_order2}. We can completely solve this system if we make one approximation, which we will later check is self-consistent. We will assume that we can drop the ``back-reaction'' of $R$ on \Eqref{eq:SD_order1} that determined $S_{11}, S_{22}$. We have 
\begin{align}
	(\chi_A \chi_B)^2 (\lambda - S_{11}^{-1}) = \lambda S_{22} (\chi_B^2-1) \widehat{B}_m \,, \quad (\chi_A \chi_B)^2 (\lambda - S_{22}^{-1}) = \lambda S_{11} (\chi_A^2-1) \widehat{B}_m 
\end{align}
The solution is 
\begin{equation}\label{eq:MP-main} 
	(\chi_A^2 - 1) S_{11} , (\chi_B^2 - 1) S_{22} = \frac{ \chi_A^2 \chi_B^2}{2 \widehat{B}_m \lambda } \left( \lambda - \sqrt{ (\lambda - \lambda_+) (\lambda - \lambda_-)} \right) \pm \frac{ \chi_A^2 - \chi_B^2}{2 \lambda} 
\end{equation}
where 
\begin{equation}
	\lambda_{\pm} = \frac{\widehat{B}_m}{\chi_A^2 \chi_B^2} \left( \chi_A^2 + \chi_B^2 - 2 \pm 2 \sqrt{ (\chi_A^2 - 1)(\chi_B^2 -1)} \right) 
\end{equation}
This is a Marchenko-Pastur distribution with support between $\lambda_{\pm}\simeq\widehat{B}_m(\chi_A^{-1}\pm \chi_B^{-1})^2$ and $\min(\chi_A^2-1,\chi^2_B-1)$ eigenvalues. We can check that our initial assumption for solving $S_{11}, S_{22}$ is indeed valid since $R_{11}\sim O(1)$ and $(\chi^2_A-1)S_{11}\sim \chi^2_B\chi^2_A$. The validity of this assumption breaks down at $\lambda \gg \chi^{-2}_A, \chi^{-2}_B$, but as we have seen already, this is well outside the spectral weight of the MP distribution and so does not effect the spectrum that we find.

The second equation \Eqref{eq:SD_order2} determines correction to the leading order solution of matrix $R_{ij}$ \Eqref{eq:R_sol_leading} we obtained earlier. Together with \Eqref{eq:SD-defining} and \Eqref{eq:SD-order0}, this yields a quadratic equation of $2\times 2$ matrices. We can solve this equation completely, although the algebra is a bit more involved and the it only features corrections of $O(\chi^{-2})$ and higher orders. For this reason we only summarize our findings here: 
\begin{enumerate}
	\item The position of the single pole at $\lambda=\widehat{D}^2_m$ is now shifted to 
	\begin{equation}
		\lambda=\widehat{D}^2_m + \left( \frac{1}{\chi^2_A} + \frac{1}{\chi^2_B}\right) \widehat{B}_m + O(\chi^{-4}) 
	\end{equation}
	\item The same pole is now resolved to a small peak of width 
	\begin{equation}
		\delta\lambda \sim \frac{8\widehat{D}_m \sqrt{\widehat{E}_m}}{\chi_A\chi_B}+O(\chi^{-4}) 
	\end{equation}
\end{enumerate}
For the detailed form of $R$ and its derivation please refer to appendix \ref{app:SD-R}. As a final remark, we stress here that the various techniques we used here, such as the truncation at $k=2$ and various approximations we made to solve the SD equation, rely on the limit of large bond dimension $\chi_A,\chi_B,\chi_C\to \infty$. However we do not require all the external bond dimensions to be large for our analysis to work. In fact there are two interesting parameter ranges to take: 
\begin{enumerate}
	\item $\chi_A,\chi_B,\chi_C\to\infty$ with $q=\chi_A\chi_B/\chi_C$ and $r=\chi_A/\chi_B$ fixed: \\
	This is the main parameter range of our interest in this paper. The resolvent has two interesting regimes. For $\lambda \sim 1/\chi_C$ we simply find the MP distribution 
	\begin{equation}
		R (\lambda) = \frac{(\chi_A\chi_B)^2}{2\widehat{B}_m\lambda} \left( \lambda-\sqrt{(\lambda-\lambda_+)(\lambda-\lambda_-)} \right) + \frac{\chi^2_A-\chi^2_B}{2\lambda} 
	\end{equation}
	And for $\lambda =\widehat{D}_m \sim O(\chi_c^0)$ we find a simple pole. This pole is resolved into a mini-peak each with width $\delta \lambda \simeq 8\widehat{D}_m\sqrt{\widehat{E}_m}\chi_C^{-1}$ for finite $\chi_C$.
	
	\item $\chi_A,\chi_C\to \infty$ with $\chi_A/\chi_C$ and $\chi_B$ fixed: \\
	The main MP distribution becomes narrow, approximating a pole centered at $\lambda_{\pm} = \chi_B^{-2}$ with weight $\chi_B^2 -1$. The other single pole is located at $\lambda = \widehat{B}_m \chi_B^{-2} + \widehat{D}_m^2$: 
	\begin{equation}
		R(\lambda) = \frac{\chi_B^2-1}{\lambda- \widehat{B}_m \chi_B^{-2} } + \frac{1}{ \lambda - \widehat{B}_m \chi_B^{-2} - \widehat{D}_m^2} 
	\end{equation}
	The first, smaller peak, is resolved into a peak with width $\delta\lambda \simeq 8\widehat{D}_m\sqrt{\widehat{E}_m}(\chi_A\chi_B)^{-1}$, and the larger peak is resolved into an MP distribution of width: 
	\begin{equation}
		\delta \lambda =\frac{1}{\chi_A} \frac{2 \widehat{B}_m \sqrt{\chi_B^2 -1}}{\chi_B^2} 
	\end{equation}
	This regime is not relevant to the random tensor model we study in this section. However it will prove to be useful for a 2-site random tensor model that aims to better model the holographic phase transition, which we will aim to discuss in our upcoming work \cite{pagereflected}. 
\end{enumerate}

\subsection{Spectrum and reflected entropy} \label{sec:spectrum} We have seen how to use diagrammatic methods to construct the resolvent of the reflected entropy. We summarize what we have found here:
\begin{equation}\label{eq:ref_spectrum} 
	R(\lambda) = \frac{(\chi_A\chi_B)^2}{2\lambda}\left( \frac{\lambda}{\widehat{B}_m} + \left( \frac{1}{\chi^2_B}-\frac{1}{\chi^2_A} \right) - \sqrt{(\lambda/\widehat{B}_m-\lambda_+ )(\lambda/\widehat{B}_m-\lambda_-)} \right) + \frac{1}{\lambda-\widehat{D}^2_m} 
\end{equation}
with 
\begin{equation}
	\lambda_\pm = \frac{1}{(\chi_A\chi_B)^2}\left( \sqrt{\chi^2_A-1} \pm \sqrt{\chi^2_B-1} \right)^2 
\end{equation}
We have given evidence that this form is asymptotically correct under the limit $\chi_C\to \infty$ with $q=\chi_A\chi_B/\chi_C$ and $r=\chi_A/\chi_B$ fixed. For higher order corrections to the resolvent refer to the previous section and Appendix~\ref{app:SD-R}. Under the limit of large bond dimension, the permutation sums $\widehat{B}_m$ and $\widehat{D}_m$ are dominated by non-crossing permutations and can be expressed in terms of q-Catalan numbers\footnote{ The q-Catalan numbers generalize the Catalan numbers away from $q=1$. They can be written in terms of Hypergeometric functions. For the complete definition see Appendix~\ref{app:perm}.}. 
\begin{equation}\label{eq:weights} 
	\widehat{D}^2_m \simeq \frac{C_{m/2}(q^{-1})^2}{C_m(q^{-1})} \equiv p_0, \quad \widehat{B}_m \simeq \frac{C_m(q^{-1})-C_{m/2}(q^{-1})^2}{C_m(q^{-1})} \equiv p_1 
\end{equation}
We now have the simplified sum rule: 
\begin{equation}
	p_0 + p_1 = 1 
\end{equation}
We will see that these quantities act like classical probabilities, which control the phase transition between the connected and disconnected entanglement wedges, with $q$ being the tuning parameter. We have the following asymptotic expressions: 
\begin{equation}
	(p_0,p_1) \sim 
	\begin{cases}
		(1-m^2q/4,m^2q/4), \quad &q \ll 1 \\
		(1/q, 1-1/q), \quad & q \gg 1 
	\end{cases}
\end{equation}
and around $q\sim 1$ the two probabilities are both order $1$.
\begin{figure}
	[h]
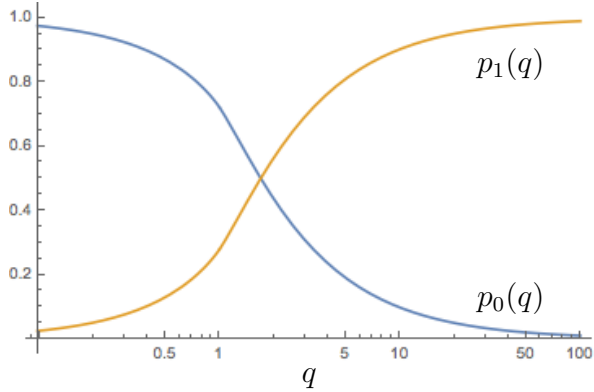
 \centering 
	\begin{overpic}
		[width=0.5\linewidth]{pi_plot} \put(50,-3){$q$} \put(80,50){$p_1(q)$} \put(80,10){$p_0(q)$} 
	\end{overpic}
	\vspace{0.5cm} 
	\caption{The plot showing the trend of $p_i(q)$. We have taken $m=1$ here.} 
\label{fig:p_i} \end{figure}

The reflected entropy that follows from this resolvent is given by (assuming $\chi_A<\chi_B$) 
\begin{align}
	S_R 	&= -p_0\ln p_0 -p_1\ln p_1 + p_1 \left( \ln \chi^2_A - \frac{\chi^2_A}{2\chi_B^2} \right) + O(\chi^{-2}) 
\label{eq:main_result} \end{align}
This is the main result of this section.
Note that the entropy has a contribution from  the classical Shannon entropy of a single bit $p_i$ plus a term proportional to what seems to be the entropy of a Haar random state on $AA^\star BB^\star$ reduced to $AA^\star$ (the standard Page result). This contribution however is multiplied by $p_1$.

We can better understand the physics by looking at the eigenvalue spectrum:
\begin{align}
	D(\lambda) = \frac{(\chi_A\chi_B)^2}{2\pi \lambda}\sqrt{(\lambda_+-\lambda/p_1)(\lambda/p_1-\lambda_-)} + \delta(\lambda)(\chi^2_A-\chi^2_B)\theta(\chi^2_A-\chi^2_B) + \delta(\lambda-p_0) 
	\label{evalspec}
\end{align}

\begin{figure}
	[h]
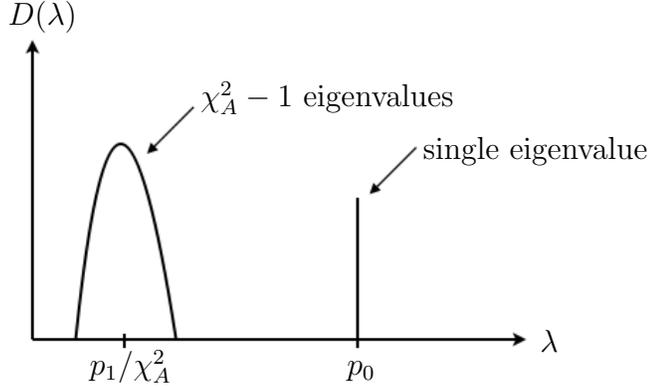
 \centering 
	\begin{overpic}
		[width=0.5\linewidth]{spectrum} \put(4,62){$D(\lambda)$} \put(95,6){$\lambda$} \put(62,2){$p_0$} \put(18,2){$p_1/\chi^2_A$} \put(37,48){$\chi^2_A-1$ eigenvalues} \put(75,39){single eigenvalue} 
	\end{overpic}
	\caption{The spectrum obtained from the resolvent \Eqref{evalspec}. We take $\chi_A<\chi_B$ here so the MP distribution has no zero eigenvalues.} 
\label{fig:spectrum} \end{figure}

We see that the  spectrum is given by a shifted Marchenko-Pastur distribution with a total of $\min(\chi_A^2-1,\chi^2_B-1)$ eigenvalues, plus a single eigenvalue located at $\lambda=p_0$  
The R\'enyi entropies can be reconstructed by the sum of the moments of the two pieces in resolvent 
\begin{equation}
	S_n(AA^*) = \frac{1}{1-n} \ln \left( p^n_0 + \chi^{2(1-n)}_A r^{-2}C_n(r^2) p^n_1 \right) 
\end{equation}
where $r=\chi^2_A/\chi^2_B$. Note that the $(m,n)$ dependence has essentially factorized: the only $m$ dependence is through the probabilities $p_0(m),p_1(m)$.

The R\'enyi entropies have three different approximate behaviors based on the relative ratio of the external bond dimensions. We have: 
\begin{equation}
	S_n(AA^\star) \approx 
	\begin{cases}
		\frac{n}{1-n} \ln p_0 \approx 0, \, & \chi_{AB} \ll \chi_C \\
		\frac{n}{1-n} \ln p_0 \approx\frac{n}{n-1} \ln \frac{\chi_{AB}}{\chi_C}, \, & \chi_C \ll \chi_{AB} \ll \chi_C \chi_A^{ \frac{2(n-1)}{n}} \\
		\approx 2 \ln \chi_B, \, & \chi_{AB} \gg \chi_C \chi_B^{ \frac{2(n-1)}{n}} 
	\end{cases}
\end{equation}
and we have assumed $\chi_A \leq \chi_B$ and $n > 1$. These three different behaviors exactly match the three phases we found in \secref{sec:phase_nm}, and their transitions match the phase boundary lines in \figref{fig:1site_phase}. As we take $n\to 1$ the middle regime vanishes and we get back the single transition in the reflected entropy as expected, as we move from disconnected to connected entanglement wedges.

\subsection{Effective description}
\label{sec:eff}

Based on the results of the previous subsection, we now give an alternative effective description of the canonically purified state and its $m$ generalization:  $\ket{\rho^{m/2}_{AB}}\bra{\rho^{m/2}_{AB}}$.
Consider the following pure state on the Hilbert space $AA^\star BB^\star$: a superposition of a factorized state (with probability $p_0$) plus a random maximally entangled state (with probability $p_1$):
\begin{equation}
\label{psieff}
\left| \psi_{\rm eff} \right>
= p_0^{1/2} \left| \chi_0 \right> + p_1^{1/2} \left| \chi_1 \right>
\end{equation}
Here $\left| \chi_0\right> =  \left| 1_{A}/\chi_A \right>_{AA^\star} \otimes \left| 1_{B}/\chi_B \right>_{BB^\star}$ is the factorized state 
and we construct $\left| \chi_1 \right>$ using a random tensor network state as follows. We use a new network geometry that comes from doubling across the connected entanglement wedge:
\begin{equation}
\label{newnetwork}
\left| \chi_1 \right> \propto \left< 1_{C} \right| U_{ABC} U'_{A^\star B^\star C^\star} \left| 0\right>_{ABC} \left| 0\right>_{A^\star B^\star C^\star}
\end{equation}
where $\left| 1_{C} \right>$ is the (canonical) un-normalized maximally entangled state on the $CC^\star$ Hilbert space.
It is natural to pick $U_{ABC} \left|0\right>_{ABC} $ to be the same  random state that we started with, but we choose $U'_{A^\star B^\star C^\star}$ to be an independent Haar random unitary.  We will comment on this choice further below. See \figref{fig:sumsum} for a picture of these states.

\begin{figure}
\centering
		\includegraphics[width=0.8
		\textwidth]{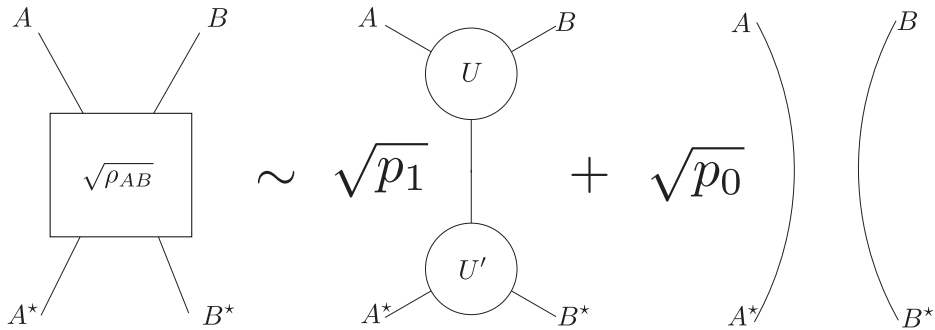} 
	\caption{\label{fig:sumsum} An illustration of \Eqref{psieff}. We argue that the canonical purification is effectively described by a sum over tensor network states. 
	The first state comes from the natural doubling procedure outlined in Ref.~\cite{Marolf:2019zoo} for the connected entanglement wedge of the original network in \figref{fig:single} and the second is the trivial factorized network which is obtained by doubling the trivial/disconnected entanglement wedge of the original network.} 
\end{figure}

Note that $\ket{\chi_0}$ is
the canonical purification of a maximally mixed state $\rho_{AB} = 1_{AB}/(\chi_A \chi_B)$ - this is the density matrix $\rho_{AB}$ that usually arises from a Haar random state reduced to $AB$ when $\chi_C > \chi_A \chi_B$. 
And, $\ket{\chi_1}$ is the natural guess for the canonical purification of the original Haar random state reduced to $AB$ when $\chi_C < \chi_A \chi_B$: that is, when the entanglement wedge is connected.
The doubling procedure is motivated by the AdS/CFT discussion in \cite{Dutta:2019gen}.

The claim is that $\ket{\psi_{\rm eff}}$ has the same entanglement structure as $\left| \rho_{AB}^{m/2} \right>$, at large bond dimension and for all $m$ (note that $p_{0,1}$ depend on $m$). In particular we will see that it has the same $AA^\star$ entanglement spectrum. Similarly, observe that the two states $\ket{\chi_{0,1}}$, when working in their respective phases,  give the same expectation values for operators in $AB$ as the original density matrix $\rho_{AB}$. Thus the entanglement structure for sub-regions of $AB$ is also maintained, at least away from the phase transition.
Since we think of the entanglement structure as being closely linked to the emergent bulk geometry, we can think of $\ket{\psi_{\rm eff}}$ as capturing the effective geometry of the canonical purification.

Such superpositions \Eqref{psieff} over different tensor networks, have been postulated as models of gravitational states in AdS/CFT \cite{Akers:2018fow, Dong:2018seb, Harlow:2016vwg,Bao:2018pvs,Qi:2017ohu}, see also \cite{Almheiri:2016blp}. In particular these states allow for non-trivial fluctuations in the area of the RT surface, and do not suffer from the issue of a flat entanglement spectrum that is not expected in typical holographic states. Each wavefunction in the superposition is then thought of as a fixed area state, where the gravitational state is projected onto approximate eigenstates of the area operator. 

More specifically, on the physical Hilbert space the reduced density matrices on $AA^\star$ 
will have approximately orthogonal support:
\begin{equation}
{\rm supp} ( (\rho_0)_{AA^\star} ) \perp {\rm supp}( (\rho_1)_{AA^\star})\, \qquad {\rm supp} ( (\rho_0)_{BB^\star} ) \perp {\rm supp}( (\rho_1)_{BB^\star})
\end{equation}
So the resulting states behave like approximate superselection sectors with respect to $AA^\star$: the phase between the two components in the wavefunction is unobservable when restricting to $AA^\star$ or $BB^\star$. This turns out to be  approximately true for $\ket{\chi_0},\ket{\chi_1}$, in particular because of how we chose $U'_{A^\star B^\star C^\star}$ in \Eqref{newnetwork}: we picked an independent random unitary not equal to $(U_{ABC})^\dagger$. This latter choice might have seemed more natural, considering the $m=2$ case $\left| \rho_{AB} \right>$ exactly gives such a network, albeit without the disconnected wave-function $\ket{\chi_0}$. Note that the choice $U' = U^\dagger$ leads to correlations between different tensors in the doubled tensor network. Our claim here is that for the effective state, such correlations have already been taken into account by $\ket{\chi_0}$ so we should not double count this effect. This leads us to \Eqref{newnetwork}.

Given the discussion above, we can easily compute the entropy $S({AA^\star})$ of $\ket{\psi_{\rm eff}}$ and indeed it agrees with \Eqref{eq:main_result}. Similarly the R\'enyi entropies also agree.

Let us push this interpretation a little further, and give an quantum error correction interpretation of this superposition of tensor networks states, and emergent area operator.  
For simplicity let us assume the density operators have exactly orthogonal support. We introduce an area operator that labels the different superselection sectors.
The superselection sectors in
this case are described by a single bit $s=0, 1$. The area operator is: 
\begin{equation}\label{eq:superselect} 
	\mathcal{L}_{AA^\star}= \mathop{\oplus}_{s =0,1} \overline{S(\chi_s)} 
\end{equation}
accounting for the entropy of each sector.
The two orthogonal subspaces, determined by projections $\pi_0, \pi_1$, are defined using the supports: 
\begin{equation}
\pi_0 = {\rm supp}(\rho^0_{AA^\star})
\otimes {\rm supp}(\rho^0_{BB^\star})
\end{equation}
and similarly for $\pi_1$. Then:
\begin{equation}
\mathcal{L}_{AA^\star} \left| \chi_s \right> = \overline{S(\chi_s)} \left| \chi_s \right>
\end{equation}
In this case we can define a simple quantum/classical error correcting code, that protects a single classical bit from quantum errors. Define the code subspace as a single bit and the isometry $V = \left| \chi_0 \right> \left< 0 \right| + \left| \chi_1 \right> \left< 1 \right|$. This code protects from errors on either $AA^\star$ or $BB^\star$, which can be confirmed by computing:
\begin{equation}
V^\dagger (1_{AA^\star} \otimes \mathcal{O}_{BB^\star}) V
 =\left| 0 \right> \left< 0 \right| \left< \chi_0 \right| \mathcal{O}_{BB^\star} \left| \chi_0 \right>
 + \left| 1 \right> \left< 1 \right| \left< \chi_1 \right| \mathcal{O}_{BB^\star} \left| \chi_1 \right>
\end{equation}
So the error, represented by an arbitrary operator on $BB^\star$: $\mathcal{O}_{BB^\star}$, does no damage to the classical information. This is the Knill-Laflamme condition stated for operator algebra error correction with complementary recovery and a center, see Refs.~\cite{Harlow:2016vwg,Faulkner:2020hzi} (in this case the center is everything in the code.) 

We know how to compute the entropy for states on such a code, following \cite{Harlow:2016vwg} we find:
\begin{equation}
S_{\psi_{\rm eff}}(AA^\star)
= - p_0 \ln p_0 - p_1 \ln p_1 + \left< \psi_{\rm eff} \right| \mathcal{L}_{AA^\star} \left| \psi_{\rm eff} \right>
\end{equation}
as expected. 
Note that, in reality the supports of the reduced density matrices are not exactly orthogonal. This is because the error correcting code is not exact. 

\subsection{Numerical results} 
\label{sec:num}
Our main result \Eqref{eq:main_result} corrects the naive holographic reflected entropy, from a step function to a smooth transition. In this subsection, we corroborate these corrections, comparing our answer to numerical results, showing they indeed capture the details of the phase transition. All numerical results are obtained by generating a random tripartite state of appropriate bond dimension and numerically computing its reflected entropy.

First, in \figref{fig:spec_numerics} we plot a histogram of (the logarithm of the) eigenvalues of $\rho_{AA^\star}$, corroborating \Eqref{evalspec} and \figref{fig:spectrum}. We choose bond dimensions $\chi := \chi_A = 23, \chi_B = 25, \chi_C = 11,$ and $50$ trials.

\begin{figure}[h!] \centering 
	\includegraphics[width=0.90
	\textwidth]{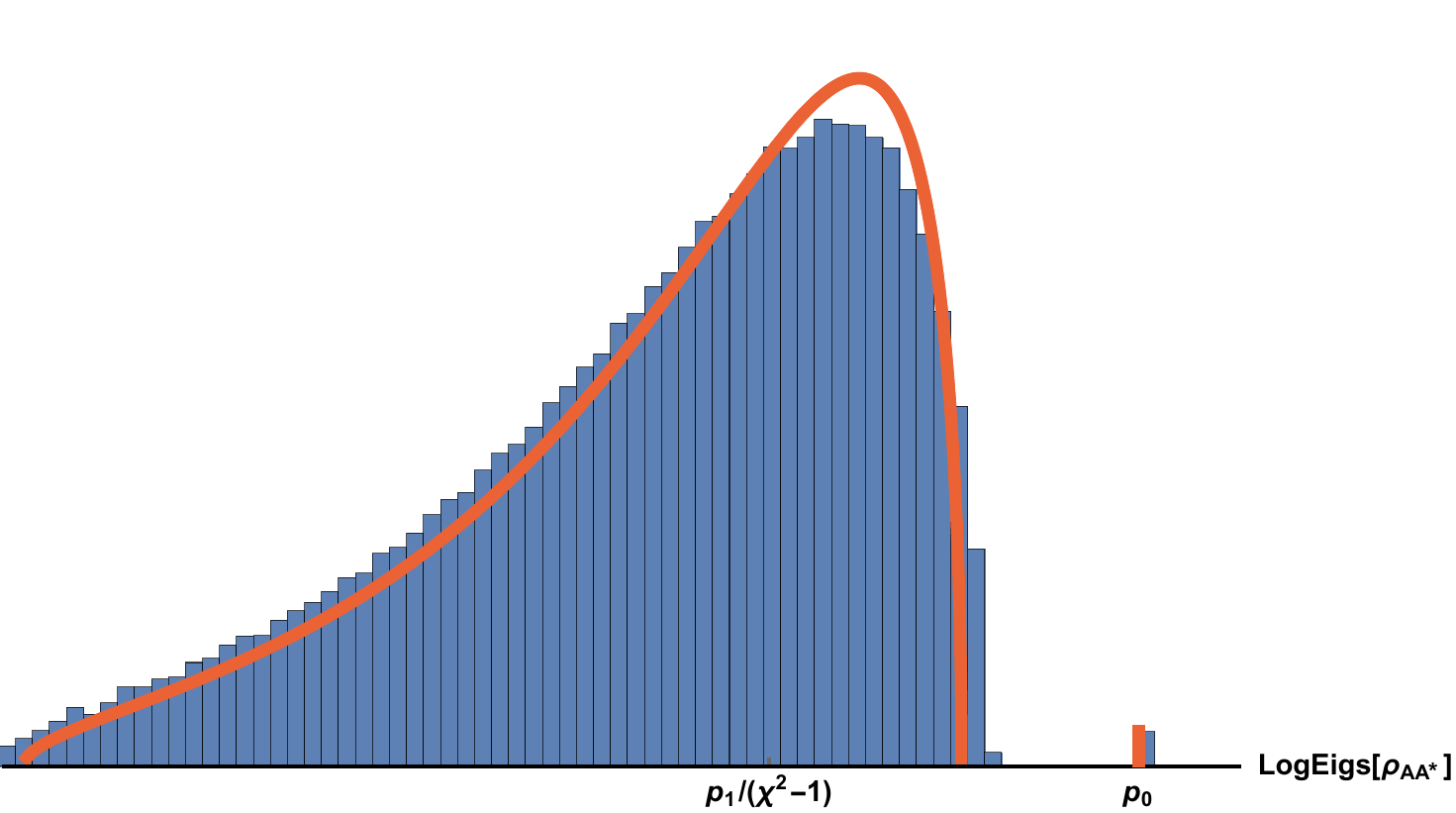} 
	\caption{The blue histogram is a numerical plot of the spectrum of $\rho_{AA^\star}$ for a tripartite tensor. The orange curve is our analytic prediction for the spectrum \Eqref{evalspec}. Note that the tick values on the x-axis are not exactly $p_0$ from \Eqref{eq:probRTN}, but instead include a small correction, replacing $p_0$ with its shifted version \Eqref{eq:shifted_pole}, differing only at $\mathcal{O}(\log(\chi)/\chi^2)$. This gives the value in the plot of $p_0 = 0.226.$} 
\label{fig:spec_numerics} \end{figure}

\begin{figure}[h!] \centering 
	\includegraphics[width=0.90
	\textwidth]{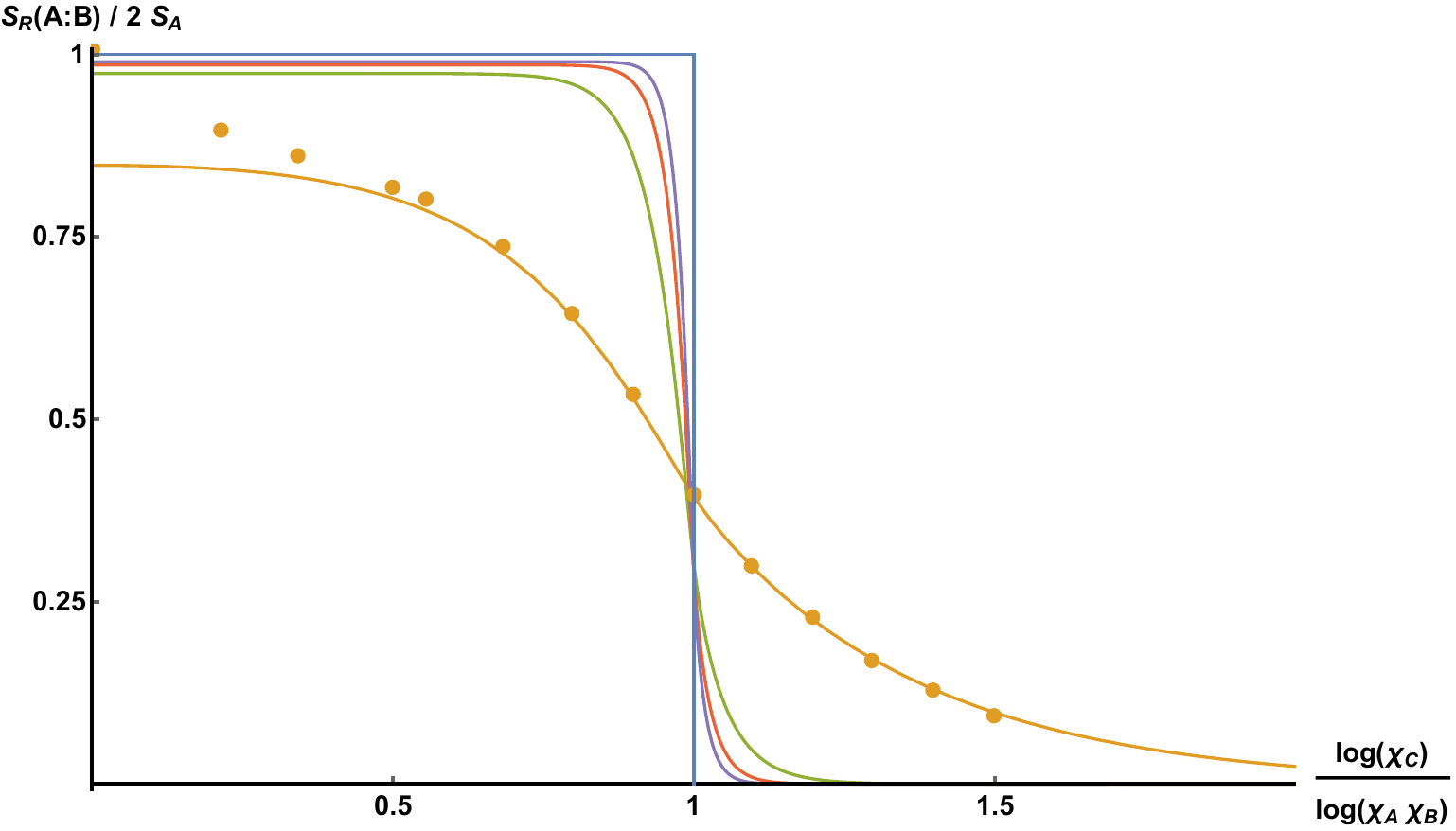} 
	\includegraphics[width=0.90
	\textwidth]{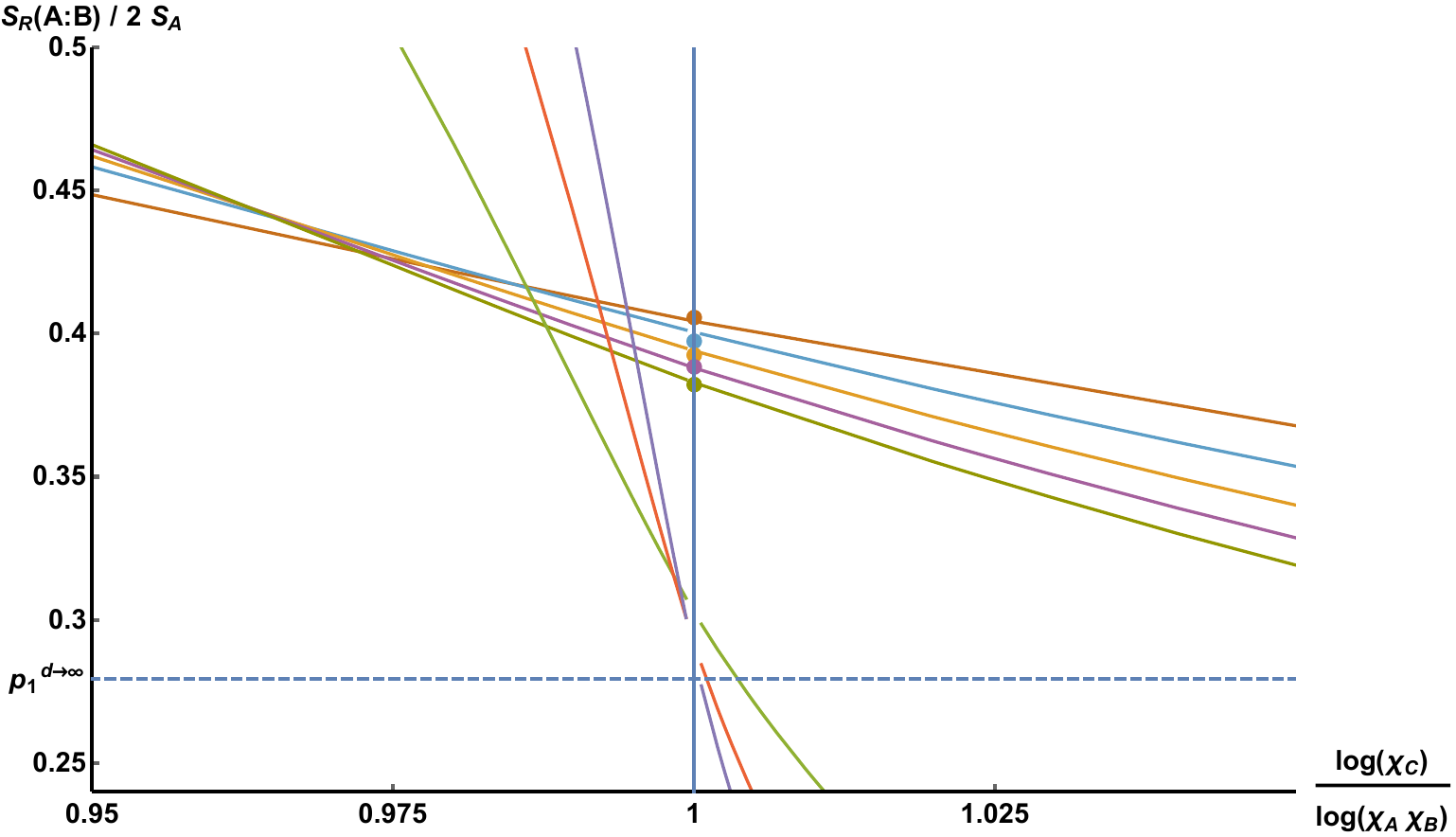} 
	\caption{Plots of the ``Page curve'' for reflected entropy. Our results capture the phase transition and agree with numerics. The disagreement in the top figure at small $\chi_C$ is to be expected from \Eqref{eq:main_result} not being valid at small bond dimension. Note that the analytic functions plotted are not exactly \Eqref{eq:main_result}, but instead include a small correction, replacing $p_0$ with its shifted version \Eqref{eq:shifted_pole}, differing only at $\mathcal{O}(\log(\chi)/\chi^2)$.} 
\label{fig:numerics} \end{figure}

In \figref{fig:numerics}, we present two plots of the ``Page curve'' for reflected entropy. Solid lines are analytic results \Eqref{eq:main_result}. Dots are numerical results (only obtained for small bond dimension). All values are normalized by twice the entanglement entropy of $A$, the upper bound on $S_R(A:B)$. In the top figure, the blue step function is the large-$\chi$ limit, with $x_A + x_B = \log(\chi_A \chi_B)/\log(\chi_C)$ held fixed. The other dimensions are $\log_5 \chi_A = \log_5 \chi_B = x_A \log_5 \chi_C = \{16, 11, 6, 1 \}$. The numerics agree, with larger deviation at large $x_A$. This is as expected, because large $x_A$ at fixed $\chi_A$ means small $\chi_C$, and \Eqref{eq:main_result} was derived in the large bond dimension limit. Besides illustrating that \Eqref{eq:main_result} agrees with small bond-dimension numerics and limits to the correct semiclassical answer in the limit $\chi \to \infty$, this plot also illustrates the novel prediction for $S_R(A:B)$ very near the phase transition.

Our results are depicted more precisely in the bottom figure: where we zoom into a narrower range of $x_A$ around the phase transition point. The horizontal blue dashed line depicts the predicted value of $S_R(A:B)/2 S_A$ in the limit $\chi \to \infty$. To demonstrate agreement with numerics, we have included curves and numerics corresponding to smaller dimensions, $\chi_A = \chi_B = \chi_C^{x_A} = \{3,4,6,7\}$.
It is evidence that \Eqref{eq:main_result} is sufficient to capture many of the non-trivial features of the transition.

\section{More general tensor networks}\label{sec:general_TN}

We have seen how the single tensor model, despite being relatively simple, still exhibits complicated behaviors such as a sharp jump in the reflected entropy. We also saw how this model should be thought of as describing the emergence of superselection sectors associated to an area operator for the entanglement wedge cross section. 
In this section we revisit the tensor network calculation for reflected entropy with the insights we gained from the single tensor. 
We focus on a regime away from the phase transition. 
We do not investigate the phase transition itself.

To begin our discussion, recall that in Section 2 we identified two possible bulk solutions (\figref{fig:naive_bulk_sol}). And indeed, we found their counterpart in the single tensor model, which corresponded to a phase with permutation element $g = e$ (the disconnected phase) and $g = g_A$ or $g=g_B$ (the connected phase.) One of the obvious issues we must confront for more general tensor networks is the possible appearance of new phases associated to $g=X$. 
In addition to this there are several new complications that we have to deal with relative to the single tensor case:

\begin{enumerate}
	\item \emph{Kinks in the domain walls:} \\
	In the EW phase, the tension of the EW cross-section $d(g_A,g_B) = 2(n-1)$ is non-zero whenever $n>1$. This will cause the cross-section to contract, resulting in two ``kinks'' where the two regions meet. The sharpness of the kink depends on both $m$ and $n$. See for example:
\begin{equation}
		\centering 
		\begin{overpic}
			[width=0.35
			\textwidth]{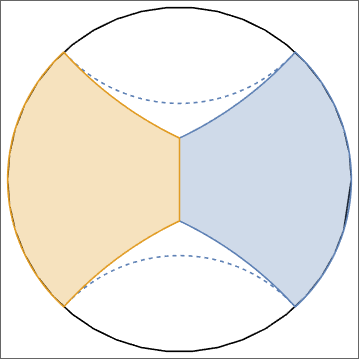} \put(25,50){$g_A$} \put(70,50){$g_B$} 
		\end{overpic}
\end{equation}
Since the tension $\propto (n-1)$, the larger $n$ is, the sharper the kink is. When $n=1$ the kink disappears and we recover the EW phase solution. 
	
	\item \emph{Multiple phases with element $X$:} \\
	Certainly we expect $X$ to make an appearance as it did in the single tensor calculation, see \secref{sec:phase_nm}. We will actually find several new phases where $X$ makes an appearance. To get some feeling for how $X$ can show up, assume that we have the following bulk configuration:
	\begin{equation}
	     \centering 
		\begin{overpic}
			[width=0.3
			\textwidth]{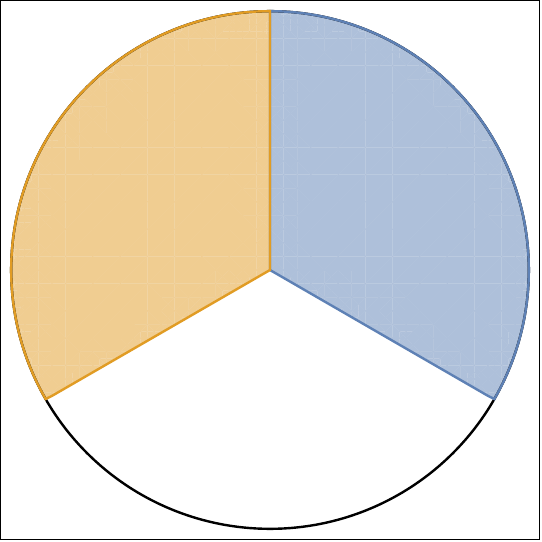} \put(25,60){$g_A$} \put(70,60){$g_B$} \put(48,20){$e$} 
		\end{overpic}
		\begin{overpic}
			[width=0.3
			\textwidth]{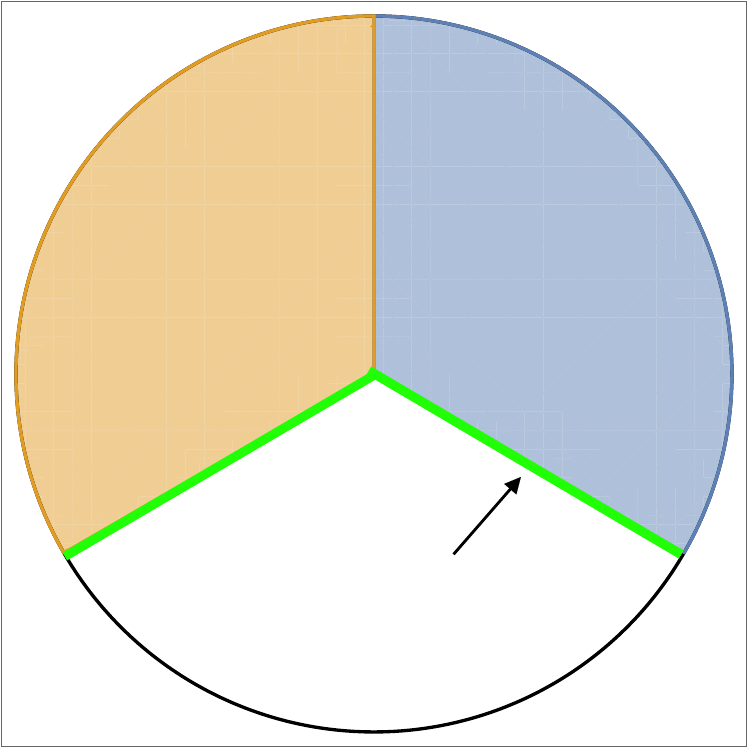} \put(25,60){$g_A$} \put(70,60){$g_B$} \put(55,20){$h$} 
		\end{overpic}
		\begin{overpic}
			[width=0.3
			\textwidth]{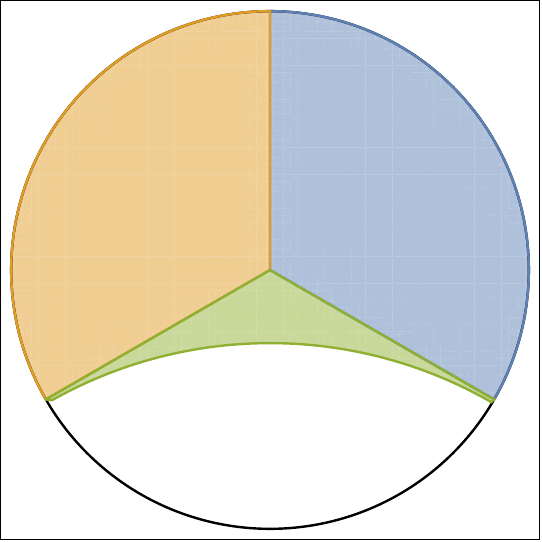} \put(25,60){$g_A$} \put(70,60){$g_B$} \put(48,40){$h$} \put(48,15){$e$} 
		\end{overpic}
	\label{fig:h_pocket} \end{equation}
	showing a a tripartite bulk configuration that can be deformed to include a new element $h$.
	For permutations $h$ that lie on \emph{both} geodesics $\Gamma(g_A,e)$ and $\Gamma(g_B,e)$, we have 
	\begin{align}
		\begin{split}
			d(g_A,h) + d(h,e) = d(g_A,e), \\
			d(g_B,h) + d(h,e) = d(g_B,e). 
		\end{split}
	\end{align}
	It then takes no energy to include a thin layer of $h$ at the domain wall where $g_A$ and $g_B$ transition to $e$. Next, imagine we pull out the thin layer to create a ``pocket'' of $h$. This will lower the overall energy since the new domain wall has shorter total length. It is easy to see that the change in the free energy will be proportional to $-d(h,e)$. So the optimal version of this process pulls out the element $h = X$: recall that by definition this element has the largest $d(h,e)$ amongst all $ h \in \Gamma(g_A,e) \cap \Gamma(g_B,e)$. We expect such a pocket to appear whenever the region $g_A$ and $g_B$ meet at an angle larger than $180$ degrees.
\end{enumerate}

Both phenomena together have an important effect on the bulk picture of the relevant phase. 
At $n>1$, the kink creates a non-flat angle at the EW cross-section, allowing a $h$-pocket to appear. Whenever this happens, it also shifts the domain wall from $g_{A/B}\leftrightarrow e$ to $g_{A/B}\leftrightarrow X$, therefore changing the sharpness of the kink. Since $d(X,g_{A/B})$ does not depend on $m$, the new kink angle is independent of $m$.

With this in mind, we can categorize different possible phases by the inclusion of an $X$-pocket or not. We consider four such possible saddles. In general it is a non-trivial problem to prove that these are the only saddles possible. We will simply assume this to be the case for now. The resulting physics we get from this assumption passes many tests, and so we strongly suspect these phases dominate in at least some finite window of parameter space.

Working with a connected entanglement wedge, with conformal cross ratio $x<1/2$, we now show possible minimal solutions to the hyperbolic network model and list their normalized free energies.
\begin{figure}
	[h!] \centering 
	\begin{subfigure}
		{0.35
		\textwidth} \subcaption{phase I} 
		\begin{overpic}
			[width=
			\textwidth]{kinkEW} \put(25,50){\large$g_A$} \put(70,50){\large$g_B$} 
		\end{overpic}
	\end{subfigure}
	\begin{subfigure}
		{0.35
		\textwidth} \subcaption{phase II} \label{fig:TN_phases} 
		\begin{overpic}
			[width=
			\textwidth]{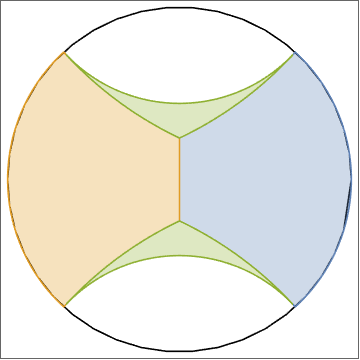} \put(25,50){\large$g_A$} \put(70,50){\large$g_B$} \put(46,65){\large$X$} 
		\end{overpic}
	\end{subfigure}
	\begin{subfigure}
		{0.35
		\textwidth} \subcaption{phase III} 
		\begin{overpic}
			[width=
			\textwidth]{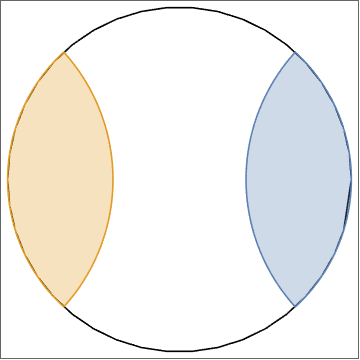} \put(15,50){\large$g_A$} \put(80,50){\large$g_B$} 
		\end{overpic}
	\end{subfigure}
	\begin{subfigure}
		{0.35
		\textwidth} \subcaption{phase IV} 
		\begin{overpic}
			[width=
			\textwidth]{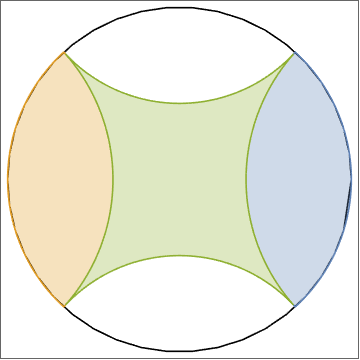} \put(15,50){\large$g_A$} \put(80,50){\large$g_B$} \put(46,50){\large$X$} 
		\end{overpic}
	\end{subfigure}
	\caption{Different phases that can dominate in the reflected entropy bulk configuration . \label{allphases}} 
\end{figure}
More specifically the $(m,n)$-R\'enyi Reflected entropies are $\ln \chi \times f$ for some minimal $f$. The candidate $f$'s arising from the phases in \figref{allphases} are:
 \begin{itemize}
 	\item phase I 
 	\begin{equation}
 		f_I = 4 (n-1) \ln \frac{ 1 + \sqrt{1 -x} }{ \sqrt{x}} + 4 n(m-1) \left(  H(p_+,p_-) - \ln 2 \right)
 		\label{phaseI}
 	\end{equation}
 	where $p_\pm =1/2 \left( 1 \pm \frac{(n-1)}{n(m-1)}\right)$ and $H(p_+,p_-) = - p_+ \ln p_+ - p_- \ln p_-$ is the Shannon entropy.
 	\item phase II 
 	\begin{equation}
 		f_{II} = 4 (n-1) \ln \frac{ 1 + \sqrt{1 -x} }{ \sqrt{x}} + 4 n \left( H\left(1-\frac{1}{2n},\frac{1}{2n}\right) - \ln 2 \right)
 		\label{phaseII}
 	\end{equation}
 	\item phase III 
 	\footnote{For $x>1/2$ we are always in the disconnected phase corresponding to phase III. The free energy for $x>1/2$ differs by an amount $2n(m-1)\ln((1-x)/x)$ compared to that shown here, due to the phase transition in the normalization of the reflected R\'enyi entropies. }  
 	\begin{equation}
 		f_{III} = 2n(m-1)\ln \frac{1-x}{x} 
 	\end{equation}
 	\item phase IV 
 	\begin{equation}
 	\label{phaseIV}
 		f_{IV} = 2n \ln \frac{1-x}{x} 
 	\end{equation}
\end{itemize}

Note that $x$ is the conformal cross-ratio for end points of the intervals on the boundary, such that $0\leq x\leq 1$ and $x \rightarrow 0$ gives the phase with a connected entanglement wedge. We have assumed that we can approximate the network geodesics by a continuum geometry where there is a conformal symmetry, this is a crude approximation that is sufficient for our purposes. 

We now analytically continue in $(n,m)$. Our first approach is naive and will fail for $1<m<2$. We simply analytically continuing the expressions in Eq.~(\ref{phaseI}-\ref{phaseIV}), and then compare all four free energies at each point in parameter space. 
For example, we plot the resulting phase diagram in the $(m,n)$ plane for several different values of $x$ in Figure \ref{fig:RTNphase}. 
\begin{figure}
	[h!] \centering 
	\begin{subfigure}
		{0.4
		\textwidth} \subcaption{$x=0.04$} 
		\begin{overpic}
			[width=
			\textwidth]{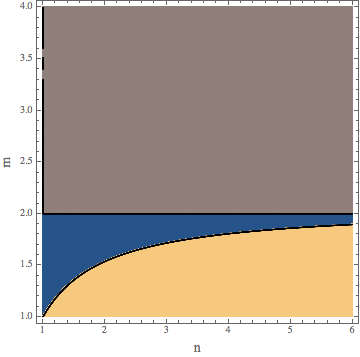} \put(17,27){I} \put(50,55){II} \put(65,20){III} 
		\end{overpic}
	\end{subfigure}
	\begin{subfigure}
		{0.4
		\textwidth} \subcaption{$x=0.11$} 
		\begin{overpic}
			[width=
			\textwidth]{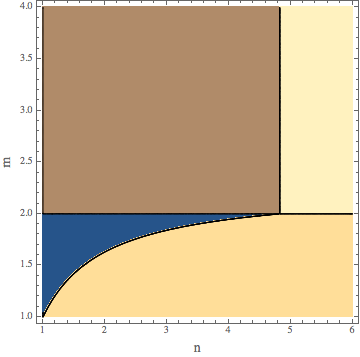} \put(17,27){I} \put(40,55){II} \put(65,20){III} \put(85,55){IV} 
		\end{overpic}
	\end{subfigure}
	\begin{subfigure}
		{0.4
		\textwidth} \subcaption{$x=0.33$} 
		\begin{overpic}
			[width=
			\textwidth]{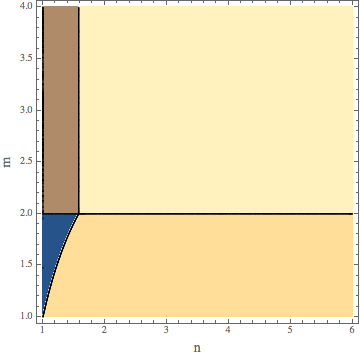} \put(13,25){I} \put(13,55){II} \put(50,20){III} \put(60,60){IV} 
		\end{overpic}
	\end{subfigure}
	\begin{subfigure}
		{0.4
		\textwidth} \subcaption{$x=0.56$} 
		\begin{overpic}
			[width=
			\textwidth]{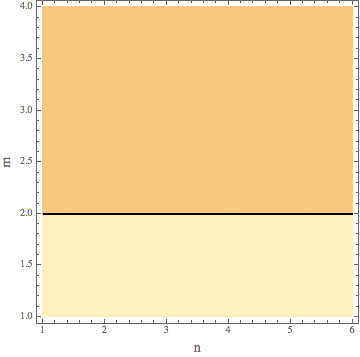} \put(50,60){III} \put(50,20){IV} 
		\end{overpic}
	\end{subfigure}
	\caption{The phase diagrams for the four phases introduced above. The horizontal axis is $n$ and the vertical is $m$. Note the flip in the two phases in the last diagram.} \label{fig:RTNphase} \end{figure}
As we can see there is always a transition line at $m=2$, and above the transition line the phases are independent of $m$. The reason for this transition is that the tension of the $X \leftrightarrow e$ domain wall $d(X,e) = n(m-2)$ becomes negative. Indeed this was exactly the same as with the single tensor case. So rather than follow the above naive approach we apply our prescription from the single tensor case (which has already passed many checks) and this implies that the phase diagram in \figref{fig:RTNphase} is incorrect for $m<2$. Instead we should analytically continue the phase transitions above $m\geq 2$ all the way to $m=1$, without re-minimizing over the different phases. The resulting phase diagram is then independent of $m$ for all $m\geq 1$.

We now compute the full entanglement spectrum for this model away from the phase transition point. Analytically continuing from $m>2$ gives for all $m$:
\begin{align}
\label{clmix}
	S_R = -\frac{1}{n-1}
	\begin{cases}
		0, \quad &x>1/2\\
		\min\{ f_{II}, f_{IV} \} , \quad &x<1/2 
	\end{cases}
\end{align}
While the expression in \Eqref{phaseII} seems rather complicated there is an interesting way to write $f_{II}$ at large $\chi$:
\begin{equation}
\label{eq:f_II_z}
	S_{R,II}= - \frac{1}{n-1} \ln \sum_{ a <z< 1} \exp \left\{ -\ln\chi \left(- 4(n-1) \ln z + 4 n \ln \frac{1}{2} \left( \frac{a}{z} + \frac{z}{a} \right)\right) \right\} 
\end{equation}
where $a =\sqrt{x}/(1 + \sqrt{1-x})$ determines the un-pinched EW cross-section: $-2\ln a$, and $-2\ln z$ is the area of pinched cross-section, see \figref{fig:EWX_setup}.
\begin{figure}
    \centering
    \includegraphics[width=.6\textwidth]{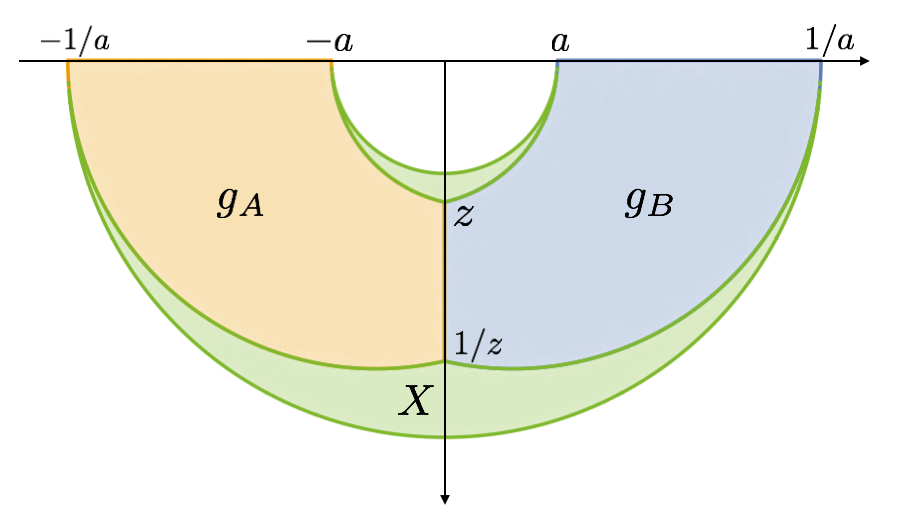}
    \caption{The bulk setup used to calculate \Eqref{eq:f_II_z}. We work within Poincar\'e hyperbolic coordinate with moving downward as going deeper into the bulk. The position of various end points are shown in terms of $a$ and $z$.}
    \label{fig:EWX_setup}
\end{figure}

We are being schematic about the sum. One is tempted to make this an integral, but recall that the network in question is really discrete and so there will indeed be a discrete set of cuts for the cross section. This form agrees with the previous one since we have to evaluate the sum in a saddle point approximation at large $\chi$. In particular \figref{fig:EWX_setup} represents how we computed \Eqref{phaseII} in the first place.

  Following the discussion in Section~\ref{sec:eff} for the single tensor case,
 we now interpret \Eqref{clmix} as arising from an effective description of the canonical purification as a superposition of wavefunctions. Thes wavefunctions then live in approximate superselection sectors when reduced to $AA^\star$.
It is clear the different sectors are associated to tensor networks with different/pinched cross-sectional areas. We think of this pinching as being determined by an area operator that is now allowed to fluctuate. 
 The area is $ A(z) \equiv -2 \ln z$. 
 
 Given this discussion we can read off from \Eqref{clmix} the probabilities of each sector arising:
\begin{equation}
	P(A) = \left( 
	\cosh((A-A_0)/2)
	\right)^{- 4\ln \chi }\, \qquad 0 \leq A \leq A_0
\end{equation}
where $A_0 = A(a) = - 2 \ln a$, the unpinched cross section. 
These probabilities are exponentially small except when $A = -2 \ln a$ where the probability goes to $1$. In general, there will be some perturbative corrections that help maintain the normalization condition. Thus we write our effective model for the canonical purification state using a new set of doubled and glued random tensor networks labelled by $A$: 
\begin{equation}
	\big| \rho_{AB}^{1/2} \big> \sim \sum_A P(A)^{1/2} \left| \Psi(A) \right> 
\end{equation}
where $\left| \Psi(A) \right>$ is defined as a random tensor network state as follows. Consider the pinched entanglement wedge consisting of the vertices with group elements $g_A,g_B$ in  \figref{fig:EWX_setup}. We construct a new tensor network by doubling and gluing along the $X \leftrightarrow g_A,g_B$ domain wall.
 This domain wall is slightly pinched relative to the original entanglement wedge.
 As before we can pick the random tensors on the $AB$ entanglement wedge to match those of the original tensor network, while we pick the tensors on the $(AB)^\star$ to be independent and random.
 
 Note that $2A(z)$ will represent a true minimal cut, homologous to $AA^\star$, through this new doubled network. In particular there is an associated $AA^\star$ entanglement wedge consisting of the region outside of this minimal cut/cross-section.  Since the entanglement wedges for these random tensor networks associated to the boundary region $AA^\star$ are very different for different $A(z)$ we expect the density matrices reduced to $AA^\star$ to be approximately orthogonal. This is also true for $BB^\star$.

It then follows that ${\rm Tr}_{AA^\star} \left| \Psi(A) \right> \left< \Psi(A') \right| \propto \delta_{A, A'} $ and $ {\rm Tr}_{BB^\star} \left| \Psi(A) \right> \left< \Psi(A') \right| \propto \delta_{A, A'} $ up to small non-perturbative corrections, and our results now parallel the results discussed in Section~\ref{sec:eff}. In particular an area operator on the physical Hilbert space emerges, by using the approximately orthogonal supports of $\rho_{AA^\star}(A(z))$ and $\rho_{BB^\star}(A(z))$. This area operator then determines the R\'enyi reflected entropy:
\begin{equation}
e^{-(n-1)S_{R,II}} = \sum_{A} P(A)^n e^{ - (n-1) A}  
\end{equation}
which agrees with \Eqref{eq:f_II_z}.
Evaluating the sum in the saddle point leads back to \Eqref{phaseII}. The dominant area $A$ shifts as a function of $n$ and in particular the $n \rightarrow 1$ limit gives back the entanglement wedge cross section since $P(A) \rightarrow 1$ in this limit. 

For large enough $n$ the dominant phase actually becomes the disconnected phase IV which is the same $n$ dependent phase transition that occured for the single random tensor model. The final effective description of the canonical purification, that capures all these effects, is shown pictorially here:
\begin{equation}
	\includegraphics[width=0.8
	\textwidth]{fig/CP_effective_2.pdf} 
\end{equation}

\section{Discussion}
\label{sec:disc}

While the main goal of this paper was simple, to compute reflected entropy in random tensor networks, we ended up discovering some surprises along the way. We summarize these here and also comment on some possible extensions of these results.

\subsection{Effective description of canonical purification}

Given a state $\rho_{AB}$, we have suggested a recipe for computing the reflected entanglement spectrum in random tensor networks. 
First construct the canonical purification $\ket{\sqrt{\rho_{AB}}}_{A A^\star B B^\star}$ using a generalized gluing construction - allowing for a superposition of networks for different possible entanglement wedges. 
Then compute the spectrum of the density matrix on $A A^\star$ using the fact that the individual density matrices on $AA^\star$ are approximately orthogonal. 

Let us comment in more detail how this relates to the gravitational gluing construction used in Ref.~\cite{Dutta:2019gen}. 
There, the canonical purification $\ket{\sqrt{\rho_{AB}}}_{A A^\star B B^\star}$  was described as dual to a particular bulk geometry, the one formed by gluing two copies of the $AB$ entanglement wedge together.
If we naively follow this procedure for a random tensor network we arrive at the following picture. Firstly the entanglement spectrum of $\rho_{AB}$ is flat in this case, so up to normalization we can replace $\rho_{AB}^{1/2}$ by $\rho_{AB}$. Then seemingly the gluing procedure automatically follows, since we know there is an isometry from the bulk legs at the $AB$ RT surface to the $AB$ boundary legs and $\rho_{AB}$ already contains two copies of the tensor network for the bra and the ket \cite{Marolf:2019zoo}. 

To see how this works, consider again the single tripartite tensor example.
The canonical purification $\ket{\sqrt{\rho_{AB}}}_{A A^\star B B^\star}$ following this procedure corresponds to a tensor network with two tensors $T_{ABC}$ and $T^*_{ABC}$ contracted along leg $C$.
This however is \emph{not exactly} the analog of gluing two copies of the entanglement wedge together. The difference is that these two tensors are now correlated. Normal tensor network analogs of any given geometry discretize that geometry with a collection of tensors that are all chosen independently at random. Hence the random tensor network analogous to the glued entanglement wedges from gravity \cite{Dutta:2019gen} would have completely uncorrelated tensors in the two copies!

How big is this difference?
It is certainly somewhat important: while the network of two uncorrelated tensors would have a completely flat entanglement spectrum, the canonical purification does not. As computed in Section \ref{sec:single_tensor}, the spectrum of $AA^\star$ involves two peaks. These two peaks can trade dominance as a function of R\'enyi parameter $n$. For example, when the entanglement wedge is in the connected phase, the single eigenvalue peak is subdominant, and the spectrum of $AA^\star$ is approximately that of two uncorrelated tensors-- yet for large enough $n$, the single eigenvalue peak begins to dominate. 
This phase transition as a function of $n$ is completely absent in two uncorrelated tensors.
Hence the canonical purification in this tensor network example is not entirely described by the analog of gluing of two entanglement wedges. 

That said, there is a simple fix, an updated effective description of the canonical purification that does capture this more complicated spectrum.
As we described in Sections \ref{sec:intro} and \ref{sec:single_tensor}, the spectrum of $AA^\star$ appears analogous to that obtained by summing, with appropriate weights, two tensor networks as in \Eqref{eq:CP_effective_1TN}, one with two uncorrelated tensors and one with no tensors at all.
And note, for the purposes of describing $\rho_{AA^\star}$, this effective description is quite consistent with the original, naive gluing of the entanglement wedge, as long as we are far from any $n$-dependent phase transition, i.e. as long as one peak dominates and the other is an exponentially small correction to the state.
Therefore this effective description is nice for at least two reasons: (1) it gives a good description of the reduced density matrix of $A A^\star$, and (2) it clarifies the sense in which we should trust the doubled-and-glued description of the canonical purification.

How good is this effective description for more general purposes? Can we use it to compute things besides the R\'enyi entropies of $AA^\star$, such as the spectrum of $AB$?
At least far from phase transitions, this seems roughly correct:  as long as $|AB| \ll |C|$ or $|AB| \gg |C|$. In the former case, $\rho_{AB}$ is approximately maximally mixed, which is indeed the density matrix in the effective description of that regime ($AB$ maximally entangled with $A^\star B^\star$). In the latter case, $\rho_{AB}$ is approximately maximally mixed on a random dimension $|C|$ subspace, which is indeed the density matrix of the effective description in that regime (two uncorrelated tensors contracted across a dimension $|C|$ leg).
Hence this effective description seems valuable, capturing the far-from-transition physics as well as at least some of the physics near transitions.

We note one interesting subtlety: the two networks in the effective description \Eqref{eq:CP_effective_1TN} are not generally orthogonal. So while it is approximately correct for some purposes to view them as defining separate superselection sectors, their overlap is not always ignorable. 

How does this effective description generalize beyond the single tensor example, to hyperbolic networks?
As we described in Sections \ref{sec:intro} and \ref{sec:general_TN}, it seems the spectrum of $AA^\star$ is well-modeled by as a sum of many tensor networks, each formed by doubling and gluing the tensor network along a different, possibly kinked candidate entanglement wedge, with all tensors chosen independently at random. See \Eqref{eq:CP_effective_hyperbolic}.

What do these lessons say about reflected entropy in a gravitional theory?
Because something is clearly missed in the random tensor network by the naive doubling and gluing of the entanglement wedges, we can expect that the same is true in gravity. Albeit we still expect the Reflected entropy/EW cross-section duality to hold, just not away from $n=1$. 

Likely there is some effective description that improves upon this naive doubling, analagous to the tensor network case sketched in \Eqref{eq:CP_effective_hyperbolic}.
That said, it's not entirely clear how to interpret such an effective description in gravity. For example it is not obvious that the `kinked' geometries in \Eqref{eq:CP_effective_hyperbolic} correspond to any saddles in gravity.
We leave a precise investigation of this question to future work.

\subsection{Non-flat spectrum and building geometry from RTNs}

This effective description offers an interesting possibility.
Perhaps the canonical purification is a useful tool for constructing, out of tensor networks, a geometry with gravity-like area fluctuations.

To explain this, let us recall some background.
While impressive in many ways, the precise relationship between random tensor networks and AdS/CFT is not yet fully understood.
So far, the best understanding is that tensor networks resemble so-called fixed-area states, which have approximately flat R\'enyi spectra just like random tensor networks \cite{Akers:2018fow, Dong:2018seb}.
This matching is quite good; both fixed-area states and random tensor networks have exactly the same non-perturbative corrections to the R\'enyi entropies \cite{Penington:2019kki}.

However, it would be nice to have a random tensor network model of something more realistic than a fixed-area state.
Our results suggest an interesting method for obtaining such a model.
The procedure is as follows.
First, start with some conventional random tensor network, say $\ket{\psi}_{ABC}$.
Second, find the canonical purification $\ket{\sqrt{\rho_{AB}}}$.
As we've shown, $\rho_{AA^*} = \tr_{BB^*}(\ket{\sqrt{\rho_{AB}}}\bra{\sqrt{\rho_{AB}}})$ has a non-flat spectrum.
That said, this is only a partial success. Not all factors of the canonical purification have a non-flat spectrum. In particular, $\rho_{AB}$ is the same as in the original tensor network. To fix this we can iterate, now finding the canonical purification of the canonical purification, this time canonically purifying, say, $\rho_{AA^*}$.
Then all factors have non-flat spectra.
If we like, we can continue to iterate, building up increasingly sophisticated superpositions over tensor networks and associated spectrum. In this way we might even build up complicated tensor networks describing higher dimensional space times just starting from a single random tensor.
This seems a bit like the Eguchi-Kawai mechanism which grows extra dimensions out of large-$N$ matrices. See for example Ref.~\cite{Shaghoulian:2016xbx}.

Having said this, this is not the only method for building up a tensor network with a non-flat spectrum.
Another possibility is to simply add degrees of freedom to each of the legs connecting the tensors, as in \cite{Donnelly:2016qqt}.
This raises the question, is there any reason one might prefer this iterative canonical purification method for building a tensor network with a non-trivial entanglement spectrum?

Here's one possible reason.
In conventional random tensor networks, even those with degrees of freedom on each of the links as in \cite{Donnelly:2016qqt}, the area operators associated to two crossing cuts commute. 
That is, if you consider one cut through the tensor network, and then a second cut that crosses it (but nowhere overlaps it), the `areas' associated to those cuts can be simultaneously fixed. 
This cannot happen for overlapping cuts in AdS/CFT \cite{Bao:2018pvs}.
There, crossing areas do not commute because the area is conjugate to the boost angle across the surface, and fixing one area makes the geometry of the Cauchy surface highly uncertain.
Hence simply adding link degrees of freedom does not capture this subtle behavior seen in gravity.

Very speculatively, perhaps this \emph{is} captured by the iterative canonical purification geometry.
If so, this would be an interesting reason to take these seriously as toy models of holography.
We leave such an investigation to future work. 

\acknowledgments

We would like to thank Xi Dong, Jonah Kudler-Flam, Don Marolf and Henry Maxfield for useful discussions. PR would especially like to thank Xi Dong, Don Marolf and Amir Tajdini for their whole-hearted support and patience through the completion of this project. PR is supported in part by a grant from the Simons Foundation, and by funds from UCSB. 
CA is supported by the Simons foundation as a member of the It from Qubit collaboration. This material is based upon work supported by the Air Force Office of Scientific Research under award number FA9550-19-1-0360. This work benefited from the Gravitational Holography program at the KITP and we would like to thank the KITP, supported in part by the National Science Foundation under Grant No. NSF PHY-1748958.

\appendix 

\section{Symmetric Group and Set Partitions}
\label{app:perm}

In this appendix, we will summarize certain aspects of the symmetric group $S_N$, the group of all permutations on $N$ elements. 
These are closely related to set partitions $P_N$, the set of all partitions of $N$ elements.
We start with a quick review of some well-known theorems of non-crossing permutations $NC_N$.
These theorems can be given a topological meaning via the relation to set partitions, 
which allows us to generalize the theory of non-crossing permutations to multiple disjoint boundaries.
They will play an important role in the proof of the form of the phase diagram for reflected entropy. 

\subsection{Cayley distance}

The Cayley distance $d(g,h)$ is a metric on $S_N$ defined by the minimal number of transpositions, i.e., swaps of two elements, required to go from $g$ to $h$.
Any permutation can be decomposed into disjoint cycles, where a cycle of $n$ elements is represented by the notation $(i_1\,i_2\dots i_N)$.
We first note that conjugation of an element $g$ by an element $h$ results in an element $h g h^{-1}$ with the same structure of cycles as $g$ but with a relabelling of the entries in each cycle dictated by $h$ as
\begin{align}
    (i_1\,i_2\dots i_n) &\mapsto (h(i_1)\,h(i_2)\dots h(i_N)).
\end{align}
Using this it is easy to show that the Cayley metric is both left and right invariant, i.e.,
\begin{align}
    d(g,h) &= d(g x, h x) = d(x g, x h).
\end{align}
The right invariance follows from the definition, while left invariance uses the fact that conjugation of a product of transpositions by $x$ is still a product of transpositions.

In particular, this means we can reduce all calculations of distance between two elements to the distance of an element from the identity element $e$, i.e.,
\begin{align}
    d(g,h) &= d(e,g^{-1} h) = d(e, h g^{-1}).
\end{align}
A special example is a cycle of $k$ elements for which the distance from $e$ is simply $k-1$, as can be seen by constructing an optimal decomposition
\begin{align}
    (i_1\,i_2\dots i_N) &= (i_n\,i_1)\dots (i_3\,i_1)(i_2\,i_1).
\end{align}
Using the above fact and the decomposition of an arbitrary permutation $g$ into $k$ disjoint cycles of size $n_k$, we have
\begin{align}
    d(e,g) &= \sum_{k}\, (n_k -1)= N-\#(g),
\end{align}
where we use the notation $\#(g)=k$ to denote the cycle counting function.\footnote{Note that elements that map to themselves are counted as cycles of size 1.}
This makes it manifest that $d(e,g)=d(e,h g h^{-1})$, i.e., distance from the identity is invariant under conjugation.

\subsection{Non-crossing permutations}
\label{app:NC}
The Cayley metric like any other metric satisfies a triangle inequality
\begin{align}
    d(g,h)+d(h,r) &\geq d(g,r),
\end{align}
where equality implies $h$ lies on a geodesic between $g$ and $r$, and the elements lying on the geodesic are collectively denoted $\Gamma(g,r)$.
In the case where one of the elements is the identity, the set $\Gamma(g,e)$ can be easily classified using the theory of non-crossing permutations as we review below.

\begin{definition}[Non-crossing permutations]
Let $g\in S_N$. Consider a disk with $N$ cyclically ordered (say clockwise) marked points on its boundary and connect the points with directed lines according to the cycle decomposition of $g$.
Then $g$ is non-crossing if and only if every directed cycle are also clockwise oriented and can be drawn in the interior of the disk without ever crossing each other.
The set of non-crossing permutations is denoted $NC_N$.
\end{definition}

\begin{theorem}\label{thm:biane}(Biane \cite{Biane97})
An element $g$ lies on the geodesic between $e$ and maximal cyclic permutation $\tau \equiv (12\cdots N)$, i.e. , satisfies
\begin{align}
\label{eq:biane_cond}
    d(e,g)+d(g,\tau)&=N-1,
\end{align}
or equivalently,
\begin{align}
    \#(g)+\#(\tau g^{-1}) &=N+1,
\end{align}
if and only if it is a non-crossing permutation.
\end{theorem}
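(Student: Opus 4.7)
The plan is to establish both directions of Biane's theorem via a unified topological argument based on Euler's formula. As a first step, I rewrite the geodesic condition in cycle-counting form: using $d(e,g) = N - \#(g)$ and the left-invariance $d(g,\tau) = d(e, g^{-1}\tau) = N - \#(g^{-1}\tau)$, together with the observation that $g^{-1}\tau$ and $\tau g^{-1}$ are conjugate (by $g$) and so share cycle structure, the condition $d(e,g)+d(g,\tau) = N-1$ is equivalent to $\#(g) + \#(\tau g^{-1}) = N+1$. This reduces the problem to proving the sharp inequality $\#(g)+\#(\tau g^{-1})\leq N+1$ and characterizing equality.

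Next, I will establish this inequality for every $g \in S_N$ via a $2$-cell embedding. To the factorization $\tau = g \cdot (g^{-1}\tau)$ I associate a bipartite hypermap on a closed orientable surface $\Sigma$ of minimal genus $g_\Sigma \geq 0$: the vertices are given by the cycles of $g$ together with the cycles of $g^{-1}\tau$ (the two color classes), the $N$ symbols $\{1,\dots,N\}$ furnish the edges, and the faces correspond to cycles of $\tau$, so there is exactly one face. Euler's formula $V - E + F = 2 - 2g_\Sigma$ then gives
\begin{equation}
\#(g) + \#(g^{-1}\tau) - N + 1 = 2 - 2g_\Sigma,
\end{equation}
so that $\#(g) + \#(\tau g^{-1}) = N + 1 - 2g_\Sigma \leq N+1$, with equality precisely when the embedding is planar, i.e.\ $g_\Sigma = 0$.

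Finally, I will identify planarity with the combinatorial non-crossing property. For the ``non-crossing $\Rightarrow$ planar'' direction, I build an explicit embedding on the sphere: place the $N$ marked points cyclically on a great circle so that one of the two disks it bounds has its boundary traced by $\tau$, then draw each cycle of $g$ as a disjoint clockwise-oriented polygon inside a nested sub-disk containing only its elements. Because $g \in NC_N$, these sub-disks can be arranged without crossings, yielding a genus-zero embedding in which the complementary regions correspond bijectively to cycles of $g^{-1}\tau$. For the converse I argue by contraposition: if two cycles of $g$ interleave in the cyclic order induced by $\tau$, then any attempt to realize them as disjoint directed curves in the disk with the boundary points fixed must introduce at least one crossing, forcing the surface to have $g_\Sigma > 0$ and hence the inequality to be strict.

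The main obstacle I anticipate is the last equivalence: turning the purely combinatorial definition of ``non-crossing'' into the topological statement $g_\Sigma = 0$. In particular, one must verify carefully that the face structure of the embedding constructed from an arbitrary $g \in NC_N$ really has $\#(g^{-1}\tau)$ faces, and that the ``interleaving'' obstruction above is actually sharp (no clever embedding can avoid it). I expect this to be handled by a careful case analysis, or alternatively by induction on the number of cycles of $g$, peeling off an innermost non-crossing cycle consisting of consecutive points and reducing to a smaller instance of the theorem.
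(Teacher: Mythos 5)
Your proposal is correct, and it takes a partly different route from the paper. The paper splits the two directions asymmetrically: for ``geodesic $\Rightarrow$ non-crossing'' it uses an elementary induction built on the transposition-counting lemma (each transposition in a minimal factorization of $g^{-1}$ must split a cycle of $\tau$, and the resulting chords on the disk cannot cross), while only the converse direction is handled topologically, via Euler's formula applied to a double-line graph with $2N$ vertices and $3N$ edges. You instead run a single genus computation for the bipartite map of the factorization $\tau = g\cdot(g^{-1}\tau)$, getting $\#(g)+\#(g^{-1}\tau)=N+1-2g_\Sigma$ once and for all (your count $V-E+F$ with $V=\#(g)+\#(g^{-1}\tau)$, $E=N$, $F=1$ is the quotient of the paper's double-line picture and gives the same Euler characteristic; connectedness is automatic since $\tau$ is an $N$-cycle), and then reduce both directions to the single equivalence $g_\Sigma=0 \Leftrightarrow g\in NC_N$. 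What your approach buys is uniformity and a statement that generalizes immediately to the multi-boundary setting the paper needs later (Theorem~\ref{thmgen}); what it costs is exactly the step you flag: the ``interleaving $\Rightarrow$ positive genus'' direction is the genuinely topological content (a Jordan-curve-type obstruction, or your innermost-cycle induction), and the paper's transposition induction sidesteps it entirely. Two small points to tighten: the genus $g_\Sigma$ of the map is determined combinatorially by the rotation system, not obtained as a minimum over embeddings, so ``minimal genus'' should be rephrased; and since your Euler count directly characterizes when $g^{-1}\tau$ together with $g$ tile the disk, you should state explicitly (or prove via the Kreweras-complement symmetry) that planarity of the joint picture is equivalent to $g$ itself being non-crossing, which is the same identification the paper makes when reading off its Figure~\ref{fig:double}.
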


\noindent We need the following lemma.
\begin{lemma}\label{swap}
Suppose $g$ is a permutation such that $\#(g)=k$ and $\sigma$ is a transposition. Then $\#(g \sigma)=k+1$ if and only if the elements exchanged by $\sigma$ are in the same cycle of $g$, else $\#(g \sigma)=k-1$.
\end{lemma}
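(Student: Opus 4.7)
The plan is a direct case analysis on the cycle structure of $g$, exploiting the fact that multiplying by a transposition $\sigma = (a\,b)$ alters the permutation only at the two points $a$ and $b$.

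First I would fix the convention $(g\sigma)(x) = g(\sigma(x))$ and note the elementary observation that drives the entire argument: $g\sigma$ agrees with $g$ on every element except $a$ and $b$, where
\begin{equation}
(g\sigma)(a) = g(b), \qquad (g\sigma)(b) = g(a).
\end{equation}
In particular, every cycle of $g$ that contains neither $a$ nor $b$ is also a cycle of $g\sigma$, so all the action happens within the cycle(s) that contain $a$ and $b$.

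Next I would split into two cases and chase orbits explicitly. In Case 1, when $a$ and $b$ lie in the same cycle of $g$, write this cycle as $(a\,x_1\,\ldots\,x_{j-1}\,b\,y_1\,\ldots\,y_{\ell})$ with $g$ advancing each element to the next (and $g(y_\ell) = a$). Following orbits under $g\sigma$ one finds that the cycle splits into the two cycles $(a\,y_1\,\ldots\,y_\ell)$ and $(b\,x_1\,\ldots\,x_{j-1})$, so $\#(g\sigma) = k+1$. In Case 2, when $a$ and $b$ lie in distinct cycles $(a\,x_1\,\ldots\,x_p)$ and $(b\,y_1\,\ldots\,y_q)$ of $g$, the same orbit-chasing shows these merge into the single cycle $(a\,y_1\,\ldots\,y_q\,b\,x_1\,\ldots\,x_p)$ under $g\sigma$, giving $\#(g\sigma) = k-1$. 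Since these two cases are exhaustive and mutually exclusive, both directions of the ``if and only if'' follow immediately.

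There is no real obstacle here: the statement is a classical combinatorial fact, and the proof is essentially a bookkeeping exercise. The only point requiring mild care is being consistent about the composition convention, since the roles of ``splitting'' and ``merging'' are preserved but the exact form of the resulting cycles depends on whether one writes $g\sigma$ or $\sigma g$. Once the convention is fixed, the two orbit computations above deliver the result in a few lines.
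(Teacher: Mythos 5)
Your proof is correct and follows essentially the same route as the paper's: an explicit case analysis showing that $g\sigma$ splits the common cycle into two when $a,b$ are in the same cycle of $g$, and merges the two cycles otherwise, with the exhaustive/exclusive dichotomy giving both directions of the equivalence. The explicit cycle forms you write down match those in the paper (up to labelling and the fixed composition convention), so there is nothing further to add.
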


\begin{proof}
Suppose the elements transposed by $\sigma$, labelled $i_1$ and $i_2$, are part of the same cycle in $g$, then $g \sigma$ splits into two cycles of the form
\begin{align}
    (i_1,\,g(i_2),\dots,g^{-1}(i_1))(i_2,\,g(i_1),\dots,g^{-1}(i_2)).
\end{align}
On the other hand if $i_1$ and $i_2$ were not part of the same cycle in $g$, then $g \sigma$ couples them into a single cycle of the form
\begin{align}
    (i_1,\,g(i_2),\dots,\,i_2, \,g(i_1),\dots).
\end{align}
\end{proof}

\begin{figure}
    \centering
    \includegraphics[scale=0.6]{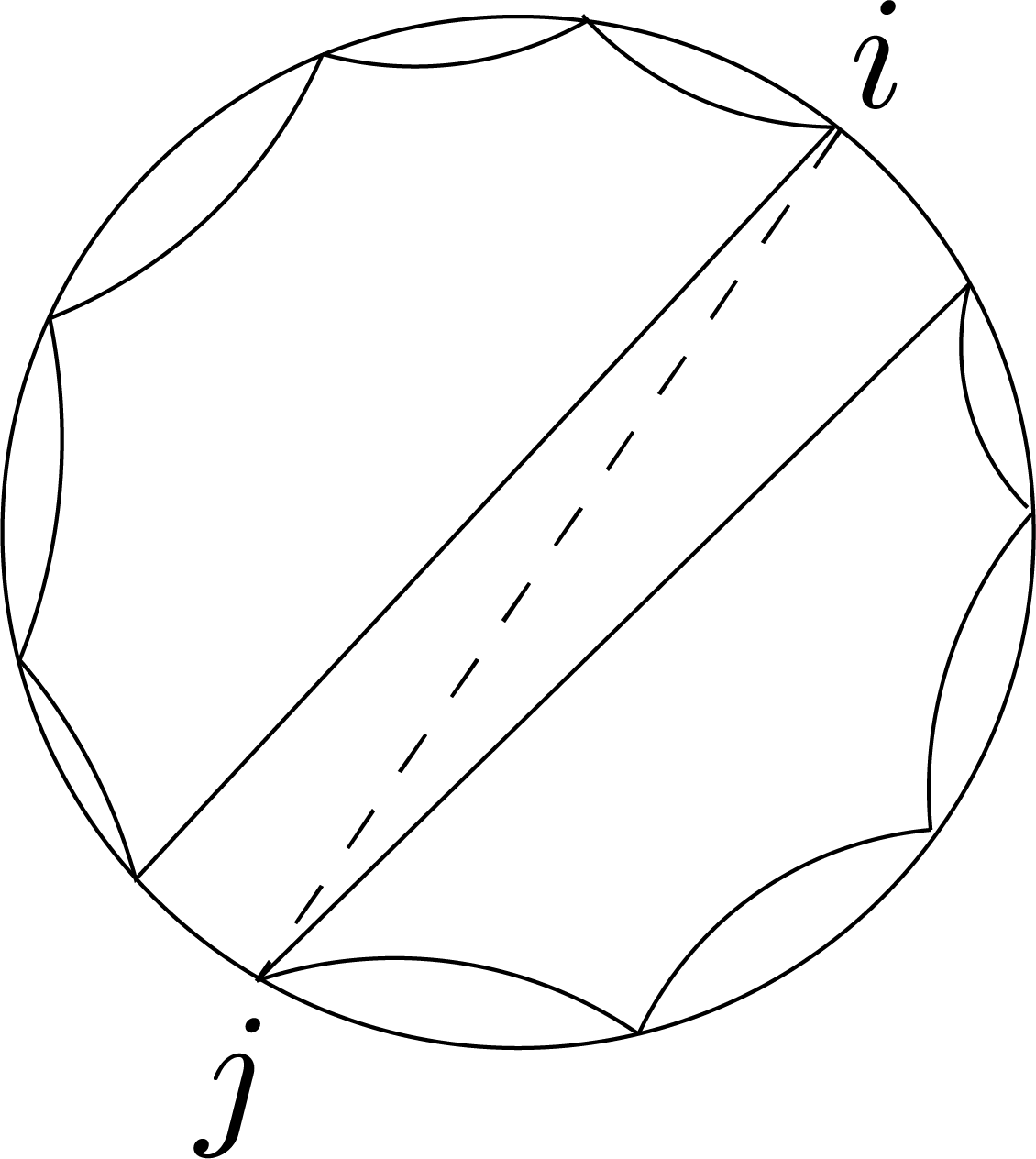}
    \caption{The transposition $(i,j)$ breaks the cyclic permutation $\tau$ into two cycles divided across the chord joining points $i$ and $j$. In order to satisfy the geodesic condition, all further transpositions should act on elements within the given cycles. Thus, we end up with a non-crossing permutation.}
    \label{fig:NCn}
\end{figure}

\begin{proof} (of Theorem \ref{thm:biane}).
Consider an element $g$ such that $d(g,e)=k$ or $\#(g) = N-k$.
Using Lemma~\ref{swap}, we see that $\tau g^{-1}$ has at most $k+1$ cycles and this precisely happens when the transpositions generating $g^{-1}$ break the cycles of $\tau$ at each step.
The action of a transposition $(i\,,j)$ is to break $\tau$ into cycles $(1,2,\dots,i\,,j+1,\dots,N)$ and $(i+1,\dots,j)$.
This can be represented on a disk as two cycles divided by the chord joining elements $i$ and $j$ as seen in \figref{fig:NCn}.
This process can be repeated $k$ times, while ensuring that transpositions always act on elements within the same cycle.
The figure makes it clear that the element $\tau g^{-1}$ obtained this way is a non-crossing permutation.

To prove the other direction, it is useful to introduce a topological representation as shown in \figref{fig:double}, where the green blocks represent the group action of $g$ while the orange blocks represents the action of $\tau g^{-1}$.
Theses blocks can never cross as implied from the non-crossing condition.
Now reinterpreting the figure as a graph we identify $2N$ vertices, $3N$ edges and the faces represent a cycle of either $g$ or $\tau g^{-1}$.
Thus, using the Euler formula we have
\begin{align}
    &V-E+F= \#(g)+\#(\tau g^{-1}) -N =2-2 G - B\\
    &\implies \#(g)+\#(\tau g^{-1})= N+1 -2G,
\end{align}
where $G$ is the genus of the Riemann surface and we have used $B=1$ for the number of boundaries.
Since the graph is planar, $G=0$, and we find that the geodesic condition is satisfied for non-crossing permutations.
\end{proof}

\begin{figure}
    \centering
    \includegraphics[width=.5\textwidth]{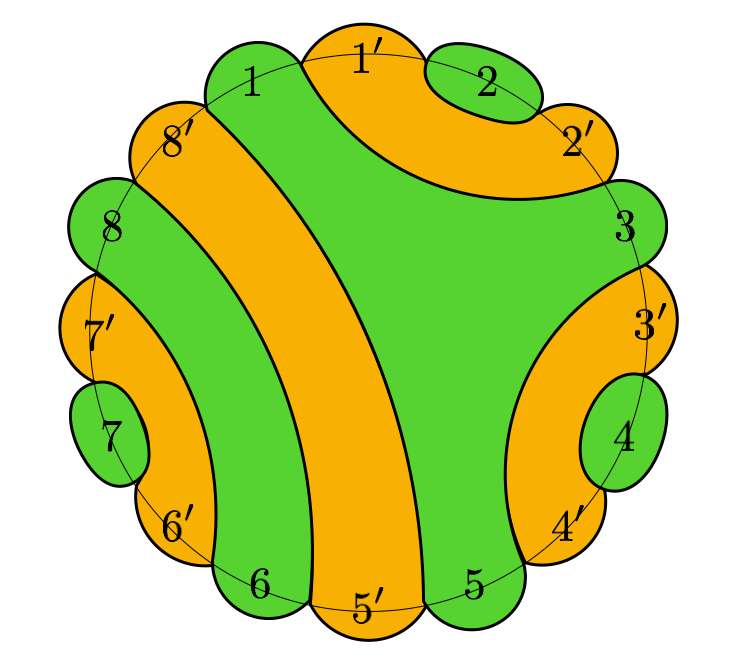}
    \caption{The group action of a non-crossing permutation $g$ is represented by its orbits (green) acting on the elements $i\in{1,2,\dots,N}$, where $N=8$. The action of $\tau g^{-1}$ is represented by its orbits (orange) acting on the same elements denoted $i'$ for clarity.}
    \label{fig:double}
\end{figure}

We note a nice corollary of theorem \ref{thm:biane}.
\begin{corollary}\label{thm:biane2}
Suppose $g$ is a permutation with a decomposition into disjoint cycles ${B_1,B_2,\dots, B_k}$. Then an element $h$ lies on the geodesic between $e$ and $g$ if and only if  $\forall j$ $B_j$ is a union of disjoint cycles of $h$, and $h$ restricts to a non-crossing permutation on each $B_j$. In this case, non-crossing is defined with respect to the orientation of cycles of $g$.
\end{corollary}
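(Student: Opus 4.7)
The plan is to translate the geodesic condition $d(e,h) + d(h,g) = d(e,g)$ into the equivalent cycle-count identity $\#(h) + \#(g h^{-1}) = N + k$, where $k = \#(g)$, and then reduce the whole statement to Theorem~\ref{thm:biane} applied block-by-block. Write $g = g_1 g_2 \cdots g_k$, where $g_j$ is the cyclic permutation on $B_j$; the $g_j$ commute (disjoint supports), and $d(e,g) = \sum_j (n_j - 1) = N - k$. The ``if'' direction will then be immediate: given $h = h_1 \cdots h_k$ with each $h_j$ supported on $B_j$ and non-crossing with respect to $g_j$, Biane's theorem applied to each $B_j$ gives $d(e,h_j) + d(h_j, g_j) = n_j - 1$, and since Cayley distance is additive across disjoint supports, summing over $j$ reproduces the geodesic identity.

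For the converse, the central step will be to show that each cycle of $h$ is contained in some $B_j$, i.e., that the cycle partition of $h$ refines that of $g$. The plan is to exploit simultaneous minimal decompositions into transpositions. Setting $p = d(e,h)$ and $q = d(h,g)$, I would fix a minimal decomposition $h = \sigma_1 \cdots \sigma_p$ and, using left invariance $q = d(e, h^{-1}g)$, a minimal decomposition $h^{-1} g = \tau_1 \cdots \tau_q$. Their concatenation gives $g = \sigma_1 \cdots \sigma_p \tau_1 \cdots \tau_q$, a factorization of length $p + q = d(e,g)$ and hence itself minimal. Writing $\pi_i$ for the $i$-th partial product (so $\pi_0 = e$, $\pi_p = h$, $\pi_{p+q} = g$), minimality forces $\#(\pi_i) = N - i$, strictly decreasing. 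By Lemma~\ref{swap} a transposition changes the cycle count by only $\pm 1$, so the decrease at every step must come from merging two distinct cycles of $\pi_i$ into a single cycle of $\pi_{i+1}$. Iterating from $i = p$ up to $i = p+q$, the cycle partition of $g$ is produced from that of $h$ by successive fusions, and in particular each $B_j$ is a disjoint union of cycles of $h$.

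With the refinement property in hand, the non-crossing condition on each block will follow from Biane's theorem with essentially no further work. Setting $h_j = h|_{B_j}$, the factorization $h = h_1 \cdots h_k$ has disjoint supports aligned with $g = g_1 \cdots g_k$, so additivity of the Cayley distance yields $d(e,h) + d(h,g) = \sum_j \bigl(d(e,h_j) + d(h_j,g_j)\bigr)$. In each summand the triangle inequality gives $d(e,h_j) + d(h_j,g_j) \geq n_j - 1$, and the assumed saturation $d(e,h) + d(h,g) = \sum_j (n_j-1)$ forces each individual inequality to saturate. Since $g_j$ is the maximal cyclic permutation on $B_j$ with the orientation inherited from $g$, Theorem~\ref{thm:biane} applied inside the symmetric group on $B_j$ then concludes that $h_j$ is non-crossing with respect to $g_j$.

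The step I expect to be the main obstacle is the refinement argument of the second paragraph: one must carefully justify both the existence of a minimal decomposition of $g$ that passes through $h$, and the fact that monotone-decreasing cycle counts force every intermediate step to be a fusion rather than a split. A conceptually illuminating alternative, which I would sketch as a cross-check, mirrors the topological proof of Theorem~\ref{thm:biane}: draw $k$ disjoint disks, one per cycle of $g$, with marked points arranged in the cyclic order dictated by $g|_{B_j}$, and draw the orbits of $h$ and $gh^{-1}$ as ``bubbles.'' The Euler formula $V - E + F = 2 - 2G - B$ applied with $B = k$ and $G = 0$ then recovers $\#(h) + \#(gh^{-1}) \leq N + k$, with equality precisely when no bubble is forced to straddle two disks -- i.e., precisely when both the refinement condition and the per-block non-crossing condition hold.
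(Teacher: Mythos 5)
Your proof is correct and follows the same basic strategy as the paper, which disposes of the corollary in one line as ``a simple application of Theorem~\ref{thm:biane} for each $B_j$ separately.'' The one substantive thing you add is the justification that the cycle partition of $h$ must refine that of $g$ in the ``only if'' direction --- via concatenating minimal transposition factorizations of $h$ and $h^{-1}g$ and invoking Lemma~\ref{swap} to force every step to be a fusion --- a point the paper leaves implicit even though it is the only part of the corollary not literally contained in a per-block application of Theorem~\ref{thm:biane}. The rest (additivity of the Cayley distance over disjoint supports, per-block saturation of the triangle inequality) matches the paper's intent exactly.
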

\begin{proof}
This follows from a simple application of Theorem~\ref{thm:biane} for each $B_j$ separately.
\end{proof}

\noindent \emph{Remark.}
The topological representation (\figref{fig:double}) relates a permutation to a graph defined on $2N$ vertices. The converse is also true and there is a bijection between such graphs and permutation elements $g\in S_N$.
However the corresponding graph is only planar iff $g\in NC_N$.
In general one can think of embedding this graph onto some Riemann surface and make this statement much more refined by relating the genus
directly to the failure to comply the geodesic condition, i.e.
For $g,h\in S_N$,
\begin{equation}
\label{prop:topology}
    d(e,g) + d(g,h) = d(e,h) + 2(C - B) + 2G 
\end{equation}
where $B$ is the number of boundaries in the graph which is set by $\#(h)$ , $C$ is the number of connected components of the graph and $G$ is the genus of the surface.
\footnote{The meaning of $C$ and $G$ is ambiguous at this point. In particular the genus of the embedding surface is not the same as the common definition of the genus of the graph itself since our surface has nontrivial boundaries. We will clarify what we mean by these numbers when we set out to prove this proposition.}
For example when $h=\tau$, we have $B=C=1$ and we recover the geodesic condition \Eqref{eq:biane_cond}.
This allows us to study the group geodesic in a more general setting and have a better handle on the resolvent calculation.
We will give a proof for this proposition in appendix \ref{app:perm_topology}.

In the calculation of various entropic quantities we will encounter group summations over non-crossing permutations, weighted by the individual cycle counts.
The outcome of these summations can be expressed in terms of \emph{q-Catalan numbers}. 
\begin{definition}[q-Catalan numbers]
Given any positive integer $n\in \mathbb{N}$ and $q \in \mathbb{C}$, the \emph{q-Catalan number} $C_n(q)$ is defined by the following sum
\begin{equation}
    C_n(q) = \sum_{g\in NC_n} q^{\#(g)} = \sum^n_{k=0} q^k N(n,k),
\end{equation}
where $N(n,k)=\begin{pmatrix}
      n \\ p
    \end{pmatrix}
    \begin{pmatrix}
      n \\ p-1
    \end{pmatrix}$
    are called the \emph{Narayama Numbers}.
    \end{definition}
    $N(n,k)$ counts the number of distinct non-crossing permutations with exactly $k$ cycles. Thus $C_n(q)$ is also the generating function for Narayama Numbers.
    The q-Catalan numbers can be expressed in terms of the Hypergeometric functions
  \begin{equation}
    C_n(q) = q \;_2F_1 (1-n,-n;2;q)
  \end{equation}
  For positive integer $n>0$ we have the following relationship:
  \begin{equation}
    C_n(1/q) = q^{-n-1}C_n(q)
  \end{equation}
  This is not true for non-integer $n$.
  Because of the ambiguity in Hypergeometric function due to the branch cut at $q=1$ we give the following analytic continuation in $n$ at fixed $q$:
  \begin{definition}[analytic continuation of q-Catalan numbers]
  For $n>0$ and $q>0$,
  \begin{equation}
    C_n(q) \equiv
    \begin{cases}
      q \;_2F_1 (1-n,-n;2;q) , \quad &q\le 1 \\
      q^n \;_2F_1 (1-n,-n;2;1/q), \quad &q>1
    \end{cases}
  \end{equation}
  \end{definition}

\subsection{Set partitions}
\label{app:set_partitions}

We now establish the relation between symmetric group $S_N$ and set partitions $P_N$.
We then show that it is naturally equipped with a lattice structure which defines a way to compare elements in $P_N$.

A partition $p \in P_N$ is a disjoint set of subsets, or blocks, whose disjoint union is $\mathbb{Z}_N$. Given a $g \in S_N$ we can use the cycles to produce a partition $P : S_N \rightarrow P_N$. 
This map is surjective but not injective. 
For example, if $g = (132)(45)\in S_5$ then $P(g) = \{\{1,2,3\},\{4,5\}\}$.
We can similarly define the counting function $\#(p)$ as the number of blocks in the partition. So $\#(g) = \#(P(g))$. We denote the finest partition by $e = \{ \{1\},\{2\}, \ldots ,\{N\}\}$
and the coarsest by $\{\mathbb{Z}_N\}$. 

There is a natural partial order on such partitions given by a refinement of the partitions or sub-partitions. That is $p_1 \leq p_2$ if for every block $ c \in p_2$ there is a subset of blocks in $p_1$ that forms a partition of $c$.
It turns out $P_N$ satisfies nicer properties that makes it a lattice.
We review the definition and basic properties of lattices below.
\begin{definition}[Lattice]
A lattice is a partially ordered set $L$ in which each two elements $a,b\in L$ always have a meet and join, where:
\end{definition}
\begin{definition}[Meet and Join]
Let $P$ be a partially ordered set and $a,b \in P$. The \emph{join} of $a$ and $b$, denoted $a \vee b$, is the least upper bound of $a$ and $b$, i.e. $a\vee b \le x$ for every $x$ that simultaneously satisfies $x>a$ and $x>b$. Conversely, the \emph{meet} of $a$ and $b$, denoted $a\wedge b$, is the greatest lower bound of $a$ and $b$, i.e. $a\wedge b \ge y$ for all $y$ such that $y<a$ and $y<b$.
\end{definition}
The join satisfies certain properties: 
\begin{align}
    a \vee b &= b \vee a \quad &\text{(commutativity)} \\
    a \vee (b \vee c) &= (a \vee b) \vee c \quad  &\text{(associativity)} \\
    a \vee a &= a, \quad &\text{(idempotent)}
\end{align}
 and similarly for the meet.  If $ a \leq b$ then $a \vee b = b$ and $a \wedge b = a$. They commute with the order, e.g. $a \leq b \implies a \vee c \leq b \vee c$. Note however in general they do not satisfy the distributive property that is familiar from set intersections and unions, i.e. $a\vee (b\wedge c) \neq (a\vee b) \wedge (a\vee c)$.

We now show that the set partition $P_N$ is indeed a lattice by explicitly constructing $p_1 \vee p_2$ and $p_1\wedge p_2$.
The latter $p_1 \wedge p_2$ is simply the set of pairwise non-empty intersections taken over all blocks in $p_1, p_2$. While the former can be found recursively via: $p_1 \vee p_2 = p_1 \vee (d_1) \vee (d_2) \ldots $ where $p_2 = (d_1) (d_2) \ldots$ is 
the block decomposition. Then for a single block: $q \vee (d_1)$ is simply $(\cup_{d_1 \cap c_i \neq \emptyset} c_i) q'$ with $(\cup_{d_1 \cap c_i \neq \emptyset} c_i) \equiv$ the single block formed by all $q$'s that intersect $d_1$
and $q' = \prod_{d_1 \cap c_i = \emptyset}  (c_i)$ are the remaining blocks that do not. 

 It turns out this lattice is also graded and (upper) semimodular.
 \begin{definition}[Grading of a lattice]
 A lattice $L$ is \emph{graded} if there exists a map
 $\rho: L\to \mathbb{N}$ such that for $a,b \in L$:
 \begin{itemize}
    \item It is compatible with the ordering of the lattice:
      we have $\rho(a) > \rho(b)$ when $a>b$.
    \item It is compatible with the covering condition:
      $\rho(a) = \rho(b) +1 $ iff $a$ covers $b$ (denoted $a :> b$), i.e. $a > b$ and there is no other $c\in L$ such that $a > c > b$.
 \end{itemize}
 The map $\rho$ is called a $grading$ of the lattice $L$.
 \end{definition}
 
For the lattice at hand the grading $\rho(q) = N - \#(q)$ satisfies these properties. 
We can check that $\rho$ is graded by the fact that when $p_1 > p_2$ and there is no other $p_3$ such that $p_1 > p_3 > p_2$, then $p_1$ must to be formed from $p_2$ by merging two blocks implying that $\rho(p_1) = \rho(p_2) + 1$. 

A graded lattice $L$ is semimodular iff the grading $\rho$ satisfies the following property:
\footnote{For a general lattice $L$ the semimodular condition is that $\forall a,b \in L, a \wedge b <: a$ implies $b <: a \vee b$. The equivalence of two definitions can be found on standard textbooks on lattice theory, e.g. Birkoff~\cite{birkhoff1967lattice}.}
\begin{equation}
\label{semmod}
 \rho( a \vee b) + \rho( a \wedge b) \leq \rho(a) + \rho(b) \quad \forall a, b \in L
\end{equation}
We give a proof that set partitions satisfy this inequality.

\begin{lemma}[Semimodularity of $P_N$]
\label{aproof}
For all $p_1,p_2 \in P_N$ then:
\begin{equation}
\#(p_1 \vee p_2) + \#(p_1 \wedge p_2) \geq \#(p_1) + \#(p_2)
\end{equation}
with equality iff the graph, formed by $\#(p_1) + \#(p_2)$ vertices, and connected via $\#(p_1 \wedge p_2)$ edges in the natural way, is a disconnected union of tree graphs.
\end{lemma}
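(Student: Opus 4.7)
The plan is to reinterpret the quantities $\#(p_1)$, $\#(p_2)$, $\#(p_1\wedge p_2)$, and $\#(p_1\vee p_2)$ as structural invariants (vertices, edges, and connected components) of a single bipartite graph $G(p_1,p_2)$, and then invoke the elementary inequality $V-E\leq C$ for graphs, which is saturated precisely on forests.

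First I would define $G=G(p_1,p_2)$ as the bipartite graph whose two vertex classes are the blocks of $p_1$ and the blocks of $p_2$, so that $V(G)=\#(p_1)+\#(p_2)$. I would declare that a block $c\in p_1$ and a block $d\in p_2$ are joined by an edge whenever $c\cap d\neq\emptyset$. Since the blocks of $p_1\wedge p_2$ are precisely the non-empty pairwise intersections $c\cap d$ with $c\in p_1$, $d\in p_2$, this yields exactly $E(G)=\#(p_1\wedge p_2)$ edges.

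Next I would identify the number of connected components $C(G)$ with $\#(p_1\vee p_2)$. The key observation is that two elements of $\mathbb{Z}_N$ lie in the same block of $p_1\vee p_2$ iff they can be linked by a finite chain $c_1,d_1,c_2,d_2,\ldots$ of blocks of $p_1$ and $p_2$ in which consecutive entries have non-empty intersection; this is exactly the path relation in $G$. Using the recursive construction of $p_1\vee p_2$ reviewed in the excerpt (iteratively merging those $c_i\in p_1$ that intersect a given $d_j\in p_2$), one sees that each connected component of $G$ corresponds to exactly one block of $p_1\vee p_2$, giving $C(G)=\#(p_1\vee p_2)$.

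Having made these identifications, the claim reduces to the standard graph-theoretic inequality $V(G)-E(G)\leq C(G)$, proved by induction on the number of edges (each edge added either lowers $C$ by one, if it connects two components, or leaves $C$ unchanged). Substituting the three identifications above gives
\begin{equation}
\#(p_1)+\#(p_2)-\#(p_1\wedge p_2)\leq \#(p_1\vee p_2),
\end{equation}
which is the desired inequality. The equality case of $V-E\leq C$ is well known to hold iff $G$ contains no cycles, i.e.\ iff $G$ is a disjoint union of tree graphs, which gives the sharp condition stated in the lemma. The only step where one has to be slightly careful is the identification $C(G)=\#(p_1\vee p_2)$, since one must check both directions of the equivalence between the join relation on $\mathbb{Z}_N$ and the path relation on block-vertices of $G$; this is the main (but still routine) obstacle in the argument.
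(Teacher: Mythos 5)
Your proposal is correct and is essentially the same argument as the paper's: the paper builds the identical bipartite block-intersection graph, notes $V=\#(p_1)+\#(p_2)$ and $E=\#(p_1\wedge p_2)$, and applies the forest/connectivity bound (it just phrases it as $E\geq V-1$ after reducing WLOG to a single block of $p_1\vee p_2$, whereas you use $V-E\leq C$ on the full graph). The equality characterization via cycle-freeness also matches.
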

\begin{proof}
Without loss of generality we can consider the case where $ \#( p_1 \vee p_2)  = 1$ since the different blocks in $p_1 \vee p_2$ 
contribute independently to the inequality. So the general case can be written as a sum over this case.  

Consider a bi-partitle graph formed by black vertices for each block in $c_1^i \in p_1$ and white vertices for each block in $c_2^i \in p_2$.
Connect the vertices by edges for each non-trivial intersection $c_1^i \cap c_2^j \neq \emptyset$. There must be exactly $E=\#( p_1\wedge p_2)$ edges by definition.
The graph must also be connected by $p_1 \vee p_2 = \mathbb{Z}_N$. A connected graph has the simple bound:
\begin{equation}
E \geq V -1 
\end{equation}
which is easy to prove by induction on the number of vertices. But $V = \#(p_1) + \#(p_2)$ and we are done. 
Saturation of the inequality implies the graph is a tree (with no cycles), and this follows from the induction proof mentioned above. 
\end{proof}

\subsection{Annular non-crossing and topology}
\label{app:perm_topology}
We now make precise the statement in \eqref{prop:topology} and provide a proof for it. 
In particular we will define what we mean by the embedding surface and various topological quantities associated to it.
The statement of proposition gives rise to an inequality which is saturated when the genus $G=0$.
This marks the proposition as a generalization to the geodesic conditions of non-crossing permutations and we will denote the set of the elements that saturates the inequality as \emph{multi-annular non-crossing permutations}.
\footnote{This terminology comes from \cite{Kim12annular} who first studied the case of two boundaries, where the embedding surface of the graph is annular.}

Fix a special element $g_0\in S_N$ with cycle decomposition $g_0= \prod_i c_i^0$. 
From corollary \ref{thm:biane2} we know that geodesics to the identity $g \in \Gamma(e,g_0)$ are products of non-crossing permutations of length $| c^0_i |$ for each cycle. We will
denote these simply as $NC_{g_0} \equiv NC_{ |c^1_0|, |c^2_0| \ldots}$. 
There is a geometric picture of these elements as curves living on a disjoint union of $\#(g_0)$ disconnected discs.

Beyond non-crossing permutations we now discuss multi-annular non-crossing permutations that allow for connections between the different $g_0$ cycles but continue to have zero genus.
We will need a way to describe how the different cycles in $g_0$ are connected through the action of $g$.
\begin{definition}[Connectedness]
Given two elements $g,g_0 \in S_N$. 
We define the \emph{connectedness} of $g$ over $g_0$, written as $q_{g_0}(g)$, to be the set quotient ($P(g) \vee P(g_0))/P(g_0)$, which is itself a partition $q_{g_0}(g) \in P_{\#(g_0)}$.
\end{definition}
Note that the quotient is well defined since $P(g_0) \le P(g)\vee P(g_0)$.
We will often drop the subscript $g_0$ when it is clear which base permutation we refer to.
The connectedness $q_{g_0}(g)$ measures how the different orbits of $g_0$ are joined by actions of $g$.
Besides connectedness there is another quantity we can define that measures the number of connected components.
\begin{definition}
Given $g,g_0 \in S_N$. We denote $\#(g\vee g_0)$ to be the number of orbits of where  $\mathbb{Z}_N$ is split under the joint action of $g$ and $g_0$. By the ``joint action'' we mean the action on $\mathbb{Z}_N$ by the subgroup generated by $\braket{g,g_0}$.
\end{definition}
Note that $\#(g\vee g_0) = \#(P(g) \vee P(g_0)) = \#(q_{g_0}(g))$.
This number equals $\#(g \vee g_0) = \#(g_0)$ for $g \in \Gamma(g_0, e)$. More generally, we have to discuss the topology of permutation elements. 

\begin{definition}
Given elements $g, g_0 \in S_N$ we define an \emph{admissible surface} $\Sigma$ for $g$ (based over $g_0$) as a disjoint union of oriented two dimensional Riemann surfaces 
with a total of $\#(g_0)$ connected boundaries,  and such that it is possible to decorate this surface as follows:
\begin{enumerate}
\item Each connected boundary has $| (c_i^0) |$ marked points where $i = 1, \ldots  ,\#(g_0)$ labels the boundary.
Each point is ordered and marked using an element from the cycle $(c_i^0) $. The orientation of the ordering is fixed by the orientation of the surface.

\item The permutations in $g$ are represented by oriented curves on $\Sigma$ that pass between two different marked points on the boundaries according to $g$. Each marked point has a curve entering and leaving, and we locally pick these in the same direction as the cycles in $c_0$ (see Figure below.)

\item The curves are all mutually \emph{non-crossing}. 
\end{enumerate}

\begin{equation}
\nonumber
\centering
\includegraphics[scale=.8]{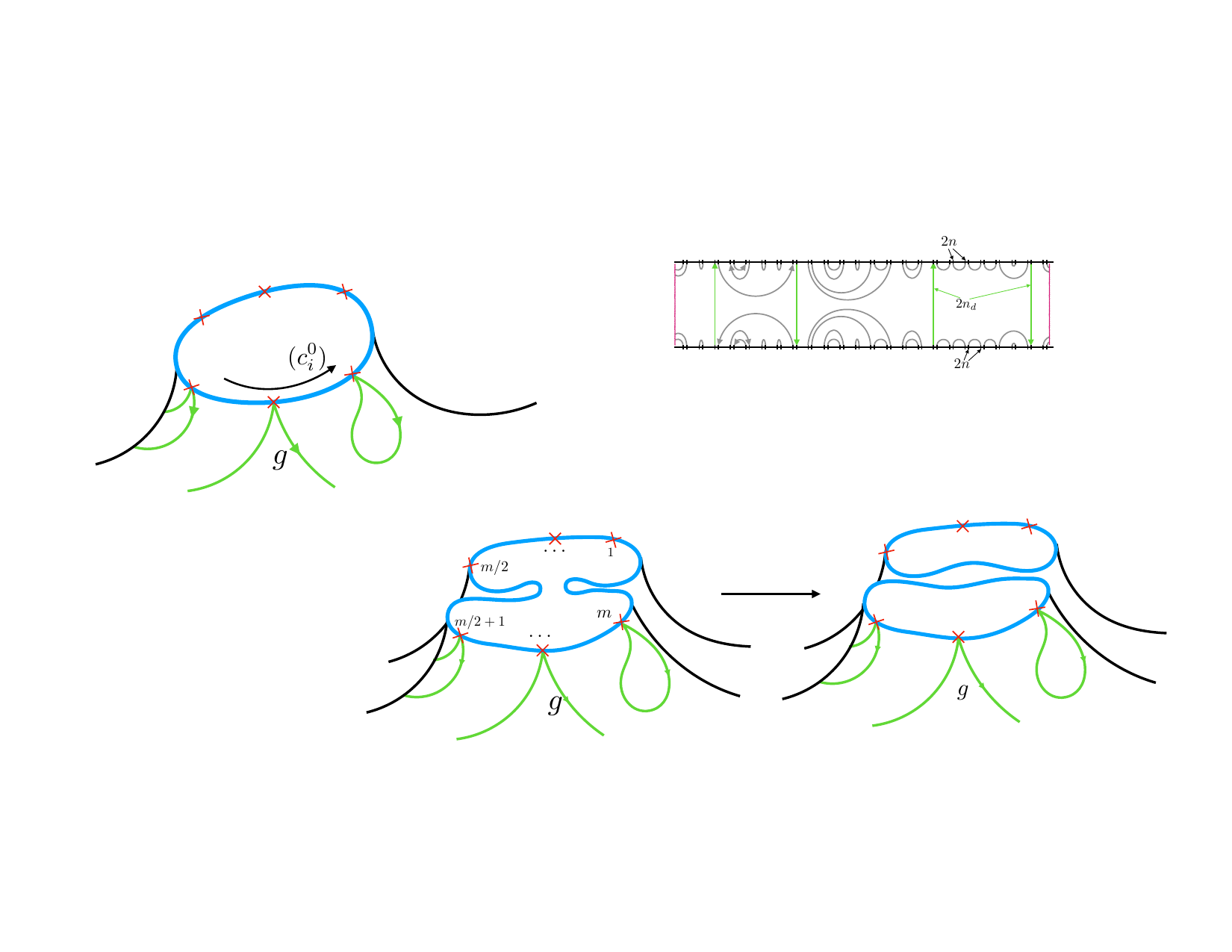}
\end{equation}

\end{definition}

\begin{theorem}
\label{thmgen}
For all $g \in S_{N}$ write:
\begin{equation}
\label{defG}
d(g_0,g) + d(g, e) = d(g_0, e) + 2( \#(g_0) - \#(g \vee g_0) ) + 2 G_{g_0}(g)
\end{equation}
then $G_{g_0}(g) \geq 0$. Furthermore there exists an \emph{admissiable} surface for $g$ that has genus $G_{g_0}(g)$ and $\#(g \vee g_0) $ connected components. This is the minimal possible genus and maximal possible number of connected components. 
\end{theorem}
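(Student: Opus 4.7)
The plan is to identify $G_{g_0}(g)$ with the genus of a canonical closed surface associated to the pair $(g,g_0)$, and then show that every admissible surface dominates it topologically. Concretely, I would construct the bipartite ribbon graph $\Gamma(g,g_0)$ with $\#(g)+\#(g_0)$ vertices (one per cycle of each permutation) and one labelled edge for every $i\in\{1,\ldots,N\}$ joining the $g$-cycle and $g_0$-cycle containing $i$, with cyclic orders at each vertex inherited from the corresponding cycle. Thickening ribbons and capping the resulting boundary circles with disks produces an oriented closed surface $\overline{\Sigma}$. A short calculation with the face permutation $\sigma\tau$ (vertex cyclic orders composed with the edge involution) identifies the faces with cycles of $g_0 g^{-1}$, so $F=\#(g_0 g^{-1})=N-d(g_0,g)$, and identifies its connected components with the $\#(g\vee g_0)$ orbits of $\langle g,g_0\rangle$.

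The first half of the theorem then follows directly: applying Euler's formula $V-E+F=2C-2G$ to $\overline{\Sigma}$ with $V=\#(g)+\#(g_0)$, $E=N$, and using $\#(h)=N-d(h,e)$ recovers precisely the defining identity \eqref{defG} with $G=G(\overline{\Sigma})$, a manifestly non-negative integer. To produce an admissible surface realizing these values, I would perform two surgeries on $\overline{\Sigma}$: delete a small open disk around each $g_0$-vertex (creating boundary components with the right number of cyclically ordered marked points), and blow up each $g$-vertex into a small disk whose boundary pairs consecutive incident half-edges according to the cyclic order of the $g$-cycle. The $N$ ribbons thereby amalgamate into $N$ mutually non-crossing arcs realizing exactly the permutation $g$, yielding an admissible surface of genus $G_{g_0}(g)$ with $\#(g\vee g_0)$ components.

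For minimality, one argues as follows. On any admissible $\Sigma'$, marked points in the same $\langle g,g_0\rangle$-orbit must lie in the same connected component (boundary arcs connect orbits of $g_0$, and $g$-arcs connect orbits of $g$), so $C(\Sigma')\leq\#(g\vee g_0)$. For the genus bound, consider the $1$-complex in $\Sigma'$ formed by the $N$ marked points, the $N$ boundary arcs, and the $N$ non-crossing $g$-arcs; a local corner analysis at each marked point — using that the two boundary arcs there are cyclically adjacent on the boundary side of $\Sigma'$ — shows that disk-face boundaries are traced by the action of $g_0 g^{-1}$ on corners, bounding the number of disk faces by $\#(g_0 g^{-1})$, and any non-disk face only reduces $\chi(\Sigma')$ further. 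Substituting $V=N$, $E=2N$, $F\leq\#(g_0 g^{-1})$ and $C(\Sigma')\leq\#(g\vee g_0)$ into $\chi(\Sigma')=2C(\Sigma')-2G(\Sigma')-\#(g_0)$ then yields $G(\Sigma')\geq G_{g_0}(g)$, saturated by the construction above.

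The hardest step is the local identification in the minimality argument: fixing the cyclic ordering of the four arcs meeting at each marked point in the way forced by non-crossing-ness and the boundary orientation of $\Sigma'$, and then rigorously checking that face-boundary traversal with $\Sigma'$ on the left is literally the action of $g_0 g^{-1}$ on labelled corners. This is where all the orientation bookkeeping of annular non-crossing permutations actually lives; once it is settled, both the genus lower bound and the maximality of $C$ follow immediately from Euler's formula. A possibly slicker alternative would be to prove directly that every admissible surface deformation-retracts onto a regular neighbourhood of $\Gamma(g,g_0)$, from which $G(\Sigma')\geq G(\overline{\Sigma})=G_{g_0}(g)$ would follow automatically.
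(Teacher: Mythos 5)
Your first half (the identity \eqref{defG} and the existence of a realizing surface) is correct and takes a genuinely different, cleaner route than the paper: you read the identity off the closed bipartite ribbon-graph surface via the constellation $(g,\,g^{-1}g_0,\,g_0^{-1})$ and then carve an admissible surface out of it, whereas the paper only obtains the identity after performing surgery on an arbitrary admissible surface and triangulating the result. That part I would accept.

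The minimality step, however, contains a concrete error: your face count is wrong. The complementary regions of the $1$-complex (marked points, boundary arcs, $g$-arcs) on an admissible surface come in \emph{two} families — the regions swept out by the corners between consecutive boundary arcs and $g$-arcs, whose boundary traversals realize the cycles of $g g_0^{-1}$, \emph{and} the regions enclosed by the closed curves obtained by concatenating the arcs of each cycle of $g$ (the corner between the incoming and outgoing $g$-arc at each marked point). The correct bound is therefore $F \leq \#(g) + \#(g g_0^{-1})$, which is exactly the count the paper uses when it triangulates the surgered surface $\Sigma_0$ ($F = \#(g)+\#(g g_0^{-1})$, $V=N$, $E=2N$). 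A sanity check: for $N=1$, $g=g_0=e$, a single loop at one marked point on a disk has two complementary faces, not one. With your count $F \leq \#(g_0 g^{-1})$ the Euler bound would give $G(\Sigma') \geq G_{g_0}(g) + \#(g)/2$, which is strictly stronger than the theorem and is contradicted by your own existence construction — so the "local corner analysis" you flag as the hard step cannot close as stated. Separately, substituting the \emph{upper} bound $C(\Sigma')\leq \#(g\vee g_0)$ into $\chi = 2C - 2G - \#(g_0)$ pushes the inequality the wrong way for a genus lower bound; you need to first restrict to a single orbit of $\langle g,g_0\rangle$ (where $C=1$ is forced), as the paper does, and sum. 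The paper sidesteps the whole issue of counting non-disk faces on a general $\Sigma'$ by first cutting along the closed curves hugging the cycles of $g$ and of $g g_0^{-1}$, showing each surgery cannot increase the genus, and only then counting faces on the resulting surface where every such curve bounds a disk; if you want to keep your direct Euler-characteristic argument you must both restore the $\#(g)$ inner faces and justify $\sum_f \chi(f) \leq \#(g)+\#(g g_0^{-1})$ via the fact that each region carries at least one boundary-traversal cycle and has $\chi \leq 1$.
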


\begin{proof}
It is clear we can always work with surfaces that have $\#(g \vee g_0) $ connected components and this is the maximal number.
Without loss of generality we can now assume that $\#(g \vee g_0) = \#(q_{g_0}(g)) = 1$, since the more general case is then just a sum over the partitions in $q_{g_0}(g)$. 

Firstly there always exists at least one \emph{admissible} surface (not necessarily with minimal genus) since we can simply thicken the lines defined by $g$ into tubes and connect these tubes onto  $\#(g_0)$ disks near the marked points on the boundaries of the disk and according to $g$. Then each curve segment simply pass through their respective tubes and do not cross each other. 

Consider some \emph{admissible} surface $\Sigma$. Let $c_i \in g$ be a cycle. The corresponding line segments are $\mathcal{L}^i_k$ where $k = 1, \ldots | c_i |$. Consider a closed curve $C_i$ that hugs tightly to the line segments in this cycle - following the direction of the cycle and strictly ``inside'' the boundary anchored curve $\cup_k \mathcal{L}^i_k$. Locally ``inside'' is defined such that the curve does not intersect the boundary. It also does not intersect any of the $g$ curves by the non-crossing condition. Perform surgery on this closed curve $C_i$. That is cut along the curve, and insert two disks to close up the surface along the cuts. Discard any boundaryless Riemann surface that gets disconnected under this process - by the non-crossing condition the result is a single surface with the original $\#(g_0)$ boundaries. There can only be one boundaryless surface $\mathcal{B}$ that is discarded. This produces a new Riemann surface $\Sigma'$ that 
is also \emph{admissible}. The genus of this new surface must decrease:
\begin{equation}
\label{geni}
G(\Sigma') \leq G(\Sigma)
\end{equation}
This is because the Euler character under connected sum must decrease by an amount corresponding to the Euler character of a sphere: 
\begin{equation}
 \chi(\Sigma)  = \chi(\Sigma'  \sqcup \mathcal{B})  - 2
\end{equation}
The Euler character of $\mathcal{B}$ is bounded above by $2$ and contributes additively:
\begin{equation}
 \chi(\Sigma) = \chi(\Sigma') + \chi(\mathcal{B}) -2 \leq \chi(\Sigma')
\end{equation}
Since the number of simple boundaries in $\Sigma$ and $\Sigma'$ is the same we get \Eqref{geni}. On $\Sigma'$ the curve $C_i$ is now contractible. 

Continue this process for all cycles in $g$ such that the corresponding curves  $\cup_k \mathcal{L}^i_k$ for all $i$ are contractible. Similarly by including line segments between adjacent marked points on the boundaries, oriented opposite to the $c_0's$, we can cut along closed cycles on the ``outside'' of the cycles. It is easy to see that these can be represented by the cycles
in $(g_0) g^{-1}$. Applying surgery to all of these cycles gives the final surface that we call $\Sigma_0$. We now give a triangulation of the surface $\Sigma_0$. 
We have edges corresponding to the curves defined by $g$. There are $N$ of these. There are also edges on the boundaries of $\Sigma_0$ between the marked points. There are also $N$ of these. So we have $ E = 2N$. The vertices have 4 lines meeting at the boundary for each marked point on the boundary. 
There are $ V = N$ of these. See for example:
\begin{equation}
\nonumber
\centering
\includegraphics[scale=.7]{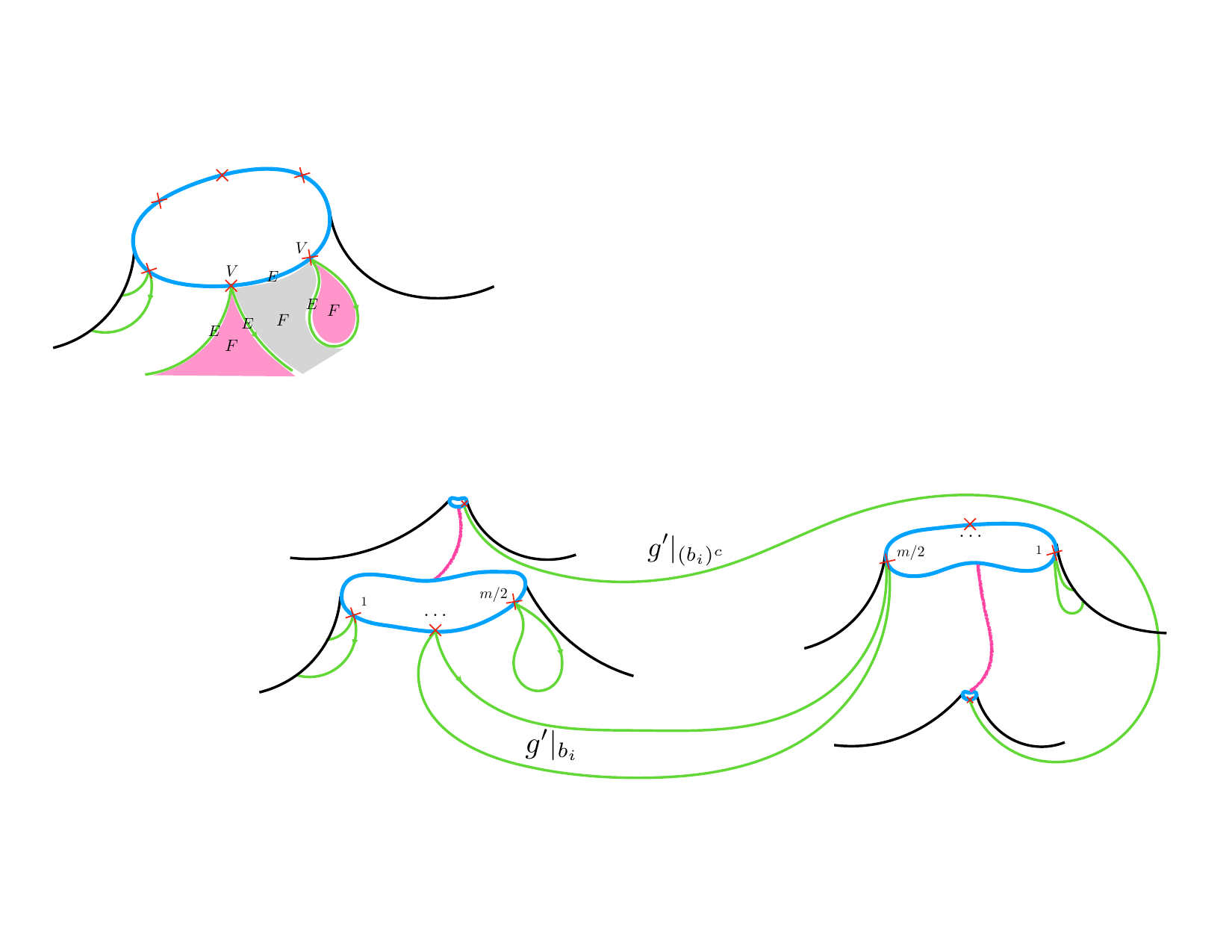}
\end{equation}

The faces are the interiors of the cycles in $g$ and $ g g_0^{-1}$ so $F = \#(g) + \#(g (g_0)^{-1})$. Thus:
\begin{align}
\chi &= 2 - 2G(\Sigma_0) - \#(g_0)  \\ &= V - E + F = -N + \#(g) + \#(g g_0^{-1})
=  N - d(g, e) - d( g, g_0)
\end{align}
Thus:
\begin{equation}
2G(\Sigma_0) = 2 + d(g, e) + d( g, g_0)  - d(g_0,e) - 2 \#(g_0)
\end{equation}
Which gives $G(\Sigma) \geq G(\Sigma_0) = G_{g_0}(g)$. In particular this implies that $\Sigma_0$ has the minimal genus, since for any other
valid surface $\widetilde{\Sigma}$ with lower genus $G(\widetilde{\Sigma}) < G_{g_0}(g)$ we run the above surgery argument and arrive at a contradiction: $G(\widetilde{\Sigma}) \geq G_{g_0}(g)$ . \end{proof}

Finally we give a definition of the multi-annular non-crossing permutations.
\begin{definition}
\label{def:anc}
A multi-annular non-crossing permutation $g$ for $g_0$ with $\#(g_0) > 1$ is a group element with $G(g) = 0$ as defined in \Eqref{defG}
and such that $\#(g\vee g_0) = 1$ (fully connected).  We will denote these $ANC_{g_0} = ANC_{ |c^1_0|, |c^2_0| \ldots}$. 
\end{definition}

\section{Proof of the 1-site phase diagram}
\label{app:phase_proof}
In this appendix we give a proof for the single site phase diagram \figref{fig:1site_phase} appeared in \secref{sec:phase_nm}.
This proof draws heavily from the results in Appendix~\ref{app:perm}, in particular the theorem given in Appendix~\ref{app:perm_topology}.

To set up the problem, we consider the group $S_{N} = S_{nm}$ and the elements $g_A$ and $g_B$ with $n$ cycles each and defined as
\begin{align}
g_B &= (1 \ldots m) (m+1 \ldots 2m) \ldots (nm -m+1  \ldots nm) \\
g_A & = (m/2+1 \ldots 3m/2)(3m/2+1 \ldots 5m/2) \ldots (nm-m/2+1 \ldots m/2)
\end{align}
where the cycles contain all element that appear in between the numbers shown
and are cyclicly ordered.
These are defined for $n \geq 1$ and $m/2 \geq 1 $ integers.
These have the property that $\Gamma(g_A, e) \cap \Gamma(g_B,e)$  is non-trivial (contains more than the identity), although the triple intersection
$\Gamma(g_A, e) \cap \Gamma(g_B,e)  \cap \Gamma(g_A,g_B) = \emptyset$. This is the main origin of ``frustration'' in the problem below, and distinguishes the reflected entropy from negativity. There is a unique element we call $X$ that satisfies the property $X \in \Gamma(g_A,e) \cap \Gamma(g_B,e)$ and it minimizes $d(X,g_{A,B})$.
It has $2n$ cycles:
\begin{equation}
X = (1 \ldots m/2) (m/2+1 \ldots m) \ldots (nm-m/2+1 \ldots nm)
\end{equation}
Note that $P(X) = P(g_A)\wedge P(g_B)$.
If $m=2$ then $X=e$. 

The free energy function we wish to minimize is:
\begin{equation}
\label{eq:f(g)_simple}
f(g) = x_A d(g,g_A) + x_B d(g,g_B) + d(g,e) 
\end{equation}
for  $x_{A,B} \geq 0$. We wish to prove:
\begin{theorem}
\label{thmmin}
For all $x_A, x_B > 0$ then the following minimum is achieved on a simple four element subset:
\begin{equation}
f_{\rm min} \equiv \min_{g\in S_{mn}} f(g) = \min_{g \in \{ e, X, g_A, g_B \} } f(g) 
\end{equation}
\end{theorem}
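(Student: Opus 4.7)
The plan is to recast the optimization as a constrained program in the three non-negative integer variables $a = d(g, g_A)$, $b = d(g, g_B)$, $c = d(g, e)$, so that $f(g) = x_A a + x_B b + c$, evaluate $f$ at the four candidates, and then show using the topological identities of Theorem~\ref{thmgen} that no other achievable $(a,b,c)$ can beat the minimum over the candidates. A direct computation, using $d(X,e) = n(m-2)$ and $d(X, g_A) = d(X, g_B) = n$, gives
\begin{align*}
f(e) &= n(m-1)(x_A + x_B), & f(X) &= n(x_A + x_B + m - 2),\\
f(g_A) &= 2(n-1) x_B + n(m-1), & f(g_B) &= 2(n-1) x_A + n(m-1),
\end{align*}
so these are the distinguished values against which a general $f(g)$ must be compared.

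The main tool is Theorem~\ref{thmgen} applied three times: directly to the pairs $(g, g_A)$ and $(g, g_B)$, and, via left-invariance of the Cayley metric, to the shifted pair $(g g_A^{-1}, g_B g_A^{-1})$. These yield the identities
\begin{align*}
a + c &= n(m-1) + 2(n - k_A) + 2 G_A,\\
b + c &= n(m-1) + 2(n - k_B) + 2 G_B,\\
a + b &= 2(n-1) + 2\bigl(\#(g_B g_A^{-1}) - \#(g g_A^{-1} \vee g_B g_A^{-1})\bigr) + 2 G_{AB},
\end{align*}
where $k_A, k_B \leq n$ are joint connectedness counts and $G_A, G_B, G_{AB} \geq 0$ are the minimal admissible genera. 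The six non-negative integers on the right encode the exact slacks in the three pairwise triangle inequalities, and they are tied together by the key structural fact that $\Gamma(g_A, e) \cap \Gamma(g_B, e) \cap \Gamma(g_A, g_B)$ is empty, so at least one slack must be strictly positive for any non-trivial $g$.

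The argument then splits according to which pairs of triangle inequalities are tight:
\begin{itemize}
\item If $g \in \Gamma(g_A, e) \cap \Gamma(g_B, e)$, Corollary~\ref{thm:biane2} forces $P(g) \leq P(g_A) \wedge P(g_B) = P(X)$ together with the requirement that $g$ act non-crossingly on each $g_A$- and $g_B$-cycle; the restricted minimisation within this lattice interval is attained at $e$ or $X$ depending on the sign of $x_A + x_B - 1$.
\item If $g \in \Gamma(g_A, e) \setminus \Gamma(g_B, e)$, then $g$ non-crossingly refines the cycles of $g_A$ while the $g_B$-slack $b + c - n(m-1)$ is at least $2(n - k_B)$; Corollary~\ref{thm:biane2} applied cycle-by-cycle gives $f(g) \geq \min\{f(g_A), f(X)\}$, and the symmetric argument handles $g \in \Gamma(g_B, e) \setminus \Gamma(g_A, e)$.
\item If $g$ lies on none of the three geodesics, the sum of slacks is at least $2$, and distributing this positive slack against the weights $(x_A, x_B, 1)$ through the three identities above yields $f(g) \geq \min_{g' \in \{e, X, g_A, g_B\}} f(g')$.
\end{itemize}

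The principal obstacle is the last case, where the topological corrections $G_A, G_B, G_{AB}$ can partially compensate the combinatorial slacks $n - k_A, n - k_B$ and potentially generate non-obvious competitors. The cleanest way to control them is to invoke the semimodularity of the set-partition lattice (Lemma~\ref{aproof}) to upper-bound $k_A + k_B$ in terms of $\#(g)$, which converts the sum of slacks into a form that can be directly compared to the gaps $f(X) - f(e)$ or $f(g_{A,B}) - f(e)$; the integrality of $(a,b,c)$ then rules out the fractional LP optima that would otherwise improve on $X$. Once these estimates are in place, the weighted triangle inequalities collapse to the claimed lower bound and the proof concludes.
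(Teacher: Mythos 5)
Your setup---recasting $f$ in terms of the three pairwise Cayley distances, invoking Theorem~\ref{thmgen} for each pair, and noting that the empty triple intersection $\Gamma(g_A,e)\cap\Gamma(g_B,e)\cap\Gamma(g_A,g_B)$ forces positive slack---identifies the right ingredients, and your first bullet (minimisation over $\Gamma(g_A,e)\cap\Gamma(g_B,e)$ interpolating between $e$ and $X$) is correct. But the heart of the theorem is your third case, and there the argument does not close. ``The sum of slacks is at least $2$'' is far too weak: the slacks are even integers, so this is automatic for every $g$ outside the (empty) triple intersection, yet the gaps between the candidate values $f(e),f(X),f(g_A),f(g_B)$ are of order $n$, $nm$, or $x$. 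A slack of $2$ in one triangle inequality, traded against the weights $(x_A,x_B,1)$, cannot by itself exclude an element that sits ``between'' $X$ and $g_A$, achieving a moderate deficit in all three inequalities simultaneously; this is exactly the frustration the paper flags as the reason no simple geodesic argument works in the region $x_A+x_B>1$, $|x_A-x_B|<1$. Your second bullet has the same problem: the claim that $g\in\Gamma(g_A,e)\setminus\Gamma(g_B,e)$ implies $f(g)\geq\min\{f(g_A),f(X)\}$ is asserted, not derived---for such $g$ you only know $a+c=n(m-1)$, and you still need a quantitative lower bound on $x_Bb+(1-x_A)c$ that you have not supplied. Appealing to ``integrality ruling out fractional LP optima'' does not help, because the feasible set of triples $(a,b,c)$ is cut out by many more constraints than the three decorated triangle inequalities, and the bound you need is not an integrality gap.

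The paper's proof supplies precisely the machinery your sketch is missing, and it is organised quite differently. First, convexity of the phase diagram (Lemma~\ref{lem:convexity}) plus the easy boundary regions reduce everything to the single line $x_A=x_B=x\geq 1/2$, collapsing your two-parameter LP to one parameter. Second---and this is the essential structural move---the decomposition is based over $X$ rather than over the three pairwise geodesics: one classifies $g$ by its connectedness partition $q=q_X(g)\in P_{2n}$ and its singleton count $n_1=\#_1(q)$, and writes $f(g)-f(e)-(1-2x)d(e,X)$ as a sum of four manifestly non-negative brackets. Making those brackets non-negative requires (i) the pinching argument showing $G_{g_A}(g)\geq G_X(g)$ and $G_{g_B}(g)\geq G_X(g)$ (comparing admissible surfaces over $g_A$ and over $X$), which has no analogue in your three separate applications of Theorem~\ref{thmgen}; (ii) Lemma~\ref{lemmo} to bound $d(g,X)\geq 2n-\#(q)$; and (iii) the semimodularity estimate $\#(q)\geq\#(q\vee t_A)+\#(q\vee t_B)-\delta_{n_1,0}$, where the $\delta_{n_1,0}$ refinement (obtained by splitting a block of $t_B$ when $q$ has a singleton) is exactly what makes the bound tight enough to land on $\min\{f(e),f(X),f(g_A),f(g_B)\}$ after minimising over $n_1$. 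Your closing paragraph gestures at semimodularity applied to $\#(g)$, $k_A$, $k_B$ directly, but that cruder application loses the information carried by the quotient partition over $X$ and does not reproduce estimate (iii). To repair your proof you would need to import essentially all of this structure, at which point you would have reproduced the paper's argument.
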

\noindent\emph{Remark.} This theorem only states that the minimum of $f(g)$ can be achieved for $g\in{e,X,g_A,g_B}$, but does not exclude the possibility of other elements also saturating the minimum.
In fact there exists other minimal elements $g\in S_{mn}$ that lives at the phase boundaries of the phase diagram \figref{fig:1site_phase}, and they are crucial for smoothing out the phase transition near $x_A + x_B = 1$.
We will investigate the detailed form of these elements in Appendix \ref{app:2nd_resum}.

\noindent \emph{Preliminary.} We firstly note there are regions in the phase diagram that are easy to deal with:
\begin{align}
x_A + x_B < 1: \qquad f(g) & \geq  d(g,e) ( 1- x_A -x_B)  + f(e)  \geq f(e)
\end{align}
where we have used the triangle inequality: $d(g_{A,B},g) + d(g,e) \geq d(e, g_{A,B})$.
So we have equality iff $d(g,e) = 0$. For $x_A + x_B = 1$ then we have quality for $g \in \Gamma(g_A, e) \cup \Gamma(g_B, e)$. 
Similarly:
\begin{align}
x_A > x_B + 1: \qquad f(g) & \geq (x_A - x_B - 1) d(g, g_A) + f(g_A) \geq f(g_A)
\end{align}
where we have used the triangle inequality: $d(g_{A},g) + d(g,g_B) \geq d( g_{A},g_B)$ and $d(g_{A},g) + d(g,e) \geq d(e, g_{A})$.
Equality is achieved iff $g = g_A$. For $x_A = x_B + 1$ we still only have $g_A$ as the minimal element since
the intersection $\Gamma(g_A, g_B) \cup \Gamma(g_A, e) = \{ g_A\}$. Similarly for $A \leftrightarrow B$. Thus the \emph{non-trivial region}  is
$x_A + x_B > 1$, $x_A < x_B +1$ and $x_B < x_A +1$. Indeed the phase diagram is convex:
\begin{lemma}
\label{lem:convexity}
If the minimum for $f(g)$ is achieved for some $g_\star$ at two locations in the $(x_A,x_B)_{1,2}$ phase diagram then $g_\star$ is also minimal
at: 
\begin{equation}
(x_A,x_B)_\lambda = \lambda  (x_A,x_B)_{1} + (1- \lambda) (x_A,x_B)_{2}\, \qquad 0 \leq \lambda \leq 1
\end{equation}
\end{lemma}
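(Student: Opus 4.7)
The plan is to exploit the linearity of $f(g)$ as a function of $(x_A, x_B)$ at fixed $g$. Explicitly, from the definition in \Eqref{eq:f(g)_simple}, we can write $f(g) = x_A d(g, g_A) + x_B d(g, g_B) + d(g, e)$, which is an affine (in fact linear plus constant) function of $(x_A, x_B)$ with coefficients $d(g, g_A)$ and $d(g, g_B)$ that do not depend on $(x_A, x_B)$. Let me denote by $f_{(x_A,x_B)}(g)$ the free energy evaluated at the point $(x_A, x_B)$ in the phase diagram.

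First I would write down the basic convex combination identity: for any $g \in S_{mn}$,
\begin{equation}
f_{(x_A,x_B)_\lambda}(g) = \lambda \, f_{(x_A,x_B)_1}(g) + (1-\lambda)\, f_{(x_A,x_B)_2}(g),
\end{equation}
which follows immediately from linearity of $f$ in $(x_A, x_B)$. Next I would use the hypothesis that $g_\star$ is minimal at both endpoints: for every $g \in S_{mn}$, $f_{(x_A,x_B)_i}(g) \geq f_{(x_A,x_B)_i}(g_\star)$ for $i=1,2$. Combining this with $0 \leq \lambda \leq 1$ so that both weights are non-negative, we obtain
\begin{equation}
f_{(x_A,x_B)_\lambda}(g) \geq \lambda \, f_{(x_A,x_B)_1}(g_\star) + (1-\lambda)\, f_{(x_A,x_B)_2}(g_\star) = f_{(x_A,x_B)_\lambda}(g_\star),
\end{equation}
where in the last step we applied the linearity identity in reverse to $g_\star$. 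Since this holds for all $g$, $g_\star$ is minimal at $(x_A, x_B)_\lambda$.

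There is no hard step here: the lemma is a standard observation that the pointwise minimum of a family of affine functions defines a concave function of the parameters, whose argmin is constant along any segment where two endpoints share the same minimizer. One minor subtlety worth noting is that $g_\star$ need not be the \emph{unique} minimizer at $(x_A, x_B)_\lambda$ — there may be additional degenerate minimizers appearing along the segment — but that is consistent with the statement of the lemma, which only asserts $g_\star$ remains \emph{a} minimizer.
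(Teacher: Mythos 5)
Your proof is correct and is essentially identical to the paper's: both use the linearity of $f(g)$ in $(x_A,x_B)$ to write $f_\lambda(g) = \lambda f_1(g) + (1-\lambda) f_2(g) \geq \lambda f_1(g_\star) + (1-\lambda) f_2(g_\star) = f_\lambda(g_\star)$. Your remark about non-uniqueness of the minimizer is a fine observation but not needed for the statement.
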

\begin{proof}
Note that (in hopefully clear notation) for $g \in S_{mn}$:
\begin{equation}
f_\lambda(g) = \lambda f_1(g) + (1- \lambda) f_2(g) \geq \lambda f_1(g_\star) + (1- \lambda) f_2(g_\star) = f_\lambda(g_\star)
\end{equation}
\end{proof}
Thus, for the general Theorem~\ref{thmmin} we can limit ourselves to the line $x_A = x_B > 1/2$. Convexity will do the rest, since all four elements $\{ g_A,g_B, X , e \}$ are already represented
somewhere on the phase diagram away from the \emph{non-trivial region}. For $m=2$ there is a much simpler proof than the proof discussed below, we present this in Appendix~\ref{app:m=2_proof}.

\begin{proof} (of Theorem~\ref{thmmin})
\label{mainthm}
As discussed above we need only consider $x_A = x_B = x \geq 1/2$.
Consider the topological discussion of Appendix \ref{app:perm_topology} for $g_0 = X$. We classify all elements in $S_{mn}$ using their connectedness $q_X(g)$
  over $X$.
Recall that $q_X(g) = (P(g)\vee P(X)) / P(X)$.
The quotient is a partition $ q_X(g) \in P_{2n}$.

Set $q = q_X(g)$ and  $n_1 = \#_1(q)$, where $\#_1(q)$ counts the number of length $1$ blocks in $q$.
It satisfies $0 \leq n_1 \leq 2n$.  
Now write:
\begin{align}
\label{fbelow}
f(g) &= f(e) + (1-2x) d( e, X)  +  n - \lfloor  n_1/2 \rfloor   - 2 x \delta_{n_1,0} \\ &
\label{pos1}
+ 2 x( G_A  - G_X) + 2 x(G_B -  G_X) + 2 ( G_X)   \\ \label{pos2}&
+  (2 x - 1)  \left( \vphantom{ \bigoplus } d(g,X) -  2n + \#( q) \right) 
 \\ \label{pos3} &+ 2x \left( \vphantom{ \bigoplus } \#(q) - \#(q \vee t_A) - \#(q \vee t_B) + \delta_{n_1,0} \right)\\ & \label{pos4}+ \left( \vphantom{ \bigoplus } n + \lfloor  n_1/2 \rfloor -\#(q) \right) 
\end{align}
where $\lfloor \cdot \rfloor$ is the floor function, $G_{A,B} \equiv G_{g_{A,B}}(g)$ and $G_X \equiv  G_X(g)$ were defined in Lemma~\ref{thmgen}. And 
$t_{A,B} = q(g_{A,B})$ 
or more specifically: $t_B = (12)(34) \ldots (2n-1, 2n)$
and $t_A = (23)(34) \ldots (2n,1)$.  To arrive at the formula above we have used:
\begin{equation}
\#( g_A \vee g)  = \#( t_A \vee q(g) )\,, \qquad \#( g_B \vee g)  = \#(t_B \vee q(g))
\end{equation}
which follows from $\#( g_A \vee g)  = \#( P(g_A) \vee P(g) )$
and since $P(g_A) \geq P(X)$ this implies that $P(g_A) \vee P(g) \geq P(X)$ and so we can take the set quotient  $ (P(g_A) \vee P(g))/ P(X) =
(P(g_A) \vee (P(g) \vee P(X)))/ P(X) =  t_A \vee q(g)$. And the number of sets is the same under the quotient $\#( P(g_A) \vee P(g) ) = \#(t_A \vee q_g)$. Similarly for $\#(g_B \vee g)$.

We aim to show that each bracketed terms in lines (\ref{pos1}-\ref{pos4}) are all positive. That is we wish to establish the estimate:
 \begin{equation}
 f(g) \geq  f(e) + (1-2x) d(e, X) +  (n - \lfloor  n_1/2 \rfloor )  - 2 x \delta_{n_1,0}
 \end{equation}
 Assuming this is the case then:
 \begin{equation}
 \label{tolast}
 \min_{g \in S_{mn}} f(g) \geq f(e) + (1-2x) d(e, X) + \min_{0 \leq n_1 \leq 2n} \left( (n - \lfloor  n_1/2 \rfloor )  - 2 x \delta_{n_1,0}
 \right) = \min_{g \in {e,X, g_A,g_B}} f(g)
 \end{equation}
which ,together with:
\begin{equation}
 \min_{g \in S_{mn}} f(g) \leq \min_{g \in {e,X, g_A,g_B}} f(g)
\end{equation}
proves the theorem.  The last step in \Eqref{tolast} is by direct computation. It also follows since the bounds  (\ref{pos1}-\ref{pos3})  that we derive below
are all tight for the elements $g = X, g_A, g_B$. 

We need the bound $G_A \geq G_X$ in line \Eqref{pos1}. This follows from the construction of the genus. We use the surface $\Sigma_0$ for $g$ based over $g_A$
which has genus $G_A$. We then
deform the $n$ boundaries of this surface into $2n$ boundaries by pinching - dividing the $m$ marked points into two sets of $m/2$ marked points on the new boundary (see figure below).
This deformed $\Sigma_0'$ is an \emph{admissible surface} of genus $G(\Sigma_0)$ for $g$ based over $X$ since we can do this deformation without touching any of the curves. 
By Theorem~\ref{thmgen} we must have $G_A = G(\Sigma_0) = G(\Sigma_0')  \geq G_X$.  Similarly $G_B \geq G_X$ and $G_X \leq 0$ always.
\begin{equation}
\nonumber
\centering
\includegraphics[scale=.7]{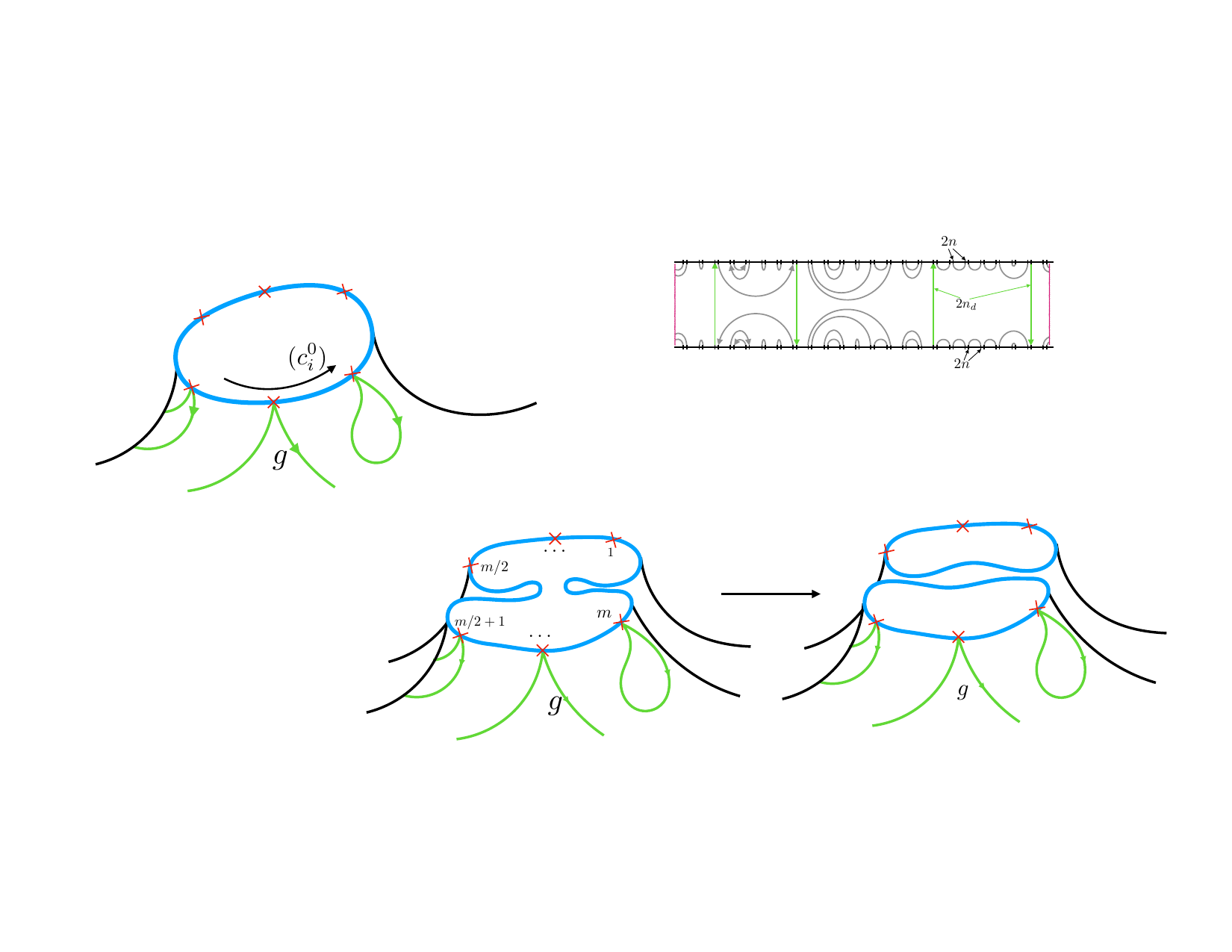}
\end{equation}

Moving to the next line we also need the lower bound \Eqref{pos2}:
\begin{equation}
d(g,X) \geq ( 2n - \#(q_X(g)))
\end{equation}
Note that $q_X(X^{-1} g) = q_X(g)$ since $\left< X, g \right> = \left< X, X^{-1} g \right>$. 
So we can equivalently prove 
$ d(g,e) \geq ( 2n - \#(q_X(g))) $ for all $g$. This follows from Lemma~\ref{lemmo} applied to $g_0 = X$ and $N = nm$.

We also need to bound \Eqref{pos3}:
\begin{equation}
\label{finest}
\#(q) - \#(q \vee t_A) - \#(q \vee t_B)  + \delta_{\#_1(q),0} \geq 0
\end{equation}
 To do this we use  Lemma~\ref{lempart}, which simply follows from the semimodular condition \Eqref{semmod}.
We apply this Lemma with $N=2n$, $q=q$, $s = t_B$ and $t = t_A$, and use the fact that
$t_A \vee t_B =  (\mathbb{Z}_{2n})$ and $t_A\wedge t_B = e$, giving the estimate:
\begin{equation}
\#(q) - \#(q \vee t_A) - \#(q \vee t_B) +1 \geq 0
\end{equation}
We can improve this estimate as follows.
If there is at least one block in $q$
of length-$1$ then we remove one of the double blocks in $t_B$ where this length-$1$ $q$-block would overlap. That is we split this double block into two single blocks
to give a new $t_B' < t_B$. After doing this we still have:
\begin{equation}
\label{tbp}
t_A \vee t_B' = (\mathbb{Z}_{2n}) \, \qquad t_A \wedge t_B' = e\, \qquad \#(q \vee t_B') = \#(q \vee t_B) + 1
\end{equation}
such that $\#(q) - \#(q \vee t_A) - \#(q \vee t_B) \geq 0$. Together the final estimate is \Eqref{finest}. 

Finally line \Eqref{pos4} is positive since  $\#(q) \leq n + \left \lfloor{\#_1(q)/2}\right \rfloor $ (the floor) which comes simply from maximizing $\#(q)$ by splitting the remaining non length $1$ blocks into pairs if $\#_1$ is even, or pairs and a triplet if $\#_1$ is odd. 

\end{proof}

Above we needed the following results:

\begin{lemma}
\label{lemmo}
For all $g_0,g \in S_N$, then:
\begin{equation}
d(g,e) \geq  \#(g_0) - \#( g \vee g_0) 
\end{equation}
\end{lemma}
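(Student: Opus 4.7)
The plan is to translate the claim into a statement about the partition lattice $P_N$ and then apply the semimodularity result already established in Lemma~\ref{aproof}. Recall that $d(g,e) = N - \#(g)$ and that $\#(g \vee g_0) = \#(P(g) \vee P(g_0))$ by definition. So the inequality to be proved is equivalent to
\begin{equation*}
\#(g) + \#(g_0) \;\leq\; N + \#(P(g) \vee P(g_0)).
\end{equation*}

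First I would apply the semimodular inequality of $P_N$ (Lemma~\ref{aproof}) to the two partitions $p_1 = P(g)$ and $p_2 = P(g_0)$, which gives
\begin{equation*}
\#(P(g) \vee P(g_0)) + \#(P(g) \wedge P(g_0)) \;\geq\; \#(g) + \#(g_0).
\end{equation*}
Next I would use the trivial bound $\#(P(g) \wedge P(g_0)) \leq N$ (the meet is a partition of $\mathbb{Z}_N$, so it has at most $N$ blocks, with equality only for the finest partition $e$). Combining these two inequalities yields $\#(g) + \#(g_0) \leq N + \#(P(g) \vee P(g_0))$, which is exactly the rearranged form of the desired estimate.

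There is really no obstacle here beyond recognizing that the Cayley distance to the identity is the ``defect'' of the partition induced by $g$, so the statement is purely a lattice-theoretic consequence of semimodularity together with the coarsest-partition bound. One can observe that equality holds exactly when $P(g) \wedge P(g_0) = e$, i.e.\ when every cycle of $g$ is contained in some cycle of $g_0$ (or vice versa), which is the case one uses in the proof of Theorem~\ref{thmmin} when applying this lemma with $g_0 = X$ and replacing $g$ by $X^{-1} g$.
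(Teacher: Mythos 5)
Your proof is correct and follows essentially the same route as the paper's: pass to the partition lattice, apply the semimodularity inequality of Lemma~\ref{aproof} to $P(g)$ and $P(g_0)$, and then bound $\#(P(g)\wedge P(g_0))$ by $N$. One caveat on your closing aside: the equality characterization is off. Saturation requires \emph{both} $P(g)\wedge P(g_0)=e$ (so that $\#(P(g)\wedge P(g_0))=N$) \emph{and} saturation of the semimodular inequality itself (a forest condition on the block-intersection graph, per Lemma~\ref{aproof}); moreover, $P(g)\wedge P(g_0)=e$ means every cycle of $g$ meets every cycle of $g_0$ in at most one element, which is essentially the opposite of ``every cycle of $g$ is contained in some cycle of $g_0$'' (that containment would give $P(g)\wedge P(g_0)=P(g)$, not $e$). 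Since the lemma asserts only the inequality, this does not affect the validity of your argument.
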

\begin{proof}
Map to the set of partitions $P_N$ and consider:
\begin{align}
d(g,e) - (\#(g_0)  - \#( g \vee g_0) 
) &= N - \#(P(g)) - \#(P(g_0)) + \#(P(g) \vee P(g_0)) \\& \hspace{-2cm} \geq \#(P(g) \wedge P(g_0)) - \#(P(g)) - \#(P(g_0)) + \#(P(g) \vee P(g_0)) \geq 0
\end{align}
where the first equality follows since for all partitions $\#(P) \leq N$ and the second inequality uses the semi-modularity property of $\rho$ on $P_N$. 
Saturation requires that $P(g) \wedge P(g_0) = e$ and also that for all $p \leq P(g_0)$ then:
\begin{equation}
P(g_0) \wedge ( p \vee P(g)) = (P(g_0) \wedge P(g)) \vee p = p
\end{equation}
\end{proof}

\begin{lemma} Given three partitions $q, t, s \in P_N$ then:
\label{lempart}
\begin{equation}
\#(q \vee (t \wedge s)) - \#(q \vee t) - \#(q \vee s) + \#(q \wedge (t \vee s) \geq 0
\end{equation}
\end{lemma}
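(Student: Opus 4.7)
The plan is to deduce the inequality from a single application of the supermodularity of the block-counting function $\#$ on $P_N$ (Lemma~\ref{aproof}), combined with two generic lattice inequalities valid in any lattice. No further property of $P_N$ beyond semimodularity will be needed; in particular, we do not have to appeal to non-distributivity, existence of joint geodesics, or any topological representation.

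First I will apply supermodularity of $\#$ to the pair $q\vee t$ and $q\vee s$ in $P_N$, obtaining
\begin{equation*}
\#(q\vee t) + \#(q\vee s) \;\le\; \#\bigl((q\vee t)\vee (q\vee s)\bigr) + \#\bigl((q\vee t)\wedge(q\vee s)\bigr),
\end{equation*}
and simplify the first term on the right to $\#(q\vee t\vee s)$ by associativity of $\vee$. Next I will invoke the universal lattice inequality $q\vee(t\wedge s) \le (q\vee t)\wedge (q\vee s)$, which holds in any lattice because $t\wedge s\le t$ and $t\wedge s\le s$ imply $q\vee(t\wedge s)\le q\vee t$ and $q\vee(t\wedge s)\le q\vee s$, making $q\vee(t\wedge s)$ a lower bound for the pair and hence $\le$ their meet. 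Translating to block counts (a finer partition has more blocks) yields
\begin{equation*}
\#\bigl((q\vee t)\wedge(q\vee s)\bigr)\;\le\;\#\bigl(q\vee(t\wedge s)\bigr).
\end{equation*}
Finally, from the trivial chain $q\wedge(t\vee s)\le q\le q\vee t\vee s$ I get the companion estimate $\#(q\vee t\vee s)\le \#\bigl(q\wedge(t\vee s)\bigr)$. Chaining these three bounds produces
\begin{equation*}
\#(q\vee t)+\#(q\vee s)\;\le\;\#\bigl(q\vee(t\wedge s)\bigr)+\#\bigl(q\wedge(t\vee s)\bigr),
\end{equation*}
which is the claimed inequality.

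The only piece of nontrivial content is the supermodularity step; all remaining manipulations are formal and hold in every lattice, regardless of whether it is modular or distributive (indeed $P_N$ is semimodular but not modular, so equality is not expected in general). I do not anticipate any real obstacle: the lemma is essentially a corollary of Lemma~\ref{aproof} once the definitions of $\vee$ and $\wedge$ are unpacked. If one wanted a sharpened version tracking when equality occurs, the analysis would reduce to the saturation condition of Lemma~\ref{aproof} applied to $q\vee t$ and $q\vee s$, together with the conditions under which $q\vee(t\wedge s)=(q\vee t)\wedge(q\vee s)$ and $q=q\wedge(t\vee s)$, but this refinement is not needed for the application to the phase diagram proof.
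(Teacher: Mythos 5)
Your proof is correct and follows essentially the same route as the paper: both apply semimodularity (your ``supermodularity of $\#$'', i.e.\ Lemma~\ref{aproof}) to the pair $q\vee t$, $q\vee s$, use the universal lattice inequality $q\vee(t\wedge s)\le (q\vee t)\wedge(q\vee s)$, and finish with monotonicity of $\#$ along $q\wedge(t\vee s)\le q\le q\vee t\vee s$. The only difference is cosmetic — the paper phrases the argument in terms of the grading $\rho(x)=N-\#(x)$ and leaves the final monotonicity step implicit, which you spell out.
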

\begin{proof}
Since set partitions form a $\rho$-graded semimodular lattice with: $\rho(q) = N - \#(q)$
we know that the grading satisfies:
\begin{equation}
\rho(q_1 \vee q_2) + \rho(q_1 \wedge q_2) \leq \rho(q_1) + \rho(q_2)
\end{equation}
Set $q_1 =q \vee t$ and $q_2 =q \vee s$, then $q_1\vee q_2 =q \vee s\vee t$ and $q_1\wedge q_2  \geq q\vee (t\wedge s)$ (since $t \geq t\wedge s
$ implies that $q \vee t \geq q \vee (t\wedge s)$.) Thus:
\begin{equation}
\rho(q \vee s \vee t)+\rho(q \vee (t\wedge s)) \leq \rho(q_1 \vee q_2)+\rho(q_1 \wedge q_2) \leq \rho(q \vee t)+\rho(q \wedge s) 
\end{equation}
as required.
\end{proof}

\subsection{Simpler proof at $m=2$}
\label{app:m=2_proof}
There is an independent proof of Theorem~\ref{thmmin} for $m=2$, where there is only a three element subset on the right hand side. A quick sketch:
\begin{proof}{(For $m=2$.)} Set $x_A = x_B = x > 1/2$. Write:
\begin{equation}
f_{\rm min} = \min_{g \in S_{2n}} f'(g) \, \quad f'(g) = f( g_A g) =   x( d(g, g_A^{-1}g_B) + d(g,e)) + d(g, g_A^{-1}) 
\end{equation}
where it is easy to see that $ g_A^{-1}g_B$ is made of two cycles of length $n$. 
We have the more general bound 
\begin{equation}
f'(g) \geq x ( d ( g_A^{-1} g_B,e ) + 2( \#(g_A^{-1} g_B) - \#( g_A^{-1} g_B \vee g) )) + d(g, g_A^{-1})
\end{equation}
where recall $ 1\leq \#( g_A^{-1} g_B \vee g) \leq  \#(g_A^{-1} g_B) = 2 $ is the number of connected components discussed above.
We minimize over the partition of $S_{mn}$ defined by the integer $\#( g_A^{-1} g_B \vee g) = 1,2$:
\begin{equation}
f_{\rm min,1} =  \min_{g \in S_{2n}: \#( g_A^{-1} g_B \vee g) = 1 } f'(g)  \geq x ( d ( g_A^{-1} g_B,e )  + 2) + d(g,g_A^{-1}) \geq x 2n
= f'(g_A^{-1}) 
\end{equation}
with equality iff $g = g_A^{-1}$.  And also
\begin{align}
f_{\rm min,2} =  \min_{g \in S_{2n}: \#( g_A^{-1} g_B \vee g) = 2 } f'(g)  & \geq x  d ( g_A^{-1} g_B,e )  + d(g,g_A^{-1}) 
\\ &=x d ( g_A^{-1} g_B,e )  +(2n - \#(k q))  
\geq f'(e) 
\end{align}
where we have used the fact that all elements with $\#( g_A^{-1} g_B \vee g) = 2 $ must take the form $g = k_1 q_2$ where $k, q\in S_n$ and acts
on the respective elements in the two cycles of $g_A^{-1} g_B$.
The minimal is achieved for $f_{\rm min,2} $ iff $g \in \Gamma(g_A^{-1} g_B, e)$ and $k = q^{-1}$. 
Minimizing over the two different sets gives:
\begin{equation}
f_{\rm min} = {\rm min}_{g \in e, g_A, g_B} f(g)
\end{equation}
where the minimum is achieved iff $g \in  \{ g_A k_1 (k^{-1})_2 : k \in \Gamma(\tau_n , e) \} \subset \Gamma(g_A, g_B)$
or $g = e$.
\end{proof}

\section{Reflected resolvent via direct group summation}
\label{app:2nd_resum}
In this appendix we provide a parallel approach for finding the reflected entropy resolvent \Eqref{eq:ref_spectrum}.
We consider factorization of $g\in S_{mn}$ into different $q \equiv q_X(g)$ sectors.
In each sector we will find conditions that must be satisfied for $g$ that minimizes the free energy. 
By restricting the full permutation group to these special elements we are able to arrive at an expression of the reflected entropy resolvent that matches the form given in main text.

\subsection{Minimal elements in a fixed sector}
The problem of finding elements that minimizes the free energy function \Eqref{eq:f(g)_simple} factorizes into two parts. 
Firstly, we seek for minimal elements for a fixed $\#(q)$ sector that saturates the two conditions \Eqref{pos3} and \Eqref{pos4}:
\begin{align}
    \#(q) - \#(q\vee t_A) - \#(q\vee t_B) + \delta_{n_1,0} = 0, \quad
    n + \lfloor n_1/2 \rfloor - \#(q) = 0
\end{align}
Secondly, for a fixed $q\in S_{2n}$, it should be possible to find all the minimal $g\in S_{mn}$ elements where:
 \begin{equation}
 \label{sat0}
 G_A = G_B = G_X = 0\,, \qquad d(g,X) =  2n - \#( q)
 \end{equation}
 Note that at $x = 1/2$ (which is at the vicinity of reflected entropy phase transition at $n=1$) we can drop the latter condition in which case the answer can be written in terms of multi-annular non-crossing elements.
 These elements marks the contribution of the new dominant saddles that smooth out the phases transition.  
 We will return to this problem when we have a better handle
 on the minimal $q$ elements discussed next.

Let's begin with the first minimization problem.
We will look for minimal elements for fixed $\lceil n_1/2 \rceil = 0,\ldots n$ sector.
Note that we have, somewhat arbitrarily chosen to fix the ceiling of $n_1(q)/ 2$ since this removes the odd case.
\footnote{Recall that the saturation of $\#(q)\leq n + \lfloor n_1/2 \rfloor$ enforces the non length-1 blocks of $q$ into pairs if $n_1$ is even, or pairs and a triplet (which we do not have a good handle on) if $n_1$ is odd. If we instead fix the ceiling then one can show that the inequality will never be saturated when $\#_1(q)$ is odd.
This will likely not change our conclusion of this section and we suspect that the effect of odd $n_d$ sector will only serve as a correction to the resolvent.}
That is fix $n_d = 0, 1 \ldots n$ then look for $q$'s such that:
\begin{equation}
\label{equal}
\#_1(q) = 2 n_d, \qquad
\#(q) = n + n_d, \qquad \#(q \vee t_A) + \#(q \vee t_B) =   n + n_d + \delta_{n_d,0}  
\end{equation}

\begin{theorem}
\label{partq}
For fixed integer $0 \leq n_d \leq n$ all partitions which satisfy \Eqref{equal} take the form $q = P( \hat{h} g_B|_{m=2})$ where $\hat{h} \in S_{2n}$ and is best described as the following diagram:

\begin{figure}[H] 
\nonumber
\centering
\includegraphics[scale=1.3]{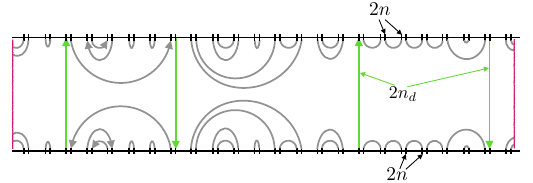}
\label{fig:annular-nc}
\end{figure}
\vspace{-.5cm}
In words, form the diagram by placing $2n$ points on the top (representing  element $135\ldots (2n-1)$) and $2n$ points on the bottom (representing element $246\ldots (2n)$). 
The left and right ends of the strip are further identified (so it is really a annulus).
$\hat{h} \in ANC_{n,n} \subset S_{2n}$ is then drawn by connecting the points pairwise without crossing as per the permutations in $\hat{h}$, where we split each of the $2n$ elements into two, each representing the incoming and outgoing lines of $\hat{h}$.
Pick $2n_d$ vertically aligned defect elements that is unique up to cyclic translation.
$\hat{h}$ is then constrained to take the form that it directly connects the $2n_d$ pairs of vertical defects (one-way) and otherwise only connects within the top/bottom half strip as an inverse with respect to each other.

\end{theorem}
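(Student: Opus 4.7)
The plan is to first extract the coarse structure of $q$ from the first two conditions in \eqref{equal}, and then analyze the third condition using the semimodular lattice theory of Appendix~\ref{app:set_partitions} to extract a genus-zero condition that will translate into the annular non-crossing picture in the statement. Since $\#(q) = n + n_d$ with exactly $2n_d$ singleton blocks, the non-singleton blocks account for $2n - 2n_d$ elements in $n - n_d$ blocks, forcing each non-singleton block to be a pair. Thus $q$ is automatically a partial matching of $\{1,\dots,2n\}$ with $n - n_d$ pairs and $2n_d$ unmatched elements, and the content of the third condition is purely about how the matched pairs sit with respect to $t_A$ and $t_B$.

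Next, I would apply Lemma~\ref{lempart} with the substitution $t_B \to t_B'$ used in the proof of Theorem~\ref{thmmin}, splitting off one $t_B$-pair for each of the $2n_d$ singletons of $q$ (which we can do because each singleton sits inside some $t_B$-pair). This upgrades the semimodular inequality $\#(q) - \#(q \vee t_A) - \#(q \vee t_B) + \delta_{n_d,0} \geq 0$ to an equality in the case at hand, and the $2n_d$ splittings correspond to the \emph{vertical defects} described in the figure. After this reduction one is analyzing a partial matching $q$ on the remaining $2(n-n_d)$ elements that saturates the semimodular inequality with $t_A', t_B'$ whose join is the full set and whose meet is trivial; by Lemma~\ref{aproof} this saturation is equivalent to a tree condition on the bipartite graph whose vertices are blocks of $t_A'$ and $t_B'$ and whose edges are blocks of $q$.

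Step three is to interpret this tree condition as annular non-crossing. Arranging the $2n$ elements around an annulus with odd elements on the outer boundary and even on the inner, $t_B$ pairs each odd with the even directly below and $t_A$ pairs each even with the odd to its right. I would show that $q$ extends to a non-crossing diagram on this annulus if and only if the associated tree is planarly embeddable in the annular strip, and this embedding is exactly encoded by the permutation $\hat h \in ANC_{n,n}$ via $q = P(\hat h \, g_B|_{m=2})$, with $\hat h$ forced to pair within the top (resp.\ bottom) half-strip as a pair of mutually inverse non-crossing permutations, augmented by $2n_d$ through-strings at the defects. The cyclic freedom in choosing the defect positions that is claimed in the statement reflects the $\mathbb{Z}_{2n}$ rotational symmetry of the annular diagram modulo the non-crossing constraint, which I would verify by a direct check that two defect placements related by a non-cyclic relabelling yield the same partition $q$ only up to the global cyclic symmetry.

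The main obstacle, I expect, is step three: translating the algebraic saturation condition into the precise geometric picture of a non-crossing diagram on the annulus with defects, and in particular verifying that the ``mutually inverse'' constraint on the top and bottom non-defect pieces of $\hat h$ is not just necessary but sufficient. For this I anticipate a small-$n$ check (e.g.\ the cases $n=2,3$ with $n_d=0,1$) will be needed to fix conventions, after which an induction on $n - n_d$, pealing off an outermost non-crossing pair in the top half-strip and its partner in the bottom, should close the argument. The $\delta_{n_d,0}$ correction in \eqref{equal} naturally accounts for the fact that in the fully non-defect case there are two saturating choices of $t_B'$, reflecting the extra unit in the component count exactly when no singleton is present.
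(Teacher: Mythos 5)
Your opening step---using $\#_1(q)=2n_d$ and $\#(q)=n+n_d$ to force $q$ to be a partial matching with $n-n_d$ pairs and $2n_d$ singletons---matches the paper exactly. From there the paper takes a different route: it converts $q$ to a permutation $h_q\in S_{2n}$, sets $\hat h=h_q g_B$, rewrites \Eqref{equal} as $\#(\hat h)+\#(\hat h g_Bg_A)=2n+2\delta_{n_d,0}$ plus cycle-count conditions on $\hat h g_B$, recognizes the first identity as the annular non-crossing condition for $\hat h$ over $\tau=g_Ag_B$, and then pins down the saturating elements by representing $\hat h$ as an operator $D(\hat h)$ and running a Cauchy--Schwarz argument (Lemma~\ref{lembd}) whose saturation condition $k_1'=k_2'$ is precisely the ``mutually inverse top/bottom'' structure in the statement.

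Your alternative---staying inside the partition lattice and characterizing saturation via Lemma~\ref{aproof}---has a concrete gap. Saturation of Lemma~\ref{lempart} is \emph{not} equivalent to the tree condition of Lemma~\ref{aproof}: the proof of Lemma~\ref{lempart} chains two inequalities, so saturation also requires the meet condition $(q\vee t_A)\wedge(q\vee t_B)=q\vee(t_A\wedge t_B)=q$. The tree condition alone admits spurious solutions: for $n=2$, $n_d=0$, the partition $q=\{\{1,3\},\{2,4\}\}$ gives $q\vee t_A=q\vee t_B=(q\vee t_A)\wedge(q\vee t_B)=\mathbb{Z}_4$, whose associated graph (two vertices, one edge) is a tree, yet $\#(q\vee t_A)+\#(q\vee t_B)=2\neq n+1=3$, so it violates \Eqref{equal}; only the meet condition excludes it. Your proposal never invokes the meet condition, and the bipartite graph you describe (vertices the blocks of $t_A',t_B'$, edges the blocks of $q$) is not the graph to which Lemma~\ref{aproof} applies---that lemma's edges are blocks of $(q\vee t_A)\wedge(q\vee t_B)$, not of $q$. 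Separately, the step you defer as the ``main obstacle''---that the non-defect part of $\hat h$ must be a non-crossing permutation on the top half together with its inverse on the bottom---is the real content of the theorem: your peeling induction presupposes that each outermost top pair has a matching bottom partner, which is exactly what must be proven, and in the paper it is extracted from the saturation of the Cauchy--Schwarz bound in Lemma~\ref{lembd} rather than from any purely lattice-theoretic statement.
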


\begin{proof}

 Fix $g_{A,B} = g_{A,B}|_{m=2}$ the $m=2$ version of these group elements. That is $g_B = (12)(34) \ldots (2n-1,2n)$ and $g_A = (23)(34) \ldots (2n,1)$.
 The first two conditions \Eqref{equal} require $q$ to be composed of pairs or singlets. In this case there is a canonical map to a permutation element $h_q \in S_{2n}$ since the order does not matter for cycles of length $2$ or $1$. Given this, we can easily compare $\#( h_q g_B)$ and  $\#( q \vee t_B) $ since the latter is either made of ``closed even blocks'' (a block that does not include any singlets) or ``$n_d$ blocks'' that start and end on a singlet.
 The counting is doubled in $\#( h_q g_B)$ for the closed blocks in $\#( q \vee t_B) $ (since the block factors into an even and odd orbits under the action of $h_q g_B$) while it is not for the blocks containing singlets. Thus:
\begin{equation}
\#( h_q g_A) = 2  \#( q \vee t_A) - n_d \,, \qquad \#( h_q g_B) = 2  \#(  q \vee t_B)  -n_d
\end{equation}
Substituting these relations into \Eqref{equal} and defining $ \hat{h} = h_q g_B$ one can show that the conditions \Eqref{equal} are equivalent to
\begin{equation}
\label{ncc}
\#( \hat{h} ) + \#( \hat{h} g_B g_A) = 2n + 2 \delta_{n_d,0}\,, \qquad \#( \hat{h} g_B )  = n+ n_d\,, \qquad \#_1( \hat{h}  g_B) = 2 n_d
\end{equation}

Define $\tau \equiv g_A g_B =(135 \ldots 2n-1) (246 \ldots 2n)^{-1}$ . We will use the notation $(k)_1(q)_2$ for $k,q\in S_n$ to mean permuting
$(1,3,5 \ldots 2n-1)$ according to $k$ and  the elements $(2,4,6\ldots 2n)$ according to $q$. So $\tau = (\tau_n)_1 (\tau_n^{-1})_2$.
We then recognize the first condition in \Eqref{ncc} as the problem of finding annular non-crossing permutations: $ANC_\tau = ANC_{n,n}$, i.e.
\begin{equation}
    d(\hat{h},e) + d(\hat{h},\tau) = d(\tau,e) + 2(\#(\tau)-\#(\hat{h}\vee\tau))
\end{equation}
which implies that the genus $G_\tau$ must be zero, see Definition~\ref{def:anc}.

For $n_d =0$  we can only have factorized $\hat{h} = (k)_1 (k^{\prime-1})_2$ on the two cycles in $g_A g_B$ where $k,k' \in NC_n$ are non-crossing permutations. Now also $\#(h_q) = \#(\hat{h} g_B) = \#(k k^{\prime-1})= 2n$  implying that $k = k'$. 
Thus:
\begin{equation}
q= P ((k)_1 (k^{-1})_2 g_B)
\end{equation}

For $n_d\neq 0$  we have annular non-crossing permutations. In particular we can only get one-cycles in $ \hat{h}  g_B$  iff there are \emph{straight} crossings between the two sectors. Thus we are looking for a certain class of annular non-crossings with $2 n_d$ straight crossings -- where we count either direction of crossings. 

Draw $\hat{h}$ as shown in the diagram in the statement of the theorem.
Such a diagram can be intepreted as an operator acting on the Hilbert space $(\mathcal{H}_\chi \otimes \mathcal{H}_\chi^\star)^{\otimes n}$ of dimension $\chi^{2n}$.
We define this corresponding operator as $D(\hat{h})$:
\footnote{The relevant non-crossing diagrams can be understood as arising from the \emph{affine Temperley-Lieb} (TL) algebra on $2n$ strands, which in turn has a representation acting on this Hilbert space. Indeed there is a well known correspondence between the annular non-crossing permutations and the affine TL algebra. }
\begin{equation}
\Braket{e_{i_1 j_1} \otimes e_{i_2 j_2} \ldots \otimes e_{i_n j_n}|D(\hat{h})| e_{j_{2n} i_{2n} } \otimes e_{j_{2n-1} i_{2n-1} } \ldots \otimes  e_{j_{n+1} i_{n+1} }}
\equiv \prod_{k=1}^{2n} \Braket{i_k| j_{\hat{h}(k)}} 
\end{equation}
where $e_{ij} = \left| i \right> \left< j \right|$ for some basis $ \left| i \right> $ on the $\mathcal{H}_\chi $ Hilbert space.

Given $\hat{h} \in ANC_{\tau}$ or it's corresponding $D(\hat{h})$, we can define $s(\hat{h})$ as the number of straight crossings and $t(\hat{h})$ as the total number of crossings. These are both even numbers. 
We can form $\hat{h}$ by considering two non-crossing permutations $k_1,k_2\in NC_{n+n_d}$ each with $t$ defect elements placed cyclically together and such that $k$ is constrained to connect all the $n_d$ defect elements to and from other non-defect elements.
The first defect is further constrained to connect the non-defect element directly next to it, as shown in \figref{fig:cut_glue}.
$\hat{h}\in ANC_{n,n}$ can be constructed by cutting open the connections at the defects of $k_1$ and $(k_2)^{-1}$ and glue the open connections in order, which we denote by $\hat{h} = k_1 \#_t k_2^{-1}$.
We have the obvious bound $s(\hat{h}) \leq t(\hat{h})$. 

\begin{figure}[h]
    \centering
    \includegraphics[width=.8\textwidth]{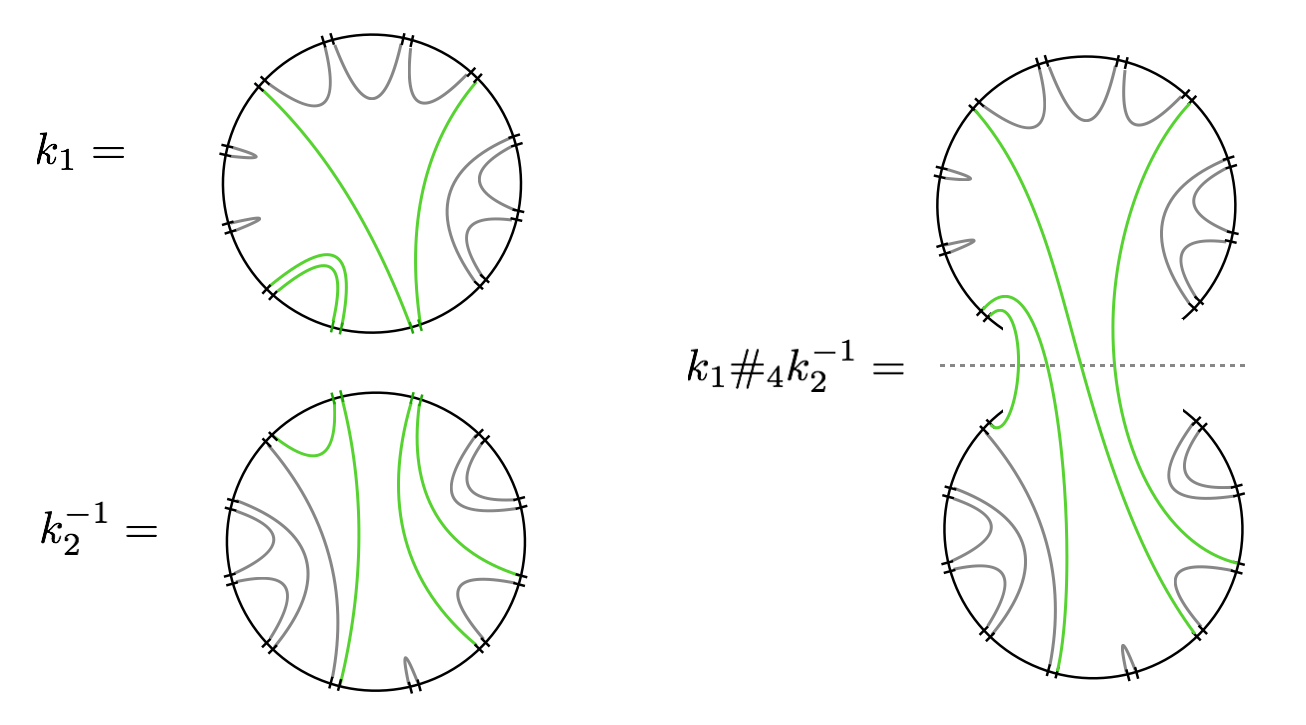}
    \caption{The procedure for constructing $k_1\#_t k_2^{-1}$. The green lines represent the would-be crossings.}
    \label{fig:cut_glue}
\end{figure}

Now we seek elements $\hat{h} \in ANC_{\tau}$, with $s(\hat{h}) = 2n_d$ such that
\begin{equation}
\chi^{\#(\hat{h} g_A)} =  {\rm Tr}_{\mathcal{H}_\chi^{\otimes 2n}} D( \hat{h}) = \chi^{n + n_d}
\end{equation}
In fact we have the following inequality:
\begin{equation}
\label{bdbd}
{\rm Tr}_{\mathcal{H}_\chi^{\otimes 2n}}D( \hat{h} ) 
 \leq \chi^{ n+ s(\hat{h})/2}
\end{equation}
and this is saturated iff $\hat{h} = k \#_t k^{-1}$ for some $k \in NC'_{n+t/2}$ in which case $s=t = 2 n_d$. This proves the theorem (after applying an arbitrary rotation by conjugating by powers of
to $(\tau_n)_1 (\tau_n)_2$.) 
We prove \Eqref{bdbd} and the saturation condition just used in Lemma~\ref{lembd}. 
\end{proof}

\begin{lemma}
\label{lembd}
For any $\hat{h} \in NC_{2n}$ then we have the estimates:
\begin{equation}
\label{twoineq}
{\rm Tr}_{\mathcal{H}_\chi^{\otimes 2n}}D( \hat{h} ) \leq \chi^{ n+ s(\hat{h})-t(\hat{h})/2}  \,, \qquad {\rm Tr}_{\mathcal{H}_\chi^{\otimes 2n}}D( \hat{h} ) \leq \chi^{ n+ s(\hat{h})/2} 
\end{equation}
Furthermore the later inequality is saturated iff $\hat{h} = k \#_t k^{-1}$ for some $k \in NC'_{n+t/2}$ where $t = t(\hat{h})$. 
\end{lemma}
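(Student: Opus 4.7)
The plan is to interpret $\mathrm{Tr}_{\mathcal{H}_\chi^{\otimes 2n}} D(\hat h)$ as counting closed loops in an annular diagram and then to bound the loop count by two independent arguments.

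First I would verify that $\mathrm{Tr}_{\mathcal{H}_\chi^{\otimes 2n}} D(\hat h) = \chi^{L(\hat h)}$, where $L(\hat h)$ is the number of connected components of the closed graph obtained by drawing $\hat h$ as strands on the annulus and then identifying each top endpoint with the bottom endpoint immediately below it via the trace. This follows directly from the $\langle i_k | j_{\hat h(k)}\rangle = \delta_{i_k,\, j_{\hat h(k)}}$ structure in the definition of $D(\hat h)$ together with the standard ``$\chi$ per closed loop'' rule; the swapped placement of $(i_k,j_k)$ between input and output in the definition is exactly what implements the vertical trace identification on the annulus.

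Next, for the first inequality $L(\hat h) \leq n + s - t/2$, I would decompose the diagram into (a) the $n - t/2$ top-to-top strands, (b) the $n - t/2$ bottom-to-bottom strands, (c) the $t$ top-to-bottom crossing strands (of which $s$ are straight and $t-s$ are non-straight), and (d) the $2n$ trace-closing edges. Each straight crossing, once closed by the trace, encloses an isolated loop and so contributes an additive $+1$, while non-straight crossings necessarily merge two components into one whenever they wind. A careful linear count over the four strand types produces the bound $n + s - t/2$.

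For the sharper second inequality $L(\hat h) \leq n + s/2$, the key point is that non-straight crossings necessarily wind around the annulus, so such a crossing and its trace-closing partner cannot close into a loop on their own: at least two non-straight crossings must cooperate, and in fact they cost at least one potential loop per unit imbalance. Combined with the straightforward contribution $s/2$ from straight crossings (each giving a loop after pairing with the trace edge), this improves the bound to $n + s/2$.

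Finally, the saturation analysis is the main obstacle. Saturation of $L(\hat h) = n + s/2$ forces every non-straight crossing to pair with a partner in a way that maximally preserves loops, and it forces the top-half and bottom-half non-crossing structures to be mirror images of each other. I would then argue that these two conditions together are precisely the condition $\hat h = k \#_t k^{-1}$ for some $k \in NC'_{n+t/2}$, using the explicit cut-and-glue procedure of Figure \ref{fig:cut_glue}: the reflection symmetry identifies the top permutation with the inverse of the bottom permutation, while the defect alignment identifies their defects. The converse direction, that every $k \#_t k^{-1}$ achieves the bound, can be verified by direct loop counting on the symmetric diagram. The delicate point will be ruling out exotic non-symmetric saturating configurations on the annulus, which requires tracking winding numbers around the annulus beyond what the Biane-style disk arguments of Appendix~\ref{app:NC} give directly.
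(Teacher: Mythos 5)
Your overall strategy — rewrite the trace as $\chi^{L(\hat h)}$ for a loop count $L(\hat h)$ and bound $L$ combinatorially on the annulus — is sound in outline but genuinely different from the paper's, and as written it leaves the decisive steps unproved. The paper does not count loops at all: it first strips off the straight crossings (which factor out a $\chi^{s}$), then rewrites the remaining trace as an inner product $\langle k_1'|\Sigma|k_2'\rangle$, where $k_{1,2}'$ are the non-crossing permutations left after deleting the crossing strands, reinterpreted as (unnormalized maximally entangled) vectors, and $\Sigma$ is the unitary permutation obtained by following the crossing lines around the trace. Cauchy--Schwarz plus unitarity of $\Sigma$ then gives $|\langle k_1'|\Sigma|k_2'\rangle|^2 \le \langle k_1'|k_1'\rangle\langle k_2'|k_2'\rangle = \chi^{2n-t}$ in one line, and the saturation condition $k_1'=k_2'$ (hence $\hat h = k\#_t k^{-1}$) falls out of the equality case of Cauchy--Schwarz for free. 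Your route has to reproduce all of this by hand.

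Concretely, three things are missing. First, your "careful linear count over the four strand types" for $L(\hat h)\le n+s-t/2$ is asserted, not performed; this is precisely the nontrivial combinatorial content of the lemma, and it is not obvious that a naive additive count over strand types controls how the $t-s$ winding strands merge components. Second, you attempt an independent winding-number argument for the bound $\chi^{n+s/2}$, but this is unnecessary: since $s(\hat h)\le t(\hat h)$, the second inequality follows immediately from the first via $s-t/2\le s/2$, so the only additional fact needed for saturation is that equality forces $s=t$, i.e.\ \emph{no} non-straight crossings at all. Your saturation discussion, which speaks of non-straight crossings "pairing with partners in a way that maximally preserves loops," is therefore aimed at a configuration that cannot occur at saturation. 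Third, you yourself flag that "ruling out exotic non-symmetric saturating configurations" requires winding-number bookkeeping you have not supplied; in the paper this is exactly the point handled by the equality case of Cauchy--Schwarz ($|k_1'\rangle\propto|k_2'\rangle$ forces the top and bottom halves to be mutual inverses), so the gap you identify is real and is not closed by anything else in your argument. Your first step — that ${\rm Tr}\,D(\hat h)=\chi^{L(\hat h)}$ with the trace implementing the vertical identification — is correct and matches the paper's implicit use of the diagrammatics, but everything downstream of it needs to be actually carried out before this constitutes a proof.
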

\begin{proof}
Note that:
\begin{equation}
{\rm Tr}_{\mathcal{H}_\chi^{\otimes 2n}}D( k \#_t k^{-1}) = \chi^{ n+ t/2} 
\end{equation}
by direct computation. 

Consider now the first inequality \Eqref{twoineq}. We can remove any \emph{straight crossings} since they factor out trivially from both sides
of this inequality. So wlog consider only elements 
with $s(\hat{h}) = 0$. We can then write:
\begin{equation}
{\rm Tr}_{\mathcal{H}_\chi^{\otimes 2n}} D( k_1 \#_t k_2^{-1} )
= \left< k_1' \right| \Sigma \left| k_2' \right> \, \qquad k_{1,2}' \in NC_{n - t}
\end{equation}
defined as follows (see Figure below). Here $ k_{1,2}'$ are constructed from $k_1$ and $k_2$ by removing
the defects/crossing lines and considering the remaining elements as a non-crossing permutation on $n-t$ points. 
These can then be interpreted as pure states (non-normalized maximally entangled states) 
in the Hilbert space of dimension $\chi^{ 2(n-t)}$. Finally $\Sigma$ is a unitary permutation  on this Hilbert space
formed by following crossing lines in $k_1 \#_t k_2^{-1}$ around the trace. 
\begin{equation}
\nonumber
\centering
\hspace{-.1cm} \includegraphics[scale=.58]{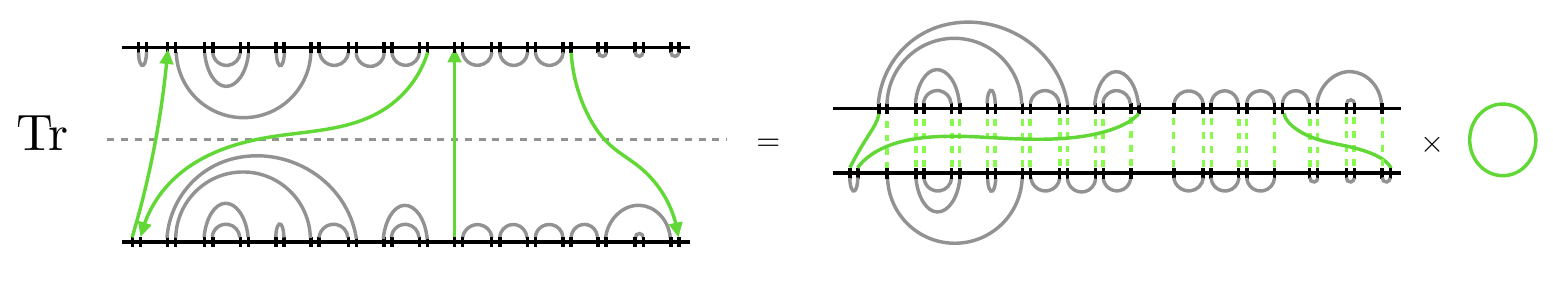}
\end{equation}
(where we have included a single straight crossing, that may be removed trivially.)

Then Cauchy-Schwarz gives:
\begin{align} 
|{\rm Tr}_{\mathcal{H}_\chi^{\otimes 2n}} D( k_1 \#_t k_2^{-1} )|^2
&= |\left< k_1' \right| \Sigma \left| k_2' \right>|^2  
\leq  \left< k_1' \right| \Sigma^\dagger \Sigma \left| k_1' \right>  \left< k_2' \right. \left| k_2' \right>\\
&=   \left< k_1' \right. \left| k_1' \right>  \left< k_2' \right. \left| k_2' \right> 
\\ &= \chi^{ -2 t} {\rm Tr}_{\mathcal{H}_\chi^{\otimes 2n}} D( k_1 \#_t k_1^{-1})  {\rm Tr}_{\mathcal{H}_\chi^{\otimes 2n}} D( k_2 \#_t k_2^{-1}) = \chi^{2n -t}
\end{align}
as required. The second inequality in \Eqref{twoineq} is trivial now. Saturation of this later inequality requires $t(\hat{h}) = s(\hat{h})$ and saturation
of the Cauchy-Schwarz inequality above (now with no $\Sigma$) requires $k_1' = k_2'$ implying the correct condition. 
\end{proof}

We now consider the saturation condition for $g$ in \Eqref{sat0}. To make progress we consider only the form of $q$ given by Theorem~\ref{partq}. 
\begin{theorem}
\label{thm:g(q)}
Fix an $n_d$ and a corresponding $q$ satisfying \Eqref{equal}. Then there is a unique $g\in S_{mn}$ that satisfies \Eqref{sat0}. This element is:
\begin{equation}
g(q) = \iota(\hat{h}) g_B
\end{equation}
where $\hat{h} \in S_{2n}$ is the unique element that satisfies $q = P( \hat{h} g_B|_{m=2})$ and where  $\iota $ embeds the subgroup $S_{2n}$ into $S_{mn}$. The subgroup acts only on elements $j m/2 + 1$ for $j =0 \ldots 2n-1$ fixing all other elements. 

If we relax the second condition in \Eqref{sat0} then all the dominant elements are those satisfying $g' :  P(g') \leq P(g(q))$ where $g(q)$ was defined above, and also $P(g') \nleq P(g(q'))$ for any $q' < q$.  There are:
\begin{equation}
(C_{m} - C_{m/2}^2)^{n - n_d} (C_{m/2}^{2})^{n_d}
\end{equation}
of these, where $C_m$ are the Catalan numbers.
\end{theorem}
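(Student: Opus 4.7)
The plan is to exploit the topological characterization from Appendix~\ref{app:perm_topology} (Theorem~\ref{thmgen}) to reduce the problem, for each fixed $q$, to a computation on each block of $q$ separately. First I will show that the two saturation conditions in \Eqref{sat0}, namely $G_A=G_B=G_X=0$ together with $d(g,X)=2n-\#(q)$, imply a rigid structure for $g$. The condition $G_X(g)=0$ together with $q_X(g)=q$ forces $g$ to be a multi-annular non-crossing permutation based over $X$ whose $X$-orbits are merged exactly according to $q$. The additional equality $d(g,X)=2n-\#(q)$ saturates the bound in Lemma~\ref{lemmo} (applied with $g_0=X$), which I will show requires $g$ to act as a genuine permutation only on the $2n$ "endpoint" elements $\{jm/2+1:j=0,\dots,2n-1\}$, fixing all the other elements inside each half-cycle of $X$. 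The unique such $g$ is then obtained by embedding, through $\iota$, the element $\hat h\in S_{2n}$ dictated by Theorem~\ref{partq}, followed by multiplication by $g_B$ (the factor of $g_B$ is what turns the endpoint-permutation $\hat h$ into an element whose cycle partition projects back to $q$). The conditions $G_A=G_B=0$ are then automatic because $g_A,g_B$ already have genus zero over $X$, and $\hat h\in ANC_{n,n}$ embeds compatibly.

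For the relaxed problem, where only $G_A=G_B=G_X=0$ is kept, I would argue that adding Cayley-distance to $X$ can only be done by refining the cycle structure of $g(q)$ while keeping the joint action $\langle g,X\rangle$ unchanged. Concretely, each cycle of $g(q)$ will be allowed to break into a non-crossing refinement; the genus-zero conditions over $g_A,g_B,X$ ensure that these refinements are non-crossing inside each cycle of $g(q)$. Hence the admissible elements $g'$ are precisely those with $P(g')\le P(g(q))$, and the further requirement $P(g')\not\le P(g(q'))$ for $q'<q$ ensures the connectedness stays exactly $q$ and is not enhanced by an accidental collapse of blocks. These two conditions together isolate the correct set.

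The counting is then a product over the cycles of $g(q)$. Theorem~\ref{partq} tells us that $\hat h$ (and hence $g(q)$) decomposes into $n-n_d$ "double" blocks, where two $X$-orbits of length $m/2$ are glued into a single $g(q)$-cycle of length $m$, and $n_d$ "defect" blocks, where the pair of $X$-orbits remains separated into two $g(q)$-cycles of length $m/2$. For a double block, non-crossing refinements of the length-$m$ cycle are counted by $C_m$, but the refinements that factorize across the two $X$-halves do not preserve the connectedness $q$ (they would enlarge the partition to some $q'<q$ by disconnecting that block), and these factorized refinements are counted by $C_{m/2}\times C_{m/2}=C_{m/2}^2$. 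So each double block contributes $C_m-C_{m/2}^2$. For a defect block, the two $X$-halves are already disconnected in $g(q)$, and any non-crossing refinement on each half is allowed, contributing $C_{m/2}^2$. Taking the product gives $(C_m-C_{m/2}^2)^{n-n_d}(C_{m/2}^2)^{n_d}$.

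The main obstacle I foresee is the first step: rigorously establishing that saturation of $d(g,X)=2n-\#(q)$ together with genus-zero over $X$ pins $g$ down to $\iota(\hat h)g_B$. The natural route is to combine the equality case of Lemma~\ref{lemmo} (which forces the cycle partitions of $g$ and $X$ to meet only at the identity partition and to satisfy a distributivity condition) with the non-crossing characterization of elements of $ANC_{n,n}$. Once this rigidity is established, the uniqueness of $\hat h$ corresponding to a given $q$ via $q=P(\hat h g_B|_{m=2})$ (already proved in Theorem~\ref{partq}) delivers the uniqueness of $g(q)$, and the subsequent counting argument proceeds cleanly as outlined.
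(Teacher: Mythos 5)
Your overall architecture --- a block-by-block analysis over the partition $q$ followed by a per-block count of non-crossing refinements --- matches the paper's, and your final counting ($C_m-C_{m/2}^2$ per doubleton block, $C_{m/2}^2$ per pair of defect blocks, with the factorized refinements excluded because they enhance the connectivity) is correct. However, the central step of your argument is wrong. You claim that saturating $d(g,X)=2n-\#(q)$ via the equality case of Lemma~\ref{lemmo} "requires $g$ to act as a genuine permutation only on the $2n$ endpoint elements, fixing all the other elements." This cannot be right even as stated: the answer $g(q)=\iota(\hat h)g_B$ itself moves every element (it is built from full $m$- and $m/2$-cycles), so it would fail your own criterion. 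On the charitable reading that you mean $gX^{-1}$ is supported on the endpoints, the claim still does not follow from Lemma~\ref{lemmo}. A concrete counterexample: take $m=4$, $n=1$, so $X=(12)(34)$, $g_A=g_B=(1234)$, the unique admissible $q$ is fully connected and $g(q)=g_B$. The element $g=(1243)$ has $gX^{-1}=(14)$, hence $d(g,X)=1=2n-\#(q)$, it has $G_X=0$ and the correct connectivity, and it satisfies both equality conditions in the proof of Lemma~\ref{lemmo} ($P(X^{-1}g)\wedge P(X)=e$ and the tree condition on the bipartite graph). Yet $g\neq g(q)$; it is excluded only because $G_A=G_B=1$.

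This shows that the conditions $G_A=G_B=0$, which you dismiss as "automatic," are precisely where the content of the theorem lies. The paper's proof spends essentially all of its effort there: pinching the genus-zero admissible surface $\Sigma_A$ into an admissible surface over $X$, and then inserting an auxiliary non-crossing curve (built from the closed-block structure of $q\vee t_A$ or $q\vee t_B$) joining the two $m/2$-boundaries of each doubleton block, in order to show that the restriction of $g'$ to such a block is not an arbitrary element of $ANC_{m/2,m/2}$ (which is all that $G_X=0$ gives) but lies in the subset $NC_m$ non-crossing with respect to the specific joined $m$-cycle of $g(q)$. Without this step, neither your uniqueness claim nor your characterization $P(g')\le P(g(q))$ of the relaxed set is justified. (Note also that the paper runs the logic in the opposite order: it first characterizes the relaxed set and then computes $d(X,g')=(n-n_d)+\sum_i(\#(\alpha_i)-1)+\sum_j(\#(\beta_j)-1)$ for $g'\le g(q)$, so that the second condition of \Eqref{sat0} is saturated iff every block restriction is a single full cycle, which yields uniqueness. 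Your reversed order could work in principle, but only once the genus argument above is supplied.)
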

\begin{proof}
We first consider the case where we ignore second part of \Eqref{sat0}.  
Given a connectivity fixed by $q$ satisfying \Eqref{equal} first consider the blocks of length $1$ in $a_i \in q$ . These contribute to all $G_X, G_A, G_B$ for a given $g'$ independently, and so we have zero genus for $g'$ restricted to these blocks, iff these $g'$ forms a NC permutation within the corresponding $m/2$ block. We call these NC permutation $\alpha_i \in NC_{m/2}$. It is not hard to see that $ \iota(\hat{h}) g_A$ contains a cycle $\tau_{m/2}$ on this same $m/2$ block. Thus the statements of the theorem are true for the $2n_d$ unit blocks. 

Now consider a block $b_i \in q$ of length $2$. From the form of $q$ given in Lemma~\ref{partq} we can see that such a block sits either in a closed block of $q \vee t_A$ or a closed block of $q \vee t_B$ (or both). 
Assume without loss of generality it is the former. Now recall the ``pinching argument'' in Theorem~\ref{thmmin}. Start with the zero genus surface $\Sigma_A$ describing $g'$ with resect to $g_A$.
Now pinch to form the surface with respect to $X$. Since $0\leq G_X \leq G_A  = 0$ the pinched surface must also be an admissible surface $\Sigma_X$ for $g'$ (with respect to $X$) that has minimal genus $0$. Focus on the curves connecting the two $m/2$ boundaries on $\Sigma_X$ associated to $b_i$ that describe a disconnected genus $0$ surface  with two boundaries $\Sigma_i$. These must form an annular non-crossing permutations in $S_m \subset ANC_{m/2,m/2}$
that we call $\beta_i$. We now show that these must be a special subset of annular non-crossing permutations. 

We will show that we can add to $\Sigma_i$
a curve that starts between the marked points $1$ and $m/2$ on the first boundary and ends between the marked points $1$ and $m/2$ on the second boundary, and does not cross any other curves on $\Sigma_i$. We construct this as follows (see Figure below). Since we found $\Sigma_X$ by pinching $\Sigma_A$ between the marked points $1$ and $m/2$  and $m/2+1$ and $m$ on the boundaries of $\Sigma_A$, we can  add \emph{new} non-crossing curves to $\Sigma_X$ that pass between pairs of boundaries described by the blocks in $t_A$ that start and end between the $1$ and $m/2$ marked points on each $X$ boundary.  On $\Sigma_X$ shrink all boundaries except for the two described by $b_i$ to arrive at $\Sigma_i$ plus
some bulk points where the boundaries shrunk, and non-crossing curves between the bulk points.  Since $q \vee t_A$ contains the block $b_i$ in a ``closed block'' we can find a curve on $\Sigma_X$ that follows alternatively the \emph{new} curves that we constructed using $t_A$ and the curves in $g'$ \emph{not} associated to $b_i$. The result connects, via the bulk points, the two remaining boundaries as required.

\begin{figure}[h!] 
\nonumber
\centering
\includegraphics[scale=.7]{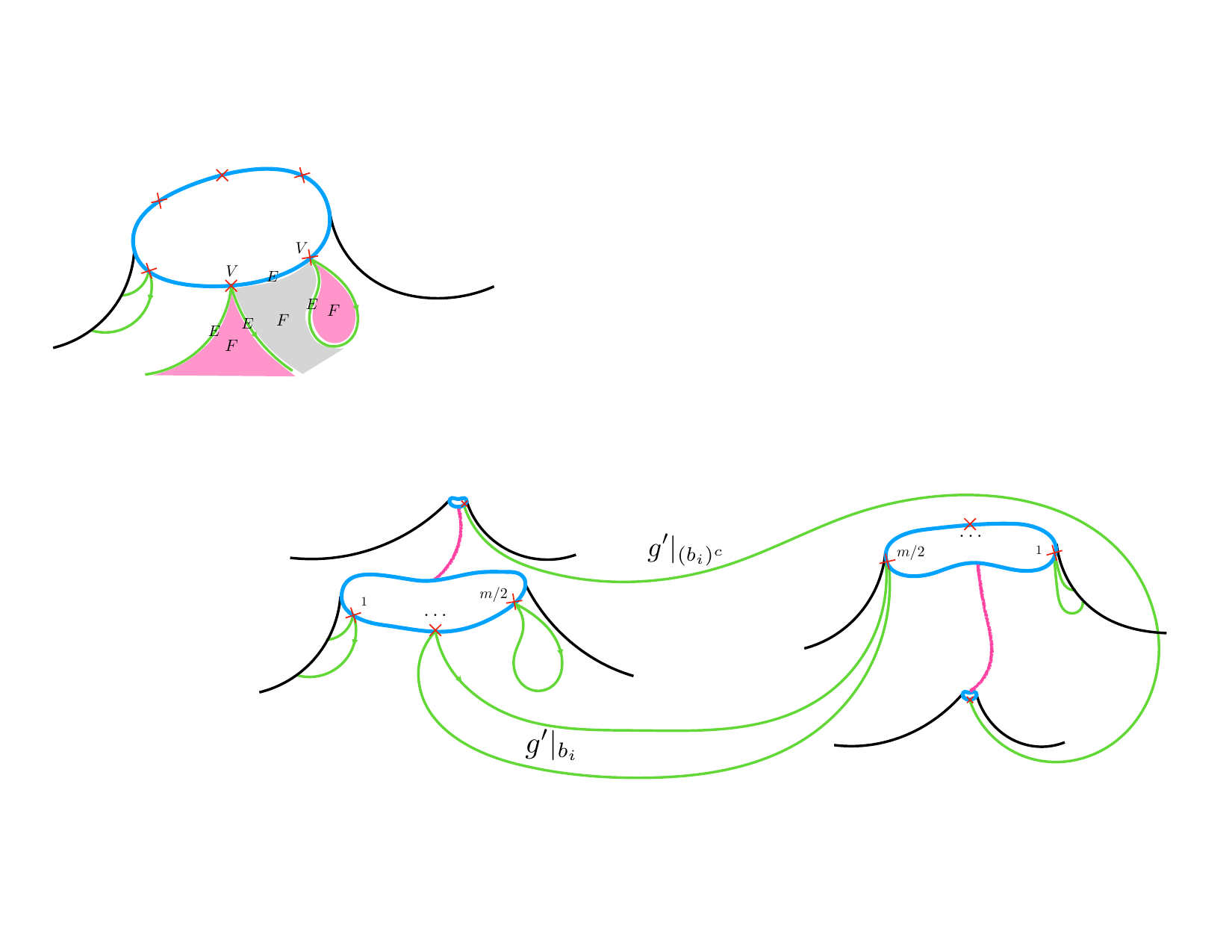}
\caption{The surface $\Sigma_X$ is constructed from $\Sigma_A$ in such away that we can insert non-crossing curves on $\Sigma_X$, shown in pink.
These curves pass between the two boundaries associated to the blocks in $t_A$.  We can use these curves and some other non-crossing
curves, labelled $g'|_{b_i^c}$, to pass between the two boundaries associated to the block $b_i \in q$.  }
\end{figure}

An annular non-crossing permutation ($ANC_{m/2,m/2}$) does not cross a line between the two boundaries iff it is a non-crossing permutation $NC_m$ associated to the joined boundaries, where we join the boundaries through the non-crossing line. This is a special subset $NC_m \subset  ANC_{m/2,m/2}$. It is not hard to see that in our case these are defined as non-crossing permutations with respect to the remaining cycles  $\tau_m$ of length $m$ in $g(q) \equiv \iota(\hat{h})) g_B$. Thus we solve the condition $G_X = G_A = G_B =0$ by demanding $P(g') \leq P(g(q))$ (such that the connectivity of $g'$ is still described by $q$ which requires that $P(g') \nleq P(g(q'))$ for $q' < q$.) 

For the first statement of the theorem (including the second condition in \Eqref{sat0}), we now simply need to compute the following for $g' \leq g(q)$:
\begin{align}
d(X, g') & = mn - \sum_i \#(\tau_{m/2} (\alpha_i)^{-1} ) - \sum_j \#(\tau_{m/2} \times \tau_{m/2} (\beta_j)^{-1} ) \\
& = mn - \sum(m/2+1- \#(\alpha_i)) - \sum ( m - \#(\beta_j) ) \\
& = (n-n_d) + \sum( \#(\alpha_i)-1) + \sum ( \#(\beta_j) -1)
\end{align}
where $n-n_d = 2 n - \#(q)$. So we get equality for the second part of \Eqref{sat0} iff $\alpha_i$ and $\beta_j$ have one cycle - namely they equal $\tau_{m/2}$ or $\tau_m$
within their respective blocks. 
Again it is not hard to check that the unique element that does the job is $g(q)$. 
\end{proof}

The surfaces that describe $g'$ with respect to $g_B$, are disconnected $NC_{m/2}$ discs and, the $k$-fold branched coverings of the disk. We give some pictures to describe the dominant saddles. 
\begin{figure}[H] 
\centering
 \includegraphics[scale=.58]{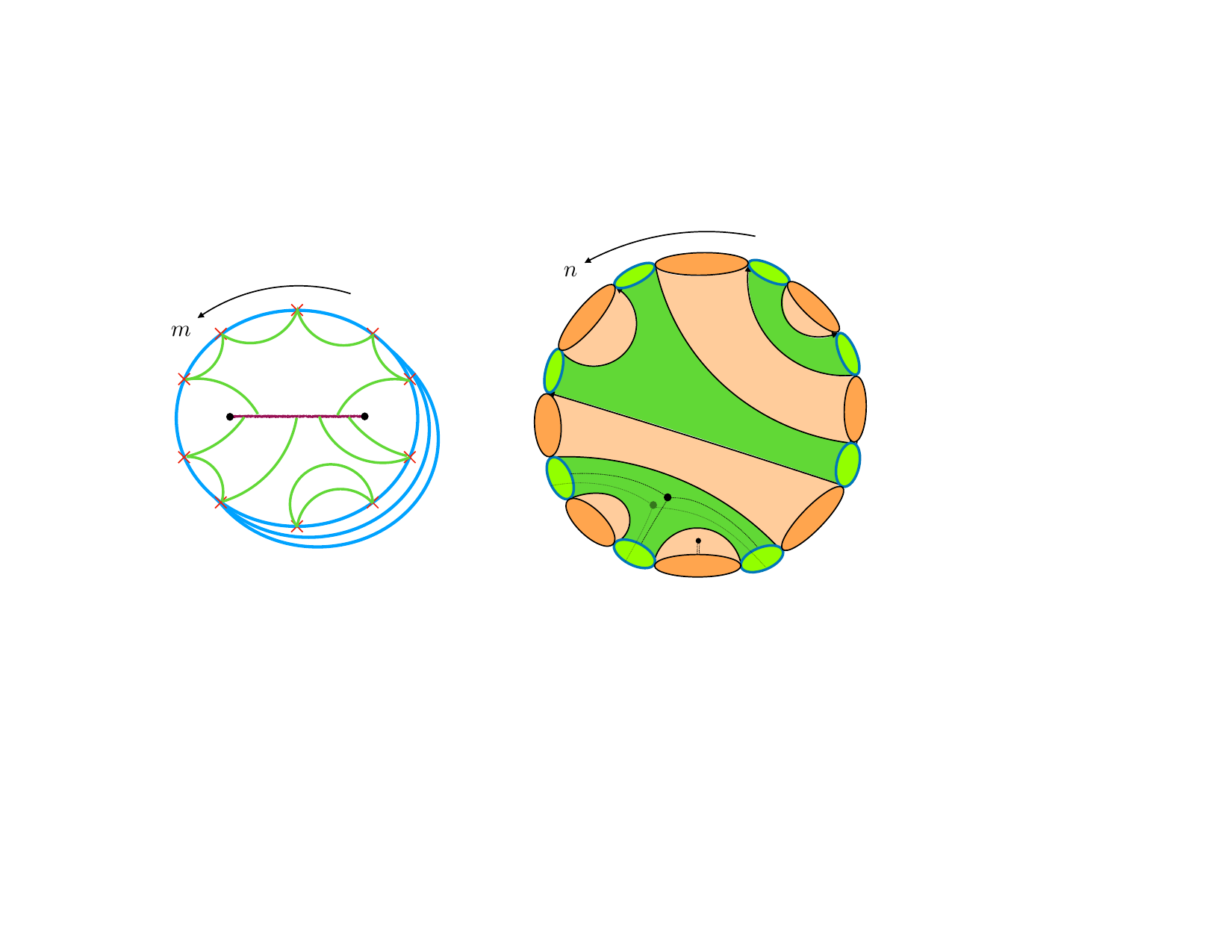}  
 \caption{A part of $g'$ corresponding to single block in $\#(q \vee t_B)$ and represented as a surface with respect to $g_B$. The boundary represented by the cycles of $g_B$ is shown in blue. The purple line can be thought of as a ``twist operator'' which relates the different branches of the block. 
 }
\end{figure}

Lastly we mention how these elements make their appearances in the phase diagram.
The minimal elements are constituted by $g(q) \in S_{mn}$ described in Theorem \ref{thm:g(q)} but also saturating \Eqref{pos1}, which in turn forces either $n_d=n$ or $n_d=0$.
They are:
\begin{align}
\label{eq:min_elements}
    \begin{cases}
    \{e\}, \quad & x<1/2, \\
    \{g : P(g) \leq P(X)\} = \Gamma(g_A,e)\cap \Gamma(g_B,e) , \quad & x=1/2, \\
    \{X\}, \quad & 1/2<x\le n/2, \\
    \{ k^{[1]_m}(k^{-1})^{[m/2+1]_m}g_B : k\in NC_n\} \subset \Gamma(g_A,g_B), \quad & x\ge n/2,
    \end{cases}
\end{align}
in terms of the notation of Sec. \ref{sec:SR_RTN}. 
We see that the $X$ element comes from $n_d=n$, whereas the $x\ge n/2$ elements come from $n_d=0$.
Intuitively one can also regard the elements with $0<n_d<n$ as smoothly interpolating between $\{X\}$ and $\{k^{[1]_m}(k^{-1})^{[m/2+1]_m}g_B\}$.
They are essential around the reflected entropy phase transition $x=n/2=1/2$, where all the four regions in \Eqref{eq:min_elements} come on top of each other.
The fact that phase transition occurs at $x=1/2$ also implies that we drop the second condition of \Eqref{sat0}, that is we must sum over all the elements prescribed in theorem \ref{thm:g(q)}.

\subsection{Generating functions}

Let us compute the generating function for elements that  saturate the $q$ conditions in \Eqref{equal}.
Define:
\begin{equation}
Z(z,w,p) = \sum_{n = 0}^{\infty} \sum_{n_d = 0}^{n} w^{n_d} z^{n}   \sum_{ q \in P_{2n} : 
\Eqref{equal} \checkmark } p^{-\#(t_A \vee q) + \#(t_B \vee q) }
\end{equation}
where we \emph{pick} the $n_d = n = 0$ term to equal $p$ (which then violates 
the seeming $p \rightarrow 1/p$ symmetry.) 
We use the result in Lemma~\ref{partq}. Let us separate out the $n_d =0$ contribution. For $n =n_d$ we simply get the generating function of the $q$-Catalan numbers:
\begin{equation}
Z|_{n_d = 0} = \sum_{n=0}^{\infty} \sum_{ a \in NC_n} z^n p^{-2 \#(a)+n+1} \equiv C(z,p) = - \frac{\sqrt{ (1 - z/z_+)(1-z/z_-)} }{2z} + \frac{1}{2z} - \frac{1}{2 \sqrt{z_+ z_-}}
\end{equation}
where $z_\pm = p (1 \pm p)^{-2}$ and we pick the cut such that $\sqrt{z_+ z_-} = p/(1-p^2)$ which is the term that violates the symmetry $p \rightarrow 1/p$.  This function satisfies:
\begin{equation}
C(z,1/p) = C(z,p) + (p^{-1}-p) \,, \qquad z C(z,p)C(z,1/p) = C(z,p)-p
\end{equation}
 
Focus now on the rest.
Consider the integer partition $n-n_d =\sum_{k=1}^{2n_d} \ell_k$
where $0 \leq \ell_k  \leq n-n_d$, then diagram represented in Theorem~\ref{partq} is described by $2n_d$ group elements $b_k \in NC_{\ell_k}$ and is weighted
in the partition function above by:
\begin{equation}
w^{n_d} z^n p^{\sum_{k}(-1)^k (2 \#(b_k) - \ell_k -3/2)}
\end{equation}
where we pick $\#(b_k)  = 0$ for $\ell_k = 0$. 
We also need to account for translations of the diagram up to an amount of $1+\ell_1 + \ell_2$. By cyclic symmetry we can pick any two $\ell$'s and sum
$(\sum_{k=1}^{2 n_d}(1/2+ \ell_k))/n_d = n/n_d$. 
Thus we need to compute:
\begin{equation}
Z_{n_d \neq 0} =  \sum_{n=1}^{\infty} \sum_{n_d=1}^{n} \frac{n}{n_d}  \sum_{ \{\ell_k = 0, \ldots n-n_d \}_k }   \sum_{ \{ b_k \in NC_{\ell_k}\} } \delta_{ \sum_{k=1}^{2n_d} \ell_k, n- n_d} w^{n_d} z^n p^{\sum_{k}(-1)^k (2 \#(b_k) - \ell_k)}
\end{equation}
We introduce a contour integral to extract the correct power of $y$:
\begin{equation}
\delta_{ \sum_{k=1}^{2n_d} \ell_k, n- n_d}  = \oint_{C} \frac{dy}{ 2\pi i y}  y^{(n_d -n) + \sum_k \ell_k}
\end{equation}
where $C$ encircles the origin $y=0$. This also
allows us to extend the sum $ \sum_{\ell_k =0 }^{2 n_d} \rightarrow \sum_{\ell_k=0}^{\infty}$. 
Computing this:
\begin{align}
Z_{n_d \neq 0} =& z \partial_z \oint_{C} \frac{dy}{ 2\pi i y}  \sum_{n=1}^{\infty} \sum_{n_d=1}^{n} \frac{1}{n_d} w^{n_d} z^n y^{(n_d -n)} \left(  C(y, p) C(y,1/p) \right)^{n_d}
\end{align}
Shifting the sums to $n_d = 1, \ldots \infty$ and $\tilde{n} =  n - n_d = 0 , \ldots \infty$ allows us to do the sums:
\begin{align}
Z_{n_d \neq 0} = &  z \partial_z \oint_{C} \frac{dy}{ 2\pi i y}  \frac{1}{(1 - z/y)}  \ln ( 1 - w z C(y, p)  C(y, 1/p)   ) 
\end{align}
converging for small $w, z$. 
Then doing the contour integral to pickup the pole at $z = y$ (we can pick the contour $C$ to be at sufficiently small $y$ to avoid non-analyticities from the
 $C(y,p)$ terms above, but large enough to encircle this pole):
\begin{align}
Z_{n_d \neq 0} =    z \frac{ \partial}{\partial z} \ln ( 1 - w z C(z, p)   C(z, 1/p)  )
\end{align}
while this derivation required small $z,w$ we can then analytically continue this answer away from this regime. 

Putting it together we find:
\begin{equation}
Z(z,w,p) = C(z,p) + \frac{ z \partial_z   C(z,p)}{ C(z,p)-p - w^{-1} }
\end{equation}
Note that the second term has a pole at $z = \frac{p w}{(1 + pw)( p + w)}$ with unit residue. 

Now consider the problem of the minimal elements for $g \in S_{mn}$ in a fixed $n_d$ sector, satisfying \Eqref{sat0}.
\begin{equation}
Z(z,w,p) =  \sum_{n = 0}^{\infty} \sum_{n_d = 0}^{n} w^{n_d} z^{n}   \sum_{ g \in S_{mn} : \left\{ \substack{ \hspace{-.7cm}  \Eqref{sat0}\checkmark \\  \Eqref{equal}|_{q = q_g} \checkmark } \right. } p^{-\frac{1}{2} (\#(g g_A^{-1} ) - \#(g g_B^{-1})) }
\end{equation}
and this gives the same generating function.

We finally drop the second condition in \Eqref{sat0} such that we have a new degree of freedom.
Motivating the more complicated generating function:
\begin{equation}
\label{eq:gen_fun_Y}
Y(z,w,p,r) =  \sum_{n = 0}^{\infty} \sum_{n_d = 0}^{n} w^{n_d} z^{n}   \sum_{ g \in S_{mn} : \left\{ \substack{  G_A = G_B = 0\checkmark \\  \Eqref{equal}|_{q = q_g} \checkmark } \right. } r^{-\#(g)} p^{ \frac{1}{2}( \#(g g_A^{-1} )-  \#(g g_B^{-1}) )}
\end{equation}
In a fixed $q$ sector we have to sum over all elements $g \leq \iota(\hat{h}_q) g_A$ which simply gives:
\begin{equation}
\rho_1(r)^{n-n_d} \rho_0(r)^{n_d} \,, \qquad \rho_1(r) =  C_{m}(r^{-1}) - C_{m/2}(r^{-1})^2
\qquad \rho_0 (r)= C_{m/2}(r^{-1})^{2}
\end{equation}
where $C_m(r) = \sum_{g \in NC_m} r^{\#(g)}$, with $C_1(r) = 1$. 
So that:
\begin{equation}
Y(z,w,p,r)  = Z(\rho_1(r) z, \rho_0(r) \rho_1(r)^{-1} w, p) 
\end{equation}

\subsection{Reflected resolvent}

The resolvent can be expressed using the generating functions we found in previous subsection. We set:
\begin{equation}
\chi_A = p^{1/2} q^{1/2} \chi^{1/2}  \, \qquad \chi_B = p^{-1/2} q^{1/2} \chi^{1/2}
\qquad \chi_C = \chi 
\end{equation}
and take $\chi$ large. This zooms into the $x_A = x_B = 1/2$ part of the phase diagram.
\begin{align}
R(\lambda) = \sum_{n=0}^{\infty}    \lambda^{-n-1} \frac{\overline{ {\rm Tr} \sigma_{AA^\star}^n}}{
\overline{ \left( {\rm Tr} \rho_{AB}^{m} \right)}^{n}  } 
&\approx \lambda^{-1}  \hat{Y}( (\lambda \chi q  (\rho_1(q) + \rho_0(q)))^{-1},  \chi q  ,p,q)
\\
& =\lambda^{-1}  \hat{Z}( (\lambda \chi q)^{-1} p_1(q),   \chi q p_0(q)/p_1(q) ,p)
\end{align}
Where $\hat{Y}(z,w,p,q) =  (\chi q -1 )Y(z,0,p,q)  + Y(z,w,p,q)  $ and similarly for $\hat{Z}$. We have defined:
\begin{equation}
p_{0,1}(q) = \frac{\rho_{0,1}(q)}{\rho_0(q) + \rho_1(q)}
\end{equation}
At large $\chi$ there are multiple scales in the resolvent, which makes it difficult to prove a clean statements on the large $\chi$ answer.
We have the main part of the resolvent which describes the MP distribution given by:
\begin{align}
R(\lambda)|_{\lambda \sim 1/\chi} &= \lambda^{-1} \chi q C( (\lambda \chi q)^{-1} p_1(q),p) \\
& =\chi^2 \left(  - \frac{\sqrt{ (\lambda \chi - z'/z_+)(\lambda \chi -z'/z_-)} }{2z' \lambda \chi} + \frac{1}{2z'} - \frac{1}{2 \lambda \chi \sqrt{z_+ z_-}} \right)
\end{align} 
and is found near eigenvalues $\lambda \sim 1/\chi$ where we have defined $z' = p_1(q)/q$ and $z_\pm = z_\pm(p)$ was defined previously. Then there is a single pole approximately at $z w \approx 1$ (taking $ w $ large and $z w $ held fixed.) That is $\lambda \approx p_0(q)$. For $\lambda \sim \mathcal{O}(1)$ we have $z$ small so we can approximate $C(z,p) \approx p + z + \ldots$
\begin{equation}
 R(\lambda)|_{\lambda \sim 1} \approx  \frac{1}{\lambda - p_0(q)}
\end{equation}
In particular the naively leading MP distribution is not relevant because it has zero imaginary part for this range of
$\lambda$. 

Note the general location of the pole is:
\begin{equation}\label{eq:shifted_pole}
\lambda =  p_0 \left(p + \frac{p_1}{p_0 \chi q} \right)\left( \frac{1}{p} + \frac{p_1}{p_0 \chi q} \right) =  p_0 \left(1 +  \frac{p_1}{p_0 \chi_{A}^2} \right)\left( 1 + \frac{p_1}{p_0 \chi_{B}^2}\right) 
\end{equation}
although there is no reason to expect there are not other $1/\chi$ corrections that possibly even compete with the corrections
implied by the above formula. 
The more general resolvent that is represented by \Eqref{eq:gen_fun_Y}  is not accessible in this physical quantity since it lies outside of the domain of validity for the approximation that we are using. This is due to the large $w \sim \chi$ value that we must use in the generating function. In order to access this we need a way to hold fixed $w$ without changing $q$ or $p$.

\section{Full solution to $k=2$ SD-equation}
\label{app:SD-R}
In section \ref{sec:SD} we have the following set of matrix equations for the $2\times2$ matrix $R$ which determines the resolvent at $\lambda\sim O(\chi_C^0)$:
\begin{align}
  \label{eq:SD-R}
  \begin{split}
      \begin{pmatrix}
    R_{11} & R_{12} \\
    R_{21} & R_{22}
  \end{pmatrix} &= 
             \begin{pmatrix}
               \lambda - F_{11} & -F_{12} \\
               -F_{21} & \lambda- F_{22}
             \end{pmatrix} ^{-1} \\
  \begin{pmatrix}
    F_{11} & F_{12} \\
    F_{21} & F_{22}
  \end{pmatrix} &=
             \sqrt{\lambda}\begin{pmatrix}
               0 & \widehat{D}_m \\
              \widehat{D}_m & 0 
             \end{pmatrix}
             + \frac{\lambda\widehat{E}_m}{(\chi_A\chi_B)^2}
             \begin{pmatrix}
               R_{22} & R_{21} \\
               R_{12} & R_{11}
             \end{pmatrix}
                        + \frac{\lambda\widehat{B}_m}{(\chi_A\chi_B)^2}
                        \begin{pmatrix}
                          S_{22}(\chi^2_B-1) & 0 \\
                          0 & S_{11}(\chi^2_A-1)
                        \end{pmatrix} 
                      \end{split}
\end{align}
With $S_{11}$ and $S_{22}$ given by
\begin{align}
   (\chi_A^2 - 1) S_{11} , (\chi_B^2 - 1) S_{22} = \frac{ \chi_A^2 \chi_B^2}{2 \widehat{B}_m \lambda } \left( \lambda - \sqrt{ (\lambda - \lambda_+) (\lambda - \lambda_-)}  \right) \pm \frac{ \chi_A^2 - \chi_B^2}{2 \lambda}
\end{align}
We now attempt to find a complete solution for matrix $R$.
Define
\begin{align}
  M &=
      \begin{pmatrix}
        \lambda & -\sqrt{\lambda}\widehat{D}_m \\
        -\sqrt{\lambda}\widehat{D}_m & \lambda
      \end{pmatrix}
                       -\frac{\lambda \widehat{B}_m}{(\chi_A\chi_B)^2}
                       \begin{pmatrix}
                         S_{11} (\chi^2_A-1) & 0 \\
                         0 & S_{22} (\chi^2_B-1)
                       \end{pmatrix}
    &\equiv
      \begin{pmatrix}
        M_{11} & M_{12} \\
        M_{21} & M_{22}
      \end{pmatrix}
\end{align}
  Note that $M$ is symmetric, $M_{12} = M_{21}$.
  We can write \Eqref{eq:SD-R} as
  \begin{align}
    T^{-1}RT = \left( M-\frac{\lambda \widehat{E}_m}{(\chi_A\chi_B)^2}R \right)^{-1}
  \end{align}
  where
  \begin{align}
    T = T^{-1} =
    \begin{pmatrix}
      0 & 1 \\
      1 & 0
    \end{pmatrix}
  \end{align}
  is the $\chi_A\leftrightarrow \chi_B$ basis flip matrix. 
  Multiply both sides by $T$ from the left we obtain
  \begin{align}
    RT =\left( MT-\frac{\lambda \widehat{E}_m }{(\chi_A\chi_B)^2}RT  \right)^{-1}
  \end{align}
  What we have shown is that $RT$ and $MT$ are simultaneously diagonalizable.
  Our strategy is then straightforward: we simply go to the basis where both matrices are diagonalized.
  Doing so allows us to solve for the eigenvalues of $RT$ easily, then we transform back to the familiar basis and undo the effect of $T$.
  Start by diagonalizing $MT$:
    \begin{align}
    MT &=
    \begin{pmatrix}
      M_{12} & M_{11} \\
      M_{22} & M_{12}
    \end{pmatrix}
               = Q
               \begin{pmatrix}
                 d_+ & 0 \\
                 0 & d_-
               \end{pmatrix}
                     Q^{-1}
\nonumber \\
       &=
         \begin{pmatrix}
           \sqrt{M_{11}} & -\sqrt{M_{11}} \\
           \sqrt{M_{22}} & \sqrt{M_{22}}
           \end{pmatrix}
      \begin{pmatrix}
        M_{12}+\sqrt{M_{11}M_{22}} & 0 \\
        0 & M_{12}-\sqrt{M_{11}M_{22}}
      \end{pmatrix}
            \begin{pmatrix}
              \sqrt{M_{11}} & -\sqrt{M_{11}} \\
              \sqrt{M_{22}} & \sqrt{M_{22}}
            \end{pmatrix}^{-1}
  \end{align}
  We find that the eigenvalues of $MT$ are
  \begin{align}
    d_{\pm} =M_{12}\pm\sqrt{M_{11}M_{22}}
  \end{align}
  so we have
  \begin{align}
    R &= Q
    \begin{pmatrix}
        r_+ & 0 \\
        0 & r_-
      \end{pmatrix}
            Q^{-1}T \nonumber \\
    &=
      \frac{1}{2}
          \begin{pmatrix}
        \sqrt{\frac{M_{11}}{M_{22}}} (r_+-r_-) & r_++r_-\\
        r_++r_- & \sqrt{\frac{M_{22}}{M_{11}}} (r_+-r_-)
      \end{pmatrix} 
  \end{align}
  with $r_\pm$ given by
  \begin{align}
    r_{\pm} = \frac{(\chi_A\chi_B)^2}{2\lambda \widehat{E}_m }\left( d_\pm-\sqrt{d^2_\pm-\frac{4\lambda \widehat{E}_m}{(\chi_A\chi_B)^2}} \right)
  \end{align}
  In calculating the resolvent we only need the $R_{11}$ component. It is
  \begin{dmath}
    \label{eq:R11mess}
    R_{11}(\lambda) = 
    \sqrt{\frac{M_{11}}{M_{22}}}\frac{(\chi_A\chi_B)^2}{4\lambda \widehat{E}_m}
    \left( 2\sqrt{M_{11}M_{22}} +\sqrt{(M_{12}-\sqrt{M_{11}M_{22}})^2-\frac{4\lambda \widehat{E}_m}{(\chi_A\chi_B)^2}} - \sqrt{(M_{12}+\sqrt{M_{11}M_{22}})^2-\frac{4\lambda \widehat{E}_m}{(\chi_A\chi_B)^2}}\right)
  \end{dmath}
    The form of $R_{11}(\lambda)$ seems to predict two ``mini-MP'' peaks with width $\propto \sqrt{\widehat{E}_m}/(\chi_A\chi_B)$, centered at $\lambda_\star$ where the condition
  \begin{align}
    M_{12} = \pm \sqrt{M_{11}M_{22}}
  \end{align}
  is satisfied.
  However since $M_{12} \propto \sqrt{\lambda}$, from the branch cut of square root we see that solutions for these two equations lie immediately above and below real axis, respectively.
  In fact they represent the same peak, as we must have $R(\lambda+i\epsilon) = -R(\lambda-i\epsilon)$ for any resolvent $R(\lambda)$.
  Solving for $\lambda_\star$ gives 
  \begin{align}
    \begin{split}
      \lambda_\star &= \frac{(\widehat{B}_m (1-\chi_A^{-2})  \chi_B^{-2} + 
      \widehat{D}_m^2) (\widehat{B}_m(1-\chi_B^{-2})  \chi_A^{-2} + \widehat{D}_m^2 )}{\widehat{D}_m^2 } \\
    &= \widehat{D}^2_m + \left( \frac{1}{\chi^2_A} + \frac{1}{\chi^2_B}\right) \widehat{B}_m + O(\chi^{-4})
    \end{split}
  \end{align}
  We can calculate the width of this mini-MP peak via solving the relation
  \begin{equation}
    (M_{12}\pm\sqrt{M_{11}M_{22}})^2 = \frac{4\lambda\widehat{E}_m}{(\chi_A\chi_B)^2}
  \end{equation}
  We find
  \begin{align}
    \begin{split}
      \delta\lambda &= \frac{8\widehat{D}_m\sqrt{\widehat{E}_m}}{\chi_A\chi_B}\left( 1+\frac{\widehat{B}^2_m(\chi^2_A-1)(\chi^2_B-1)}{(4\widehat{E}_m-\widehat{D}^2_m\chi^2_A\chi^2_B)^2} \right)     \\
      &=\frac{8\widehat{D}_m\sqrt{\widehat{E}_m}}{\chi_A\chi_B} + O(\chi^{-4})
    \end{split}
  \end{align}
  Note that:
  \begin{equation}
    \lambda_\star-\lambda_{\pm}=   \left( \widehat{D}_m^2 \mp \widehat{B}_m \frac{(\chi_A^2-1)^{1/2}(\chi_B^2-1)^{1/2}}{(\chi_A \chi_B)^2}  \right)^2
  \end{equation}
  so the $\lambda_\star$ pole never reaches the main MP peak. It rather bounces when 
  \begin{equation}
    \widehat{D}_m^2 = \widehat{B}_m \frac{(\chi_A^2-1)^{1/2}(\chi_B^2-1)^{1/2}}{(\chi_A \chi_B)^2} 
  \end{equation}
  which is outside of the validity of our approximations. 

\bibliographystyle{jhep}
\bibliography{mybibliography}

\providecommand{\href}[2]{#2}\begingroup\raggedright\begin{thebibliography}{10}

\bibitem{tHooft:1993dmi}
G.~'t~Hooft, {\it {Dimensional reduction in quantum gravity}},  {\em Conf.
  Proc. C} {\bf 930308} (1993) 284--296,
  [\href{http://arxiv.org/abs/gr-qc/9310026}{{\tt gr-qc/9310026}}].

\bibitem{Susskind:1994vu}
L.~Susskind, {\it {The World as a hologram}},  {\em J. Math. Phys.} {\bf 36}
  (1995) 6377--6396, [\href{http://arxiv.org/abs/hep-th/9409089}{{\tt
  hep-th/9409089}}].

\bibitem{Ryu:2006bv}
S.~Ryu and T.~Takayanagi, {\it {Holographic derivation of entanglement entropy
  from AdS/CFT}},  {\em Phys. Rev. Lett.} {\bf 96} (2006) 181602,
  [\href{http://arxiv.org/abs/hep-th/0603001}{{\tt hep-th/0603001}}].

\bibitem{Ryu:2006ef}
S.~Ryu and T.~Takayanagi, {\it {Aspects of Holographic Entanglement Entropy}},
  {\em JHEP} {\bf 08} (2006) 045,
  [\href{http://arxiv.org/abs/hep-th/0605073}{{\tt hep-th/0605073}}].

\bibitem{Hubeny:2007xt}
V.~E. Hubeny, M.~Rangamani, and T.~Takayanagi, {\it {A Covariant holographic
  entanglement entropy proposal}},  {\em JHEP} {\bf 07} (2007) 062,
  [\href{http://arxiv.org/abs/0705.0016}{{\tt arXiv:0705.0016}}].

\bibitem{Engelhardt:2014gca}
N.~Engelhardt and A.~C. Wall, {\it {Quantum Extremal Surfaces: Holographic
  Entanglement Entropy beyond the Classical Regime}},  {\em JHEP} {\bf 01}
  (2015) 073, [\href{http://arxiv.org/abs/1408.3203}{{\tt arXiv:1408.3203}}].

\bibitem{Almheiri:2014lwa}
A.~Almheiri, X.~Dong, and D.~Harlow, {\it {Bulk Locality and Quantum Error
  Correction in AdS/CFT}},  {\em JHEP} {\bf 04} (2015) 163,
  [\href{http://arxiv.org/abs/1411.7041}{{\tt arXiv:1411.7041}}].

\bibitem{Jafferis:2015del}
D.~L. Jafferis, A.~Lewkowycz, J.~Maldacena, and S.~J. Suh, {\it {Relative
  entropy equals bulk relative entropy}},  {\em JHEP} {\bf 06} (2016) 004,
  [\href{http://arxiv.org/abs/1512.06431}{{\tt arXiv:1512.06431}}].

\bibitem{Dong:2016eik}
X.~Dong, D.~Harlow, and A.~C. Wall, {\it {Reconstruction of Bulk Operators
  within the Entanglement Wedge in Gauge-Gravity Duality}},  {\em Phys. Rev.
  Lett.} {\bf 117} (2016), no.~2 021601,
  [\href{http://arxiv.org/abs/1601.05416}{{\tt arXiv:1601.05416}}].

\bibitem{Hayden:2016cfa}
P.~Hayden, S.~Nezami, X.-L. Qi, N.~Thomas, M.~Walter, and Z.~Yang, {\it
  {Holographic duality from random tensor networks}},  {\em JHEP} {\bf 11}
  (2016) 009, [\href{http://arxiv.org/abs/1601.01694}{{\tt arXiv:1601.01694}}].

\bibitem{Akers:2018fow}
C.~Akers and P.~Rath, {\it {Holographic Renyi Entropy from Quantum Error
  Correction}},  {\em JHEP} {\bf 05} (2019) 052,
  [\href{http://arxiv.org/abs/1811.05171}{{\tt arXiv:1811.05171}}].

\bibitem{Dong:2018seb}
X.~Dong, D.~Harlow, and D.~Marolf, {\it {Flat entanglement spectra in
  fixed-area states of quantum gravity}},  {\em JHEP} {\bf 10} (2019) 240,
  [\href{http://arxiv.org/abs/1811.05382}{{\tt arXiv:1811.05382}}].

\bibitem{Dong:2019piw}
X.~Dong and D.~Marolf, {\it {One-loop universality of holographic codes}},
  {\em JHEP} {\bf 03} (2020) 191, [\href{http://arxiv.org/abs/1910.06329}{{\tt
  arXiv:1910.06329}}].

\bibitem{Dutta:2019gen}
S.~Dutta and T.~Faulkner, {\it {A canonical purification for the entanglement
  wedge cross-section}},  \href{http://arxiv.org/abs/1905.00577}{{\tt
  arXiv:1905.00577}}.

\bibitem{Maldacena:2001kr}
J.~M. Maldacena, {\it {Eternal black holes in anti-de Sitter}},  {\em JHEP}
  {\bf 04} (2003) 021, [\href{http://arxiv.org/abs/hep-th/0106112}{{\tt
  hep-th/0106112}}].

\bibitem{Hayden:2021gno}
P.~Hayden, O.~Parrikar, and J.~Sorce, {\it {The Markov gap for geometric
  reflected entropy}},  \href{http://arxiv.org/abs/2107.00009}{{\tt
  arXiv:2107.00009}}.

\bibitem{Takayanagi:2017knl}
T.~Takayanagi and K.~Umemoto, {\it {Entanglement of purification through
  holographic duality}},  {\em Nature Phys.} {\bf 14} (2018), no.~6 573--577,
  [\href{http://arxiv.org/abs/1708.09393}{{\tt arXiv:1708.09393}}].

\bibitem{Lewkowycz:2013nqa}
A.~Lewkowycz and J.~Maldacena, {\it {Generalized gravitational entropy}},  {\em
  JHEP} {\bf 08} (2013) 090, [\href{http://arxiv.org/abs/1304.4926}{{\tt
  arXiv:1304.4926}}].

\bibitem{Wall:2012uf}
A.~C. Wall, {\it {Maximin Surfaces, and the Strong Subadditivity of the
  Covariant Holographic Entanglement Entropy}},  {\em Class. Quant. Grav.} {\bf
  31} (2014), no.~22 225007, [\href{http://arxiv.org/abs/1211.3494}{{\tt
  arXiv:1211.3494}}].

\bibitem{Akers:2019lzs}
C.~Akers, N.~Engelhardt, G.~Penington, and M.~Usatyuk, {\it {Quantum Maximin
  Surfaces}},  \href{http://arxiv.org/abs/1912.02799}{{\tt arXiv:1912.02799}}.

\bibitem{Akers:2019gcv}
C.~Akers and P.~Rath, {\it {Entanglement Wedge Cross Sections Require
  Tripartite Entanglement}},  \href{http://arxiv.org/abs/1911.07852}{{\tt
  arXiv:1911.07852}}.

\bibitem{Cui:2018dyq}
S.~X. Cui, P.~Hayden, T.~He, M.~Headrick, B.~Stoica, and M.~Walter, {\it {Bit
  Threads and Holographic Monogamy}},  {\em Commun. Math. Phys.} {\bf 376}
  (2019), no.~1 609--648, [\href{http://arxiv.org/abs/1808.05234}{{\tt
  arXiv:1808.05234}}].

\bibitem{Page:1993df}
D.~N. Page, {\it {Average entropy of a subsystem}},  {\em Phys. Rev. Lett.}
  {\bf 71} (1993) 1291--1294, [\href{http://arxiv.org/abs/gr-qc/9305007}{{\tt
  gr-qc/9305007}}].

\bibitem{Akers:2020pmf}
C.~Akers and G.~Penington, {\it {Leading order corrections to the quantum
  extremal surface prescription}},  {\em JHEP} {\bf 04} (2021) 062,
  [\href{http://arxiv.org/abs/2008.03319}{{\tt arXiv:2008.03319}}].

\bibitem{Penington:2019kki}
G.~Penington, S.~H. Shenker, D.~Stanford, and Z.~Yang, {\it {Replica wormholes
  and the black hole interior}},  \href{http://arxiv.org/abs/1911.11977}{{\tt
  arXiv:1911.11977}}.

\bibitem{Shapourian:2020mkc}
H.~Shapourian, S.~Liu, J.~Kudler-Flam, and A.~Vishwanath, {\it {Entanglement
  negativity spectrum of random mixed states: A diagrammatic approach}},
  \href{http://arxiv.org/abs/2011.01277}{{\tt arXiv:2011.01277}}.

\bibitem{Marolf:2019zoo}
D.~Marolf, {\it {CFT sewing as the dual of AdS cut-and-paste}},  {\em JHEP}
  {\bf 02} (2020) 152, [\href{http://arxiv.org/abs/1909.09330}{{\tt
  arXiv:1909.09330}}].

\bibitem{Kusuki:2019evw}
Y.~Kusuki and K.~Tamaoka, {\it {Entanglement Wedge Cross Section from CFT:
  Dynamics of Local Operator Quench}},
  \href{http://arxiv.org/abs/1909.06790}{{\tt arXiv:1909.06790}}.

\bibitem{westcoast}
C.~Akers, T.~Faulkner, S.~Lin, and P.~Rath, {\it The page curve for reflected
  entropy}, .

\bibitem{harrow2013church}
A.~W. Harrow, {\it The church of the symmetric subspace},  {\em arXiv preprint
  arXiv:1308.6595} (2013).

\bibitem{Engelhardt:2017aux}
N.~Engelhardt and A.~C. Wall, {\it {Decoding the Apparent Horizon:
  Coarse-Grained Holographic Entropy}},  {\em Phys. Rev. Lett.} {\bf 121}
  (2018), no.~21 211301, [\href{http://arxiv.org/abs/1706.02038}{{\tt
  arXiv:1706.02038}}].

\bibitem{Engelhardt:2018kcs}
N.~Engelhardt and A.~C. Wall, {\it {Coarse Graining Holographic Black Holes}},
  {\em JHEP} {\bf 05} (2019) 160, [\href{http://arxiv.org/abs/1806.01281}{{\tt
  arXiv:1806.01281}}].

\bibitem{Dong:2021oad}
X.~Dong, S.~McBride, and W.~W. Weng, {\it {Replica Wormholes and Holographic
  Entanglement Negativity}},  \href{http://arxiv.org/abs/2110.11947}{{\tt
  arXiv:2110.11947}}.

\bibitem{Vardhan:2021npf}
S.~Vardhan, J.~Kudler-Flam, H.~Shapourian, and H.~Liu, {\it {Bound entanglement
  in thermalized states and black hole radiation}},
  \href{http://arxiv.org/abs/2110.02959}{{\tt arXiv:2110.02959}}.

\bibitem{Fischetti:2014zja}
S.~Fischetti and D.~Marolf, {\it {Complex Entangling Surfaces for AdS and
  Lifshitz Black Holes?}},  {\em Class. Quant. Grav.} {\bf 31} (2014), no.~21
  214005, [\href{http://arxiv.org/abs/1407.2900}{{\tt arXiv:1407.2900}}].

\bibitem{Dong:2020iod}
X.~Dong and H.~Wang, {\it {Enhanced corrections near holographic entanglement
  transitions: a chaotic case study}},  {\em JHEP} {\bf 11} (2020) 007,
  [\href{http://arxiv.org/abs/2006.10051}{{\tt arXiv:2006.10051}}].

\bibitem{Marolf:2020vsi}
D.~Marolf, S.~Wang, and Z.~Wang, {\it {Probing phase transitions of holographic
  entanglement entropy with fixed area states}},  {\em JHEP} {\bf 12} (2020)
  084, [\href{http://arxiv.org/abs/2006.10089}{{\tt arXiv:2006.10089}}].

\bibitem{Balasubramanian:2014hda}
V.~Balasubramanian, P.~Hayden, A.~Maloney, D.~Marolf, and S.~F. Ross, {\it
  {Multiboundary Wormholes and Holographic Entanglement}},  {\em Class. Quant.
  Grav.} {\bf 31} (2014) 185015, [\href{http://arxiv.org/abs/1406.2663}{{\tt
  arXiv:1406.2663}}].

\bibitem{Shenker:2013yza}
S.~H. Shenker and D.~Stanford, {\it {Multiple Shocks}},  {\em JHEP} {\bf 12}
  (2014) 046, [\href{http://arxiv.org/abs/1312.3296}{{\tt arXiv:1312.3296}}].

\bibitem{fannes1973continuity}
M.~Fannes, {\it A continuity property of the entropy density for spin lattice
  systems},  {\em Communications in Mathematical Physics} {\bf 31} (1973),
  no.~4 291--294.

\bibitem{audenaert2006sharp}
K.~M. Audenaert, {\it A sharp fannes-type inequality for the von neumann
  entropy},  {\em arXiv preprint quant-ph/0610146} (2006).

\bibitem{rastegin2011some}
A.~E. Rastegin, {\it Some general properties of unified entropies},  {\em
  Journal of Statistical Physics} {\bf 143} (2011), no.~6 1120--1135.

\bibitem{lubkin1978entropy}
E.~Lubkin, {\it Entropy of an n-system from its correlation with ak-reservoir},
   {\em Journal of Mathematical Physics} {\bf 19} (1978), no.~5 1028--1031.

\bibitem{liu2018entanglement}
Z.-W. Liu, S.~Lloyd, E.~Zhu, and H.~Zhu, {\it Entanglement, quantum randomness,
  and complexity beyond scrambling},  {\em Journal of High Energy Physics} {\bf
  2018} (2018), no.~7 1--62.

\bibitem{Jurkiewicz08}
J.~Jurkiewicz, G.~Lukaszewski, and M.~Nowak, {\it Diagrammatic approach to
  fluctuations in the wishart ensemble},  {\em Acta Physica Polonica B - ACTA
  PHYS POL B} {\bf 39} (04, 2008).

\bibitem{BREZIN1993613}
E.~Brézin and A.~Zee, {\it Universality of the correlations between
  eigenvalues of large random matrices},  {\em Nuclear Physics B} {\bf 402}
  (1993), no.~3 613--627.

\bibitem{BREZIN1995531}
E.~Brézin and A.~Zee, {\it Universal relation between green functions in
  random matrix theory},  {\em Nuclear Physics B} {\bf 453} (1995), no.~3
  531--551.

\bibitem{Kudler-Flam:2021efr}
J.~Kudler-Flam, V.~Narovlansky, and S.~Ryu, {\it {Negativity Spectra in Random
  Tensor Networks and Holography}},
  \href{http://arxiv.org/abs/2109.02649}{{\tt arXiv:2109.02649}}.

\bibitem{pagereflected}
C.~Akers, T.~Faulkner, S.~Lin, and P.~Rath, {\it {to appear}}, .

\bibitem{Harlow:2016vwg}
D.~Harlow, {\it {The Ryu\textendash{}Takayanagi Formula from Quantum Error
  Correction}},  {\em Commun. Math. Phys.} {\bf 354} (2017), no.~3 865--912,
  [\href{http://arxiv.org/abs/1607.03901}{{\tt arXiv:1607.03901}}].

\bibitem{Bao:2018pvs}
N.~Bao, G.~Penington, J.~Sorce, and A.~C. Wall, {\it {Beyond Toy Models:
  Distilling Tensor Networks in Full AdS/CFT}},  {\em JHEP} {\bf 11} (2019)
  069, [\href{http://arxiv.org/abs/1812.01171}{{\tt arXiv:1812.01171}}].

\bibitem{Qi:2017ohu}
X.-L. Qi, Z.~Yang, and Y.-Z. You, {\it {Holographic coherent states from random
  tensor networks}},  {\em JHEP} {\bf 08} (2017) 060,
  [\href{http://arxiv.org/abs/1703.06533}{{\tt arXiv:1703.06533}}].

\bibitem{Almheiri:2016blp}
A.~Almheiri, X.~Dong, and B.~Swingle, {\it {Linearity of Holographic
  Entanglement Entropy}},  {\em JHEP} {\bf 02} (2017) 074,
  [\href{http://arxiv.org/abs/1606.04537}{{\tt arXiv:1606.04537}}].

\bibitem{Faulkner:2020hzi}
T.~Faulkner, {\it {The holographic map as a conditional expectation}},
  \href{http://arxiv.org/abs/2008.04810}{{\tt arXiv:2008.04810}}.

\bibitem{Shaghoulian:2016xbx}
E.~Shaghoulian, {\it {Emergent gravity from Eguchi-Kawai reduction}},  {\em
  JHEP} {\bf 03} (2017) 011, [\href{http://arxiv.org/abs/1611.04189}{{\tt
  arXiv:1611.04189}}].

\bibitem{Donnelly:2016qqt}
W.~Donnelly, B.~Michel, D.~Marolf, and J.~Wien, {\it {Living on the Edge: A Toy
  Model for Holographic Reconstruction of Algebras with Centers}},  {\em JHEP}
  {\bf 04} (2017) 093, [\href{http://arxiv.org/abs/1611.05841}{{\tt
  arXiv:1611.05841}}].

\bibitem{Biane97}
P.~Biane, {\it Some properties of crossings and partitions},  {\em Discret.
  Math.} {\bf 175} (1997), no.~1-3 41--53.

\bibitem{birkhoff1967lattice}
G.~Birkhoff, {\em Lattice Theory}.
\newblock American Mathematical Society, Providence, 3rd~ed., 1967.

\bibitem{Kim12annular}
J.~S. Kim, S.~Seo, and H.~Shin, {\it Annular noncrossing permutations and
  minimal transitive factorizations},  {\em Journal of Combinatorial Theory
  Series A} {\bf 124} (01, 2012).

\end{thebibliography}\endgroup
 
\end{document}